\newtheorem{thm}{Theorem}[section]
\newtheorem{prop}[thm]{Proposition}
\newtheorem{cor}[thm]{Corollary}
\newtheorem{lem}[thm]{Lemma}
\newtheorem{defn}[thm]{Definition}
\newtheorem{rem}[thm]{Remark}
\newtheorem{ex}[thm]{Example}
\newtheorem{ques}[thm]{Question}
\numberwithin{equation}{section}
\def\F{{\mathbb F}}
\def\Q{{\mathbb Q}}
\def\Z{{\mathbb Z}}
\def\N{{\mathbb N}}
\def\R{{\mathbb R}}
\def\C{{\mathbb C}}
\renewcommand\P{{\mathbb P}}
\def\H{{\mathbb H}}
\def\bG{{\mathbb G}}
\def\P{{\mathbb P}}
\def\A{{\mathbb A}}
\def\GL{{\rm GL}}
\def\bL{{\mathbb L}}
\def\cC{{\mathcal C}}
\def\cD{{\mathcal D}}
\def\cH{{\mathcal H}}
\def\cL{{\mathcal L}}
\def\cM{{\mathcal M}}
\def\cO{{\mathcal O}}
\def\cP{{\mathcal P}}
\def\cR{{\mathcal R}}
\def\cS{{\mathcal S}}
\def\cT{{\mathcal T}}
\def\cU{{\mathcal U}}
\def\cV{{\mathcal V}}
\def\cX{{\mathcal X}}
\def\cY{{\mathcal Y}}
\def\cZ{{\mathcal Z}}
\def\bF{{\mathbb F}}
\def\bG{{\mathbb G}}
\def\bL{{\mathbb L}}
\DeclareMathOperator*{\Hom}{Hom}
\def\m{{\mathfrak m}}
\def\GL{{\rm GL}}
\def\PGL{{\rm PGL}}
\title[Rota--Baxter algebras, singular hypersurfaces, renormalization]{Rota--Baxter algebras,
singular hypersurfaces, and renormalization on Kausz compactifications}
\author{Matilde Marcolli and Xiang Ni}
\address{Department of Mathematics, Division of Mathematics, Physics and Astronomy,
California Institute of Technology, 1200 E. California Blvd. Pasadena, CA 91125, USA}
\email{matilde@caltech.edu}
\email{xni@caltech.edu}
\date{}
\begin{document}
\maketitle

\begin{abstract}
We consider Rota-Baxter algebras of meromorphic forms with poles along
a (singular) hypersurface in a smooth projective variety and the associated
Birkhoff factorization for algebra homomorphisms from a commutative Hopf
algebra. In the case of a normal crossings divisor,
the Rota-Baxter structure simplifies considerably and the factorization becomes
a simple pole subtraction.
We apply this formalism to the unrenormalized momentum space
Feynman amplitudes, viewed as (divergent) integrals in the complement of
the determinant hypersurface. We lift the integral to the Kausz compactification
of the general linear group, whose boundary divisor is normal crossings.
We show that the Kausz compactification is a Tate motive and that the boundary
divisor and the divisor that contains the boundary of the chain of integration are 
mixed Tate configurations. The regularization of the integrals that we obtain
differs from the usual renormalization of physical Feynman amplitudes,
and in particular it may give mixed Tate periods in some cases that have non-mixed Tate
contributions when computed with other renormalization methods.
\end{abstract}

\section{Introduction}

In this paper, we consider the problem of extracting periods of algebraic varieties
from a class of divergent integrals arising in quantum field theory. The method we
present here provides a regularization and extraction of finite values that differs
from the usual (renormalized) physical Feynman amplitudes, but whose mathematical
interest lies in the fact that it gives a period of a mixed Tate motive, for
all graphs for which the amplitude
can be computed using (global) forms with logarithmic poles. For more general
graphs, one also obtains a period, where the nature of the motive involved
depends on how a certain hyperplane arrangement intersects the big
cell in a compactification of the general linear group. More precisely, the motive considered here 
is provided by the Kausz compactification of the general linear group and by a
hyperplane arrangement that contains the boundary of the chain of integration. 
The regularization procedure we propose is modeled on the algebraic renormalization method,
based on Hopf algebras of graphs and Rota--Baxter algebras, as originally developed
by Connes and Kreimer \cite{CoKr} and by Ebrahmi-Fard, Guo, and Kreimer \cite{EFGK}.
The main difference in our approach is that we apply the formalism to a Rota--Baxter algebra of (even)
meromorphic differential forms instead of applying it to a regularization of the integral.
The procedure becomes especially simple in cases where the de Rham cohomology
of the singular hypersurface complement is all realized by forms with logarithmic poles,
in which case we replace the divergent integral with a family of convergent integrals
obtained by a pole subtraction on the form and by (iterated) Poincar\'e residues.
A similar approach was developed for integrals in configuration spaces
by Ceyhan and the first author \cite{CeMa}.

\smallskip

In Section \ref{RBmeromSec} we introduce Rota--Baxter algebras of even
meromorphic forms, along the lines of \cite{CeMa},
and we formulate a general setting for extraction of finite values (regularization
and renormalization) of divergent integrals modeled on algebraic renormalization
applied to these Rota--Baxter algebras of differential forms.

In Section \ref{RBlogSec} we discuss the Rota--Baxter algebras of even
meromorphic forms in the case of a smooth hypersurface $Y\subset X$.
We show that, when restricted to forms with logarithmic poles, the
Rota--Baxter operator becomes simply a derivation, and the Birkhoff
factorization collapses to a simple pole subtraction, as in the case of
log divergent graphs. We show that this simple pole subtraction can
lead to too much loss of information about the unrenormalized
integrand and we propose considering the additional information of
the Poincar\'e residue and an additional integral associated to the
residue.

In Section \ref{SingHypSec} we consider the case of singular
hypersurfaces $Y\subset X$ given by a simple normal crossings
divisor. We show that, in this case, the Rota--Baxter operator
satisfies a simplified form of the Rota--Baxter identity, which
however is not just a derivation. We show that this modified
identity still suffices to have a simple pole subtraction $\phi_+(\Gamma)=(1-T)\phi(\Gamma)$
in the Birkhoff factorization, even though the negative piece
$\phi_-(\Gamma)$ becomes more complicated. Again, to avoid too much loss of information
in passing from $\phi(\Gamma)$ to $\phi_+(\Gamma)$, we consider, in addition to
the renormalized integral $\int_\sigma \phi_+(\Gamma)$, the collection
of integrals of the form $\int_{\sigma\cap Y_I} {\rm Res}_{Y_I}(\phi(\Gamma))$,
where ${\rm Res}_{Y_I}$ is the iterated Poincar\'e residue, \cite{Del},
%(see also the exposition \cite{Ale}, \cite{Ale2}), 
along the intersection $Y_I=\cap_{j\in I}Y_j$
of components of $Y$. These integrals are all periods of mixed
Tate motives if $\{ Y_I \}$ is a mixed Tate configuration, in the sense of \cite{Gon}.
We discuss the question of further generalizations to more
general types of singularities, beyond the normal crossings case,
via Saito's theory of forms with logarithmic poles \cite{Sa}, by showing
that one can also define a Rota--Baxter structure on the Saito
complex of forms with logarithmic poles.

In Section \ref{KGLsec} we present our main application, which
is a regularization (different from the physical one) of the Feynman
amplitudes in momentum space, computed on the complement of
the determinant hypersurface as in \cite{AluMa}. Since the
determinant hypersurface has worse singularities than what we
need, we pull back the integral computation to the Kausz
compactification \cite{Kausz} of the general linear group, where
the boundary divisor that replaces the determinant hypersurface is
a simple normal crossings divisor. We show that the 
motive of the Kausz compactification is Tate, and that the
components of the boundary divisor form a mixed Tate configuration.
We discuss how one can replace the form $\eta_\Gamma$
of the Feynman amplitude with a form with logarithmic poles. In general, 
this form is defined on the big cell of the Kausz compactification.
For certain graphs, it is possible to show, using the mixed Hodge
structure, that the form with logarithmic poles extends globally
to the Kausz compactification, with poles along the boundary divisor.

\section{Rota--Baxter algebras of meromorphic forms}\label{RBmeromSec}

We generalize the algebraic renormalization formalism to a setting
based on Rota--Baxter algebras of algebraic differential forms on a
smooth projective variety with poles along a hypersurface.

\subsection{Rota--Baxter algebras}

A Rota--Baxter algebra of weight $\lambda$ is a unital commutative algebra
$\cR$ over a field $K$ such that $\lambda\in K$, together with a linear operator 
$T:\cR \to \cR$ satisfying the Rota--Baxter identity
\begin{equation}\label{RBlambda}
 T(x) T(y) = T(x T(y))+ T(T(x)y) +\lambda T(xy) .
\end{equation}

\smallskip

For example, Laurent polynomials $\cR=\C[z,z^{-1}]$ with $T$ the projection
onto the polar part are a Rota--Baxter algebra of weight $-1$.

\smallskip

The Rota--Baxter operator $T$ of a Rota--Baxter algebra of weight $-1$,
satisfying
\begin{equation}\label{RBmin1}
T(x) T(y) + T(xy) = T(x T(y))+ T(T(x)y) ,
\end{equation}
determines a splitting of $\cR$ into $\cR_+=(1-T)\cR$ and $T\cR$,
where $(1-T)\cR$ and $T\cR$ are not just vector spaces but algebras,
because of the Rota--Baxter relation \eqref{RBmin1}.
The algebra $T\cR$ is non-unital. In order to work with unital algebras,
one defines $\cR_-$ to be the unitization of $T\cR$, that is, $T\cR\oplus K$
with multiplication $(x,t)(y,s)=(xy+ty+sx, ts)$. 
For an introduction to Rota--Baxter algebras we refer the reader to \cite{Guo}.

\smallskip
\subsection{Rota--Baxter algebras of even meromorphic forms}

Let $Y$ be a hypersurface in a
projective variety $X$, with defining equation $Y=\{ f=0 \}$.
We denote by $\cM^\star_{X}$ the sheaf of meromorphic differential forms
on $X$, and by $\cM^\star_{X,Y}$ the subsheaf  of meromorphic forms
on with poles (of arbitrary order) along $Y$, that is, $\cM^\star_{X,Y}=j_*\Omega^1_U$,
where $U=X\smallsetminus Y$ and $j: U \hookrightarrow X$ is the inclusion.  
Passing to global sections of $\cM^\star_{X,Y}$ gives a graded-commutative
algebra over the field of definition of the varieties $X$ and $Y$, which, for simplicity, 
we will still denote by $\cM^\star_{X,Y}$.
We can write forms $\omega\in \cM^\star_{X,Y}$
as sums $\omega =\sum_{p\geq 0} \alpha_p / f^p$, where the $\alpha_p$
are holomorphic forms.

\smallskip

In particular, we consider forms of even degrees, so that $\cM^{\rm even}_{X,Y}$
is a commutative algebra under the wedge product. 

\begin{lem}\label{RBmerom}
The commutative algebra $\cM^{\rm even}_{X,Y}$, together with the linear operator
$T: \cM^{\rm even}_{X,Y} \to \cM^{\rm even}_{X,Y}$ defined as the polar part
\begin{equation}\label{Tmeromf}
T(\omega)= \sum_{p\geq 1} \alpha_p / f^p,
\end{equation}
is a Rota--Baxter algebra of weight $-1$.
\end{lem}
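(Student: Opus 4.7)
The plan is to verify the weight $-1$ Rota--Baxter identity \eqref{RBmin1} by a direct bookkeeping on the partial-fraction expansions. Since all forms in $\cM^{\rm even}_{X,Y}$ are of even total degree, the wedge product is strictly commutative, so this is a unital commutative algebra in the sense required by the definition of a Rota--Baxter algebra.

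The first (and essentially only) technical point is the well-definedness of $T$. The expansion $\omega = \sum_{p\geq 0} \alpha_p/f^p$ is not unique, but one can normalize it by requiring that for $p\geq 1$ the numerator $\alpha_p$ be not divisible by $f$; any excess factor of $f$ in some $\alpha_p$ is absorbed into $\alpha_{p-1}$. Equivalently, one views $T$ as the projection associated to the decomposition of $\cM^{\rm even}_{X,Y}$ into its holomorphic part and its strict principal part along $Y$. With this convention both $T(\omega)=\sum_{p\geq 1}\alpha_p/f^p$ and $(1-T)(\omega)=\alpha_0$ are well defined.

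For the main identity, I would write $\omega_1=\sum_p \alpha_p/f^p$ and $\omega_2=\sum_q \beta_q/f^q$ and view each of the four terms in \eqref{RBmin1} as a sum of monomials $\alpha_p\wedge\beta_q/f^{p+q}$ over an explicit index set $S\subseteq \Z_{\geq 0}\times\Z_{\geq 0}$. The supports are: $\{p\geq 1,\,q\geq 1\}$ for $T(\omega_1)\wedge T(\omega_2)$; $\{(p,q)\neq(0,0)\}$ for $T(\omega_1\wedge\omega_2)$; $\{p\geq 1\}$ for $T(T(\omega_1)\wedge\omega_2)$ (since $p\geq 1$ already forces $p+q\geq 1$); and $\{q\geq 1\}$ for $T(\omega_1\wedge T(\omega_2))$. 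The Rota--Baxter identity then reduces to the pointwise inclusion--exclusion identity $\mathbf{1}_{\{p\geq 1\}}+\mathbf{1}_{\{q\geq 1\}} = \mathbf{1}_{\{(p,q)\neq(0,0)\}}+\mathbf{1}_{\{p\geq 1,\,q\geq 1\}}$ on $\Z_{\geq 0}\times\Z_{\geq 0}$, which is immediate.

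The main ``obstacle'' is really just pinning down the polar projection $T$; the algebraic identity itself collapses to a one-line counting argument once the bookkeeping is set up, and the weight $-1$ (rather than some other weight) emerges from the fact that $\{p\geq 1\}$ and $\{q\geq 1\}$ overlap in exactly the square $\{p\geq 1,\,q\geq 1\}$, with no extra multiplicity.
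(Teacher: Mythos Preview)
Your proof is correct and takes essentially the same approach as the paper: both expand the four terms of \eqref{RBmin1} as double sums over $(p,q)\in\Z_{\geq 0}^2$ and verify the identity by comparing index sets, which you package neatly as the inclusion--exclusion identity on indicator functions. Your discussion of the well-definedness of the polar projection $T$ addresses a point the paper leaves implicit.
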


\proof For $\omega_1=\sum_{p\geq 0} \alpha_p / f^p$ and $\omega_2=\sum_{q\geq 0} \beta_q/f^q$, we have
$$ T(\omega_1\wedge \omega_2)=\sum_{p\geq 0, q\geq 1} \frac{\alpha_p \wedge \beta_q}{f^{p+q}}+\sum_{p\geq 1, q\geq 0}
\frac{\alpha_p \wedge \beta_q}{f^{p+q}}-\sum_{p\geq 1, q\geq 1} \frac{\alpha_p \wedge \beta_q}{f^{p+q}},$$
$$ T(T(\omega_1)\wedge \omega_2)=\sum_{p\geq 1, q\geq 0} \frac{\alpha_p \wedge \beta_q}{f^{p+q}}, $$
$$ T( \omega_1 \wedge T(\omega_2))= \sum_{p\geq 0, q\geq 1} \frac{\alpha_p \wedge \beta_q}{f^{p+q}}, $$
$$ T(\omega_1)\wedge T(\omega_2)=\sum_{p\geq 1, q\geq 1} \frac{\alpha_p \wedge \beta_q}{f^{p+q}}, $$
so that \eqref{RBmin1} is satisfied. 
\endproof

Note that the restriction to even form is introduced only in order to ensure that the resulting Rota--Baxter algebra is commutative, while \eqref{Tmeromf} satisfies \eqref{RBmin1} regardless of the restriction on degrees.

\begin{rem}\label{RBomegaY} {\rm 
Equivalently, we have the following description of the Rota--Baxter operator, which we
will use in the following. The linear operator
\begin{equation}\label{TomegaY}
T(\omega)= \alpha \wedge \xi,  \ \ \   \text{ for } \omega =\alpha \wedge \xi + \eta,
\end{equation}
acting on forms $ \omega =\alpha \wedge \xi + \eta$, with $\alpha$ a meromorphic
form on $X$ with poles on $Y$ and $\xi$ and $\eta$ holomorphic forms on $X$, is a
Rota--Baxter operator of weight $-1$. }
\end{rem}

The Rota--Baxter identity is equivalently seen then as follows.
For $\omega_i=\alpha_i\wedge \xi_i + \eta_i$, with $i=1,2$, we have
$$ T(\omega_1\wedge \omega_2)=(-1)^{|\alpha_2|\,|\xi_1|}
\alpha_1 \wedge \alpha_2 \wedge \xi_1 \wedge \xi_2 + \alpha_1 \wedge \xi_1 \wedge \eta_2
+(-1)^{|\eta_1|\, |\alpha_2|} \alpha_2 \wedge \eta_1 \wedge \xi_2 $$
while
$$ T(T(\omega_1)\wedge \omega_2)= (-1)^{|\alpha_2|\,|\xi_1|}
\alpha_1 \wedge \alpha_2 \wedge \xi_1 \wedge \xi_2 + \alpha_1 \wedge \xi_1 \wedge \eta_2 $$
$$ T(\omega_1 \wedge T(\omega_2))= (-1)^{|\alpha_2|\,|\xi_1|}
\alpha_1 \wedge \alpha_2 \wedge \xi_1 \wedge \xi_2
+(-1)^{|\eta_1|\, |\alpha_2|} \alpha_2 \wedge \eta_1 \wedge \xi_2 $$
and
$$ T(\omega_1) \wedge T(\omega_2)=(-1)^{|\alpha_2|\,|\xi_1|}
\alpha_1 \wedge \alpha_2 \wedge \xi_1 \wedge \xi_2, $$
where all signs are positive if the forms are of even degree.
Thus, the operator $T$ satisfies \eqref{RBmin1}.

\smallskip

The proof automatically extends to the following slightly more general setting.

\begin{lem}\label{prodRB}
Let $(X_\ell, Y_\ell)$ for $\ell\geq 1$ be a collection of smooth projective
varieties $X_\ell$ with hypersurfaces $Y_\ell$, all defined over the same field. 
Then the commutative algebra $\bigwedge_\ell \cM^{\rm even}_{X_\ell, Y_\ell}$
is a Rota--Baxter algebra of weight $-1$ with the polar projection operator $T$
determined by the $T_\ell$ on each $\cM^{\rm even}_{X_\ell, Y_\ell}$.
\end{lem}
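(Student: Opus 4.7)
The plan is to reduce the general statement to the single-factor case already handled in Lemma \ref{RBmerom} by exhibiting the polar projection on the product as a projection onto an ideal along a complementary subalgebra. First I would pin down what ``the polar projection operator $T$ determined by the $T_\ell$'' should mean: if $Y_\ell=\{f_\ell=0\}\subset X_\ell$, the natural combined hypersurface on $\prod_\ell X_\ell$ has defining equation $\prod_\ell f_\ell = 0$, and a form is polar there iff it is polar in at least one factor. Accordingly I would set
\[
\mathrm{id}-T \;=\; \bigotimes_\ell (\mathrm{id}-T_\ell),
\qquad\text{equivalently}\qquad
T \;=\; \mathrm{id} - \bigotimes_\ell(\mathrm{id}-T_\ell),
\]
so that on $\cR := \bigwedge_\ell \cM^{\rm even}_{X_\ell,Y_\ell}$ the operator $T$ projects off the purely holomorphic part.

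Next I would record the key structural fact inherited factor by factor from Lemma \ref{RBmerom}: each $T_\ell$ is an idempotent projection, its kernel $B_\ell=(\mathrm{id}-T_\ell)\cM^{\rm even}_{X_\ell,Y_\ell}$ is a subalgebra, and its image $A_\ell=T_\ell \cM^{\rm even}_{X_\ell,Y_\ell}$ is a (non-unital) ideal, since a polar form wedged with any form remains polar. Decomposing $\cR=\bigoplus_{S\subseteq\{1,\ldots,n\}}\bigotimes_{\ell\in S}A_\ell\otimes\bigotimes_{\ell\notin S}B_\ell$, the kernel of $T$ is the summand $S=\emptyset$ and its image is the sum over $S\neq\emptyset$. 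Using the factor-wise ideal property, the product of two summands indexed by $S$ and $S'$ lies in the summand indexed by $S\cup S'$; hence both the image and kernel of $T$ are subalgebras of $\cR$, and $T$ is itself idempotent.

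Given this splitting, the Rota--Baxter identity \eqref{RBmin1} of weight $-1$ follows by a standard one-line verification: writing $x=Tx+(\mathrm{id}-T)x$ and $y=Ty+(\mathrm{id}-T)y$ and expanding, both sides of \eqref{RBmin1} reduce to $2\,Tx\wedge Ty + T(Tx\wedge(\mathrm{id}-T)y) + T((\mathrm{id}-T)x\wedge Ty)$, using only that $T$ acts as the identity on the ideal part and as zero on the complementary subalgebra. Alternatively, one could mimic the explicit computation of Lemma \ref{RBmerom} by writing each $\omega_\ell=\alpha_\ell\wedge\xi_\ell+\eta_\ell$ as in Remark \ref{RBomegaY}, expanding the wedge product into $2^n$ terms indexed by subsets of $\{1,\ldots,n\}$, and observing that the identity factors across $\ell$.

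The main obstacle I anticipate is purely notational, not conceptual: bookkeeping the subsets $S, S'\subseteq\{1,\ldots,n\}$ of ``polar factors'' contributing to each of $T(\omega\wedge\omega')$, $T(T\omega\wedge\omega')$, $T(\omega\wedge T\omega')$, and $T\omega\wedge T\omega'$. Organizing the argument around the ideal/subalgebra splitting described above bypasses this bookkeeping entirely, making precise the sense in which, as the author remarks, the proof ``automatically extends''. The restriction to even degree forms is needed only to keep $\cR$ commutative; all graded-commutativity signs are trivially $+1$.
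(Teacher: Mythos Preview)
Your proposal is correct. The paper offers no detailed proof here: it simply remarks that the argument of Lemma~\ref{RBmerom} ``automatically extends'' and points to Theorem~6.4 of \cite{CeMa} for a similar setting. Your alternative approach---expanding over subsets of polar factors as in Remark~\ref{RBomegaY}---is exactly what the paper intends. Your primary structural argument (image of $T$ is an ideal, kernel a subalgebra, hence the weight~$-1$ identity holds) is a cleaner route than the direct expansion and makes precise why the extension is ``automatic''; it isolates the general fact that an idempotent projection onto an ideal along a complementary subalgebra is always a Rota--Baxter operator of weight~$-1$, and that these properties pass to the tensor construction. One minor notational point: your decomposition over $S\subseteq\{1,\ldots,n\}$ assumes a finite index set, whereas the lemma allows countably many $\ell$; this is harmless since every element of the wedge algebra involves only finitely many factors.
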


A similar setting was considered in Theorem 6.4 of \cite{CeMa}.

\subsection{Renormalization via Rota--Baxter algebras}\label{renRBsec}

In \cite{CoKr}, the BPHZ renormalization procedure of perturbative
quantum field theory was reinterpreted as a Birkhoff factorization of
loops in the pro-unipotent group of characters of a commutative
Hopf algebra of Feynman graphs. This procedure of {\em algebraic
renormalization} was reformulated in more general and abstract
terms in \cite{EFGK}, using Hopf algebras and Rota--Baxter algebras.

\smallskip

We summarize here quickly the basic setup of algebraic renormalization.
We refer the reader to \cite{CoKr}, \cite{CoMa}, \cite{EFGK}, \cite{Mar} for
more details.

\smallskip

The Connes--Kreimer Hopf algebra of Feynman graphs $\cH$ is
a commutative, non-cocommutative, graded, connected Hopf 
algebra over $\Q$ associated to a given Quantum Field Theory (QFT).
A theory is specified by assigning a Lagrangian and the corresponding action functional, 
which in turn determines which graphs occur as Feynman graphs of the theory. For
instance, the only allowed valences of vertices in a Feynman graph are the powers
of the monomials in the fields that appear in the Lagrangian. The generators
of the Connes--Kreimer Hopf algebra of a given QFT are the 1PI Feynman
graphs $\Gamma$ of the theory, namely those Feynman graphs that are 2-egde
connected. As a commutative algebra,  
$\cH$ is then just a polynomial algebra in the 1PI graphs $\Gamma$. A grading
on $\cH$ is given by the loop number (first Betti number) of graphs.
In the case where Feynman graphs also
have vertices of valence $2$, one uses the number of internal edges instead of
loop number, to have finite dimensional graded pieces, but we ignore this subtlety
for the present purposes. The grading satisfies 
$$ \deg(\Gamma_1\cdots\Gamma_n)=\sum_i \deg(\Gamma_i),
\ \ \ \deg(1)=0. $$
The connectedness property means that the degree zero part is just $\Q$. 
The coproduct in $\cH$ is given by 
\begin{equation}\label{CKcoprod}
 \Delta(\Gamma)=\Gamma\otimes 1 + 1 \otimes \Gamma +
\sum_{\gamma \in \cV(\Gamma)}  \gamma \otimes \Gamma/\gamma,
\end{equation}
where the class $\cV(\Gamma)$ consists of all (not necessarily connected)
divergent subgraphs $\gamma$ such that the quotient graph (identifying each
component of $\gamma$ to a vertex) is still a 1PI  Feynman graph of the
theory. As in any graded connected Hopf algebra, the antipode is constructed inductively as
$$ S(\Gamma)=-\Gamma -\sum S(\Gamma') \Gamma'' $$
for $\Delta(\Gamma)=\Gamma \otimes 1 + 1 \otimes \Gamma + \sum \Gamma' \otimes \Gamma''$,
with the terms $\Gamma'$, $\Gamma''$ of lower degrees.

\smallskip

\begin{rem}\label{notationGamma} {\rm 
The general element in the Hopf algebra $\cH$ is not a graph $\Gamma$ but a 
polynomial function $P=\sum a_{i_1,\ldots,i_k} \Gamma_{i_1}^{n_{i_1}}\cdots \Gamma_{i_k}^{n_{i_k}}$ 
with $\Q$ coefficients in the generators given by the graphs. However, for simplicity of notation, in the 
following we will just write $\Gamma$ to denote an arbitrary element of $\cH$. }
\end{rem}

\smallskip

An {\em algebraic Feynman rule} $\phi: \cH \to \cR$ is a {\em homomorphism of
commutative algebras} from the Hopf algebra $\cH$ of Feynman graphs
to a Rota--Baxter algebra $\cR$ of weight $-1$,
$$ \phi \in {\rm Hom}_{\rm Alg}(\cH,\cR). $$
The set ${\rm Hom}_{\rm Alg}(\cH,\cR)$ has a group structure, where the multiplication $\star$ 
is dual to the coproduct in the Hopf algebra, 
$\phi_1\star \phi_2(\Gamma) = \langle \phi_1\otimes \phi_2, \Delta(\Gamma)\rangle$.

\smallskip

Algebra homomorphisms $\phi:\cH\rightarrow \cR$ between a Hopf algebra 
$\cH$ and a Rota--Baxter algebra $\cR$ are also often referred to as ``characters" in the
renormalization literature.

\smallskip

The morphism $\phi$ by itself does not know about the coalgebra structure
of $\cH$ and the Rota--Baxter structure of $\cR$. These enter in the factorization
of $\phi$ into divergent and finite part.

\smallskip

A Birkhoff factorization of an algebraic Feynman rule consists of
a pair of commutative algebra homomorphisms
$$ \phi_\pm \in {\rm Hom}_{\rm Alg}(\cH,\cR_\pm) $$
where $\cR_\pm$ is the splitting of $\cR$ induced by the Rota--Baxter
operator $T$, with $\cR_+=(1-T)\cR$ and $\cR_-$ the unitization of $T\cR$,
satisfying
$$ \phi = (\phi_-\circ S)\star \phi_+, $$
with the product $\star$ dual to the coproduct $\Delta$ as above. 
The Birkhoff factorization is unique if one also imposes the normalization
condition $\epsilon_-\circ \phi_-=\epsilon$, where $\epsilon$ is the counit
of $\cH$ and $\epsilon_-$ is the augmentation in the algebra $\cR_-$.

\smallskip

As shown in Theorem 4 of \cite{CoKr} (see equations (32) and (33) therein), there is 
an inductive formula for the Birkhoff factorization of an algebraic Feynman rule, of the form
\begin{equation}\label{recBirkhoff}
 \phi_-(\Gamma)=-T (\phi(\Gamma) +\sum \phi_-(\Gamma') \phi(\Gamma''))  \ \ \text{ and } \ \
  \phi_+(\Gamma)=(1-T)(\phi(\Gamma) +\sum \phi_-(\Gamma') \phi(\Gamma''))
\end{equation}
where $\Delta(\Gamma)=1\otimes \Gamma + \Gamma \otimes 1 + \sum \Gamma'\otimes \Gamma''$.

\smallskip

The Birkhoff factorization \eqref{recBirkhoff} of algebra homomorphisms 
$\phi \in {\rm Hom}_{\rm Alg}(\cH,\cR)$ is often referred to as ``algebraic
Birkhoff factorization", to distinguish it from the (analytic) Birkhoff factorization
formulated in terms of loops (or infinitesimal loops) with values in Lie
groups. We refer the reader to \S 6.4 of Chapter 1 of \cite{CoMa} for
a discussion of the relation between these two kinds of Birkhoff factorization.

\smallskip

In the original Connes--Kreimer formulation, this approach is
applied to the unrenormalized Feynman amplitudes regularized
by dimensional regularization, with the Rota--Baxter algebra
consisting of germs of meromorphic functions at the origin,
with the operator of projection onto the polar part of the
Laurent series.

\smallskip

In the following, we consider the following variant on the Hopf algebra of
Feynman graphs.

\begin{defn}\label{Heven}
As an algebra, $\cH_{\rm even}$ is
the commutative algebra generated by Feynman graphs
of a given scalar quantum field theory that have an even
number of internal edges, $\# E(\Gamma)\in 2\N$. The
coproduct \eqref{CKcoprod} on $\cH_{\rm even}$ is similarly
defined with the sum over divergent subgraphs $\gamma$ with
even $\# E(\gamma)$, with 1PI quotient.
\end{defn}

\smallskip

Notice that in dimension $D\in 4\N$ all the log divergent subgraphs
$\gamma \subset \Gamma$ have an even number of edges,
since $D b_1(\gamma) = 2 \#E(\gamma)$ in this case. This is a class
of graphs that are especially interesting in physical applications. 

\smallskip

\begin{ques}\label{grquestion}
Is there a graded-commutative version of Birkhoff factorization
involving graded-commutative Rota--Baxter and Hopf
algebras?
\end{ques}

Such an extension to the graded-commutative case would be
necessary to include the more general case of differential forms
of odd degree (associated to Feynman graphs with an odd
number of internal edges).

\medskip

One can approach the question above by using the general setting of \cite{EFGK1}:
\begin{enumerate}
\item
Let $\cH$ be any connected filtered cograded Hopf algebra and let $\cR$ be a (not necessarily commutative) associative algebra equipped with a Rota-Baxter operator of weight $\lambda\neq0$. The algebraic Birkhoff factorization of any $\phi\in{\rm Hom}(\cH,\cR)$ was obtained by Ebrahimi-Fard, Guo and Kreimer in \cite{EFGK1}.
\item
However, if the target algebra $\cR$ is not commutative, the set of characters $\Hom(\cH,\cR)$ is not a group since it is not closed under convolution product, i.e. if $f,g\in \Hom(\cH,\cR)$, then $f\star g$ does not necessarily belong  to $\Hom(\cH,\cR)$.
\end{enumerate}

The usual proof (see Theorem 4 of \cite{CoKr} and Theorem 1.39 in Chapter 1 of \cite{CoMa}) 
of the fact that the two parts $\phi_{\pm}$ of the Birkhoff factorization are algebra homomorphisms
uses explicitly both the commutativity of the target Rota--Baxter algebra $\cR$ and the fact that
${\rm Hom}_{\rm Alg}(\cH,\cR)$ is a group, and does not extend directly to the 
graded-commutative case. The argument given in Theorems 3.4 and 3.7 of \cite{EFGK1}
provides a more general form of Birkhoff factorization that applies to a graded-commutative (and
more generally non-commutative) Rota--Baxter algebra. The resulting form of the 
factorization is more complicated than in the commutative case, in general. However,
if the Rota--Baxter operator of weight $-1$ also satisfies $T^2=T$ and  
$T(T(x)y)=T(x)y$ for all $x,y\in \cR$, then the form of the Birkhoff factorization for 
not necessarily commutative Rota--Baxter algebras simplifies considerably, 
and the $\phi_+$ part of the factorization consists of a simple pole subtraction, as we
prove in Proposition \ref{phiproperties} below. 

\subsection{Rota--Baxter algebras and Atkinson factorization}\label{Atk:Sec}

In the following we will discuss some interesting properties of algebraic Birkhoff decomposition when the Rota-Baxter operator satisfies the identity
$T(T(x)y)=T(x)y$.

Let $e:\cH\rightarrow \cR$ be the unit of ${\rm Hom}(\cH,\cR)$ (under the convolution product) defined by $e({\rm 1}_\cH)={\rm 1}_\cR$ and $e(\Gamma)=0$ on $\oplus_{n>0}\cH_n$.

The main observation can be summarized as follows:
\begin{enumerate}
\item
If the Rota-Baxter operator $T$ on $\cR$ also satisfies the identity $T(T(x)y)=T(x)y$, then on ${\rm ker}(e)=\oplus_{n>0}\cH_n$,
the negative part of the Birkhoff factorization $\phi_{-}$ takes the following form:
$$\phi_{-}=-T(\phi(\Gamma))-\sum T(\phi(\Gamma'))\phi(\Gamma''),\quad {\rm for}\; \Delta(\Gamma)=1\otimes \Gamma+\Gamma\otimes 1+\sum \Gamma'\otimes \Gamma''.$$
\item If $T$ also satisfies
$T(xT(y))=xT(y)$, $\forall x,y\in \cR$,
then the positive part is given by $\phi_{+}(\Gamma)=(1-T)(\phi(\Gamma))$, $\forall \Gamma\in {\rm ker}(e)=\oplus_{n>0}\cH_n$.
\end{enumerate}

This follows from the properties of the Atkinson Factorization in Rota--Baxter algebras, which we recall below.

\begin{prop} {\rm (Atkinson Factorization, \cite{Atkin}, see also \cite{Guo1})}  Let $(\cR,T)$ be a Rota-Baxter algebra of weight $\lambda\neq0$. Let $\tilde{T}=-\lambda{\rm id}-T$ and let $a\in \cR$. Assume that $b_l$ and $b_r$ are solutions of the fixed point equations
\begin{equation}\label{fixed}
b_l=1+T(b_la),\quad b_r=1+\tilde{T}(ab_r).
\end{equation}
Then
$$b_l(1+\lambda a)b_r=1.$$
Thus
\begin{equation}\label{factor}
1+\lambda a= b_l^{-1}b_r^{-1}
\end{equation}
if $b_l$ and $b_r$ are invertible.
\end{prop}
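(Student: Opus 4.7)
The strategy is to reduce the identity $b_l(1+\lambda a)b_r = 1$ to a single algebraic relation that follows cleanly from the Rota--Baxter identity \eqref{RBlambda}. The plan is to introduce the shorthand $u := b_l a$ and $v := a b_r$, so that the fixed-point equations \eqref{fixed} become $b_l = 1 + T(u)$ and $b_r = 1 + \tilde{T}(v) = 1 - \lambda v - T(v)$, where the second equality uses $\tilde{T} = -\lambda\,{\rm id} - T$. Rearranging yields the useful identity $(1+\lambda a)b_r = b_r + \lambda v = 1 - T(v)$.

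With these in hand, I would expand the product
$$ b_l(1+\lambda a)b_r = (1+T(u))(1 - T(v)) = 1 + T(u) - T(v) - T(u)T(v), $$
so the claim reduces to establishing the single identity $T(u)T(v) = T(u) - T(v)$.

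This is precisely what the Rota--Baxter axiom produces. By \eqref{RBlambda},
$$ T(u)T(v) = T(uT(v)) + T(T(u)v) + \lambda T(uv). $$
Substituting $u = b_l a$, $v = ab_r$, $T(u) = b_l - 1$, and $T(v) = 1 - (1+\lambda a)b_r$, and expanding each of the three summands by linearity of $T$, one finds that the non-elementary terms $\pm T(b_l a b_r)$ and $\pm\lambda T(b_l a^2 b_r)$ cancel in pairs across the three summands. What survives is $(b_l - 1) + (1+\lambda a)b_r - 1$, which is exactly $T(u) - T(v)$. This closes the argument, and invertibility of $b_l$ and $b_r$ then yields the factorization \eqref{factor}.

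The main obstacle is the bookkeeping in this last step: one has to verify that every application of $T$ to a non-elementary argument appears with opposite signs across the three summands of the Rota--Baxter identity, and that the surviving elementary terms reassemble as $T(u) - T(v)$. This pairwise cancellation is the entire content of the Rota--Baxter axiom of weight $\lambda$ in this context, and it explains why the specific operator $\tilde{T} = -\lambda\,{\rm id} - T$ (as opposed to $-T$ or any other variant) is what makes the factorization go through.
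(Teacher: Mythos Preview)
Your proof is correct. The paper does not supply its own proof of this proposition; it is stated as a cited result from \cite{Atkin} and \cite{Guo1}. Your argument is a clean direct verification: the substitutions $u=b_la$, $v=ab_r$ reduce the claim to the identity $T(u)T(v)=T(u)-T(v)$, and the Rota--Baxter relation together with $T(u)=b_l-1$ and $T(v)=1-(1+\lambda a)b_r$ yields exactly this after the pairwise cancellation of the $T(b_lab_r)$ and $\lambda T(b_la^2b_r)$ terms that you describe. The only remark is that, since the algebra $\cR$ is not assumed commutative here, one should be slightly careful about left versus right multiplications in steps such as $(1+\lambda a)b_r=b_r+\lambda v$; you have in fact kept the order consistent throughout, so there is no gap.
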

A Rota-Baxter algebra $(\cR,T)$ is called complete if there are algebras $\cR_n\subseteq \cR, n\geq 0$, such that $(\cR,\cR_n)$ is a complete algebra and $T(\cR_n)\subseteq \cR_n$.

\begin{prop} {\rm (Existence and uniqueness of the Atkinson Factorization, \cite{Guo1})} \label{unique}
Let $(\cR,T,\cR_n)$ be a complete Rota-Baxter algebra of weight $\lambda\neq0$. Let $\tilde{T}=-\lambda{\rm id}-T$ and let $a\in \cR_1$.
\begin{enumerate}
\item
Equations (\ref{fixed}) have unique solutions $b_l$ and $b_r$. Further $b_l$ and $b_r$ are invertible. Hence the Atkinson Factorization (\ref{factor}) exists.
\item
If $\lambda\neq 0$ and $T^2=-\lambda T$ (in particular if $T^2=-\lambda T$ on $\cR$), then there are unique $c_l\in 1+T(\cR)$ and $c_r\in 1+\tilde{T}(\cR)$ such that $$1+\lambda a=c_lc_r.$$
\end{enumerate}
\end{prop}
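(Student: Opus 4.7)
The plan is to establish (1) by constructing $b_l$ and $b_r$ as limits of Picard-type iterations in the complete filtered algebra $(\cR,\cR_n)$, and then to derive (2) from the additional structure forced by the identity $T^2=-\lambda T$, under which the subsets $1+T(\cR)$ and $1+\tilde T(\cR)$ become subgroups of the units whose intersection is trivial.

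For part (1), I would set $b_l^{(0)}:=1$ and $b_l^{(n+1)}:=1+T(b_l^{(n)} a)$. Since $a\in\cR_1$, $T(\cR_k)\subseteq\cR_k$, and $\cR_m\cdot\cR_n\subseteq\cR_{m+n}$, a straightforward induction gives $b_l^{(n+1)}-b_l^{(n)}=T((b_l^{(n)}-b_l^{(n-1)})a)\in\cR_{n+1}$, so the sequence is Cauchy and converges by completeness to a solution $b_l$ of the first equation in \eqref{fixed}. The same contraction estimate shows any two solutions differ by an element of $\bigcap_n\cR_n=(0)$, yielding uniqueness. The analogous construction with $\tilde T$ in place of $T$ (noting $\tilde T(\cR_1)\subseteq\cR_1$ since $\tilde T=-\lambda\mathrm{id}-T$) produces $b_r$. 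Both lie in $1+\cR_1$, hence are invertible by the geometric series $(1+x)^{-1}=\sum_{k\geq 0}(-x)^k$, which converges in $\cR$. The factorization \eqref{factor} is then immediate from the Atkinson identity $b_l(1+\lambda a)b_r=1$ supplied by the previous proposition.

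For part (2), I take $c_l:=b_l^{-1}$ and $c_r:=b_r^{-1}$. The Rota--Baxter relation $T(x)T(y)=T(xT(y)+T(x)y+\lambda xy)$ shows $T(\cR)$ is closed under multiplication, and when $T^2=-\lambda T$ an induction shows $T(x)^k\in T(\cR)$ for every $k\geq 1$, so $1+T(\cR_1)$ is closed under inversion via the geometric series. Since $b_l=1+T(b_la)$ with $b_la\in\cR_1$, we get $c_l\in 1+T(\cR)$; the parallel argument, using that $\tilde T$ itself satisfies $\tilde T^2=-\lambda\tilde T$, gives $c_r\in 1+\tilde T(\cR)$. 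For uniqueness, suppose $c_lc_r=c_l'c_r'$; then $(c_l')^{-1}c_l=c_r'c_r^{-1}$ with the left-hand side in $1+T(\cR)$ and the right-hand side in $1+\tilde T(\cR)$, so it suffices to show the intersection of these two sets is $\{1\}$. If $1+T(x)=1+\tilde T(y)=1-\lambda y-T(y)$, then $T(x+y)=-\lambda y$; applying $T$ and using $T^2=-\lambda T$ gives $-\lambda T(x+y)=-\lambda T(y)$, so $T(x)=0$ since $\lambda\neq 0$, and both sides equal $1$.

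The main obstacle is the closure of $1+T(\cR_1)$ under inversion, which is precisely where the hypothesis $T^2=-\lambda T$ is essential: without it the geometric-series expansion of $(1+T(x))^{-1}$ involves arbitrary powers $T(x)^k$ which need not remain inside $T(\cR)$, and the canonical splitting of the Atkinson factorization into the two prescribed sides breaks down.
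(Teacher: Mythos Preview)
The paper does not give its own proof of this proposition; it is simply quoted from \cite{Guo1} as a known result, so there is nothing to compare against. Your argument---Picard iteration in the complete filtration for part (1), then showing that $1+T(\cR)$ and $1+\tilde T(\cR)$ behave as subgroups with trivial intersection for part (2)---is the standard one and is correct.

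Two small refinements are worth making. First, the role of $T^2=-\lambda T$ in the inversion step is not that it forces $T(x)^k\in T(\cR)$ (the Rota--Baxter identity alone already gives $T(\cR)\cdot T(\cR)\subseteq T(\cR)$), but rather that it identifies $T(\cR)$ with the \emph{closed} subspace $\ker(T+\lambda\,\mathrm{id})$, so that the convergent geometric series $\sum_{k\ge1}(-T(x))^k$ has its limit in $T(\cR)$ and not merely its partial sums. Second, in your uniqueness step you form $(c_l')^{-1}$, which presupposes that an arbitrary element of $1+T(\cR)$ is invertible; this is not automatic in a noncommutative ring. The clean fix is to observe that with $T^2=-\lambda T$ one has $T(\cR)\cap\cR_1=T(\cR_1)$ (since $y\in T(\cR)$ implies $y=-\lambda^{-1}T(y)$), so any factor $c_l'\in 1+T(\cR)$ of $1+\lambda a$ with $a\in\cR_1$ actually lies in $1+T(\cR_1)\subset 1+\cR_1$, where your geometric-series inversion applies.
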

Define
$$(Ta)^{[n+1]}:=T((Ta)^{[n]}a)\quad {\rm and}\quad (Ta)^{\{n+1\}}=T(a(Ta)^{\{n\}})$$
with the convention that $(Ta)^{[1]}=T(a)=(Ta)^{\{1\}}$ and $(Ta)^{[0]}=1=(Ta)^{\{0\}}$.

\begin{prop}\label{simplified}
Let $(\cR,\cR_n,T)$ be a complete filtered Rota-Baxter algebra of weight $-1$ such that
$T^2=T$. Let $a\in \cR_1$.
If $T$ also satisfies the following identity
\begin{equation}
T(T(x)y)=T(x)y,\quad \forall x,y\in \cR,\label{Tproperty}
\end{equation}
then the equation
\begin{equation}
b_l=1+T(b_la).\label{fixed1}
\end{equation}
has a unique solution
$$1+T(a)(1-a)^{-1}.$$\label{item1}
\end{prop}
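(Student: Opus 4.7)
The plan is to note that Proposition \ref{unique}(1) already guarantees existence and uniqueness of a solution $b_l$ to the fixed-point equation \eqref{fixed1}, so the task reduces to direct verification that the proposed closed form $b_l = 1+T(a)(1-a)^{-1}$ satisfies \eqref{fixed1}. Since $a\in\cR_1$ and $(\cR,\cR_n,T)$ is complete filtered, the series $(1-a)^{-1}=\sum_{n\geq 0}a^n$ converges in $\cR$, so the proposed formula for $b_l$ is a well-defined element of $\cR$.

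I would compute $T(b_l a)$ in two moves. First, by linearity of $T$,
$$T(b_l a) \;=\; T(a) \;+\; T\!\bigl(T(a)(1-a)^{-1}a\bigr).$$
Second, the hypothesis $T(T(x)y)=T(x)y$ applied to $x=a$ and $y=(1-a)^{-1}a$ strips the outer $T$ from the second term, giving
$$T(b_l a) \;=\; T(a) \;+\; T(a)(1-a)^{-1}a.$$
Factoring $T(a)$ on the left and using the telescoping identity $1+(1-a)^{-1}a=(1-a)^{-1}$ (which holds because $(1-a)^{-1}=\sum_{n\geq 0}a^n$ and $(1-a)^{-1}a=\sum_{n\geq 1}a^n$), one arrives at $T(b_l a)=T(a)(1-a)^{-1}$, whence $1+T(b_l a)=b_l$ as required.

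There is no serious obstacle in the argument: the identity $T(T(x)y)=T(x)y$ is tailor-made to remove the outer $T$ at exactly the needed step, and what remains is a formal power-series manipulation that does not require commutativity of $\cR$, since one only has to rearrange products of $a$ and $(1-a)^{-1}$, which lie in the abelian subalgebra generated by $a$. The hypothesis $T^2=T$ is in fact the $y=1$ specialization of \eqref{Tproperty}, so it is automatic and needs no separate invocation. Uniqueness of the solution $b_l$ then follows directly from Proposition \ref{unique}(1), completing the proof.
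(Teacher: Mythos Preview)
Your proof is correct. Both you and the paper rely on the identity $T(T(x)y)=T(x)y$ at the same crucial point, but the overall strategies differ. The paper works forward from the known series solution $b_l=\sum_{n\geq 0}(Ta)^{[n]}$ (where $(Ta)^{[n+1]}=T((Ta)^{[n]}a)$ and $(Ta)^{[0]}=1$), proves by induction that $(Ta)^{[n+1]}=T(a)a^n$, and then sums the resulting geometric series. You instead invoke Proposition~\ref{unique}(1) for existence and uniqueness and then verify directly that the proposed closed form satisfies the fixed-point equation. Your route is marginally shorter and avoids introducing the iterated operators $(Ta)^{[n]}$; the paper's route has the slight advantage of explaining \emph{where} the formula comes from rather than pulling it out of a hat. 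Your remark that $T^2=T$ is the $y=1$ case of \eqref{Tproperty}, and hence redundant, is also correct and worth noting.
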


\begin{proof}
First, we have $(Ta)^{[n+1]}=T(a)a^n$ for $n\geq0$. In fact, the case when $n=0$ just follows from the definition. Suppose it is true up to $n$, then $(Ta)^{[n+2]}=T((Ta)^{[n+1]}a)=T((T(a)a^n)a)=T(T(a)a^{n+1})=T(a)a^{n+1}$. Arguing as in \cite{EFGK1}, $b_l=\sum_{n=0}^{\infty}(Ta)^{[n]}=1+T(a)+T(T(a)a)+\cdot\cdot\cdot+(Ta)^{[n]}+\cdot\cdot\cdot$ is the unique solution of \eqref{fixed1}. So
\begin{eqnarray*}
b_l&=&1+T(a)+T(a)a+T(a)a^2+\cdot\cdot\cdot\\
&=&1+T(a)(1+a+a^2+\cdot\cdot\cdot)\\
&=&1+T(a)(1-a)^{-1}.
\end{eqnarray*}
\end{proof}

A bialgebra $\cH$ over a field $K$ is called a connected, filtered cograded bialgebra if there are subspaces $\cH_n$ of $\cH$ such that $(a)$ $\cH_p\cH_q\subseteq\sum_{k\leq p+q}\cH_k$; $(b)$ $\Delta(\cH_n)\subseteq\oplus_{p+q=n}\cH_p\otimes \cH_q$; $(c)$ $\cH_0={\rm im} (u)=K$, where $u:K
\rightarrow \cH$ is the unit of $\cH$.

\begin{prop}\label{phiproperties}
Let $\cH$ be a connected filtered cograded bialgebra (hence a Hopf algebra) and let $(\cR,T)$ be a (not necessarily commutative) Rota-Baxter algebra of weight $\lambda=-1$ with $T^2=T$. Suppose that $T$ also satisfies (\ref{Tproperty}). Let $\phi:\cH\rightarrow \cR$ be a character, that is, 
an algebra homomorphism. Then
there are unique maps $\phi_{-}:\cH\rightarrow T(\cR)$ and $\phi_{+}:\cH\rightarrow \tilde{T}(\cR)$,
where $\tilde T=1-T$, such that
$$\phi=\phi_{-}^{\ast(-1)}\ast\phi_{+}, $$
where $\phi^{\ast(-1)}=\phi\circ S$, with $S$ the antipode.
$\phi_{-}$ takes the following form on ${\rm ker}(e)=\oplus_{n>0}\cH_n$:
\begin{eqnarray*}
\phi_{-}(\Gamma)&=&-T(\phi(\Gamma))-\sum_{n=1}^\infty(-1)^n\sum T(\phi(\Gamma^{(1)}))\phi(\Gamma^{(2)})\phi(\Gamma^{(3)})\cdot\cdot\cdot \phi(\Gamma^{(n+1)})\\
&=&-T(\phi(\Gamma))-\sum_{n=1}^\infty(-1)^n((T\phi)\tilde{\ast}\phi^{\tilde{\ast}^n})(\Gamma).
\end{eqnarray*}
Here we use the notation $\tilde{\Delta}^{n-1}(\Gamma)=\sum \Gamma^{(1)}\otimes\cdot\cdot\cdot\otimes \Gamma^{(n)}$, and $\tilde{\Delta}(\Gamma):=\Delta(\Gamma)-\Gamma\otimes1-1\otimes \Gamma$ (which is coassociative), and $\tilde{\ast}$ is the convolution product defined by $\tilde{\Delta}$.
Furthermore, if $T$  satisfies
\begin{equation}
T(xT(y))=xT(y),\quad \forall x,y\in A,\label{2property}
\end{equation}
then $\phi_{+}$ takes the form on ${\rm ker}(e)=\oplus_{n>0}\cH_n$:
$$\phi_{+}(\Gamma)=(1-T)(\phi(\Gamma)).$$
\end{prop}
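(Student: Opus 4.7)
The plan is to lift the argument to the convolution algebra $\cA:=\Hom(\cH,\cR)$ and reduce the proposition to the Atkinson-type statements (Propositions \ref{unique} and \ref{simplified}) already established. Define $\cT\colon\cA\to\cA$ by $\cT(f):=T\circ f$. A pointwise computation, applying the Rota--Baxter identity for $T$ to each summand in $(f*g)(\Gamma)=\sum f(\Gamma_{(1)})g(\Gamma_{(2)})$, shows that $\cT$ is a Rota--Baxter operator of weight $-1$ on the convolution algebra $(\cA,*)$, and the hypotheses $T^2=T$ and $T(T(x)y)=T(x)y$ lift pointwise to $\cT^2=\cT$ and $\cT(\cT(f)*g)=\cT(f)*g$; likewise \eqref{2property} lifts to $\cT(f*\cT(g))=f*\cT(g)$. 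The cograded filtration of $\cH$ makes $\cA$ a complete filtered Rota--Baxter algebra, so Propositions \ref{unique} and \ref{simplified} apply to $(\cA,\cT)$.

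Set $a:=e-\phi\in\cA_1$ (possible because $\phi(1_\cH)=1_\cR=e(1_\cH)$). Proposition \ref{unique}(2) with $\lambda=-1$ yields unique $c_l\in e+\cT(\cA)$ and $c_r\in e+\tilde\cT(\cA)$ with $\phi=e-a=c_l*c_r$. Declare $\phi_-^{*(-1)}:=c_l$ and $\phi_+:=c_r$. Since $T(x)T(y)\in T(\cR)$ by the Rota--Baxter identity, $\cT(\cA)$ is closed under convolution, so the geometric series for the convolution inverse lies in $e+\cT(\cA)$; hence $\phi_-=c_l^{*(-1)}$ again takes values in $T(\cR)$ on $\ker(e)$, and uniqueness of $c_l,c_r$ translates to uniqueness of $\phi_\pm$. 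For the explicit series, identify $\phi_-$ with the Atkinson solution $b_l=e+\cT(b_l*a)$ and invoke Proposition \ref{simplified}:
\[
\phi_-\;=\;e+\cT(a)*(e-a)^{*(-1)}\;=\;e+\sum_{n\geq 0}\cT(a)*a^{*n}.
\]
Evaluating on $\Gamma\in\ker(e)$: $\cT(a)(\Xi)=-T\phi(\Xi)$, and by induction on $n$ using coassociativity of the reduced coproduct, $a^{*n}(\Xi)=(-1)^n\sum\phi(\Xi^{(1)})\cdots\phi(\Xi^{(n)})$ summed over $\tilde\Delta^{n-1}(\Xi)$; combining and reindexing across $\tilde\Delta^n(\Gamma)$ produces the stated nested sum for $\phi_-(\Gamma)$.

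For $\phi_+$, left-convolving $\phi=\phi_-^{*(-1)}*\phi_+$ by $\phi_-$ gives $\phi_+=\phi_-*\phi$, so on $\Gamma\in\ker(e)$
\[
\phi_+(\Gamma)\;=\;\phi(\Gamma)+\phi_-(\Gamma)+\sum\phi_-(\Gamma')\phi(\Gamma'').
\]
Rewriting the Atkinson relation as $\phi_-(\Gamma)=-T\bigl(\phi(\Gamma)+\sum\phi_-(\Gamma')\phi(\Gamma'')\bigr)$ and using $\phi_-(\Gamma')\in T(\cR)$ together with hypothesis \eqref{Tproperty} in the form $T(\phi_-(\Gamma')\phi(\Gamma''))=\phi_-(\Gamma')\phi(\Gamma'')$ collapses this to $\phi_-(\Gamma)=-T\phi(\Gamma)-\sum\phi_-(\Gamma')\phi(\Gamma'')$; substitution then cancels the two $\sum\phi_-(\Gamma')\phi(\Gamma'')$ terms and leaves $\phi_+(\Gamma)=(1-T)\phi(\Gamma)$. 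Hypothesis \eqref{2property} enters dually, via the symmetric fixed-point equation $b_r=e+\tilde\cT(a*b_r)$, to certify that this element is the unique $\phi_+\in e+\tilde\cT(\cA)$ appearing in the factorization. The main technical obstacle is the combined sign- and coproduct-bookkeeping needed to unpack the closed form $e+\cT(a)*(e-a)^{*(-1)}$ into the explicit nested sum of the proposition, and to keep straight the convention $\phi_-^{*(-1)}=c_l$ versus $\phi_-=b_l$ throughout.
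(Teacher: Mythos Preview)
Your argument is correct and follows essentially the same route as the paper: lift $T$ to the convolution algebra $\Hom(\cH,\cR)$, set $a=e-\phi$, and apply Propositions~\ref{unique} and~\ref{simplified} to obtain $\phi_-=b_l=e+\cT(a)*(e-a)^{*(-1)}$, then expand. Your unpacking of the series for $\phi_-$ matches the paper's computation term by term.

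One point worth flagging: your derivation of $\phi_+(\Gamma)=(1-T)\phi(\Gamma)$ via $\phi_+=\phi_-*\phi$ and the simplification $\phi_-(\Gamma)=-T\phi(\Gamma)-\sum\phi_-(\Gamma')\phi(\Gamma'')$ uses only hypothesis~\eqref{Tproperty}, not~\eqref{2property}. This is in fact correct --- once $\phi_-(\Gamma')\in T(\cR)$, the identity $T(T(x)y)=T(x)y$ already gives $(1-T)\bigl(\phi_-(\Gamma')\phi(\Gamma'')\bigr)=0$, so the cancellation goes through without ever invoking $T(xT(y))=xT(y)$. The paper instead argues by first showing that $1-T$ is multiplicative, which does use both~\eqref{Tproperty} and~\eqref{2property}, and then splitting $(1-T)\tilde\phi(\Gamma)$ as a product; your shortcut bypasses this. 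Consequently, your final sentence about~\eqref{2property} ``entering dually via $b_r=e+\tilde\cT(a*b_r)$ to certify uniqueness'' is superfluous and somewhat muddled: uniqueness of $c_r\in e+\tilde\cT(\cA)$ is already granted by Proposition~\ref{unique}(2) from $T^2=T$ alone, and $\phi_+=c_r=b_r^{-1}$ rather than $b_r$. You have actually established the $\phi_+$ formula under a weaker hypothesis than stated.
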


\begin{proof}
Define $R:={\rm Hom}(\cH,\cR)$ and
$$P:R\rightarrow R,\quad P(f)(\Gamma)=T(f(\Gamma)),\; f\in{\rm Hom}(\cH,\cR), \Gamma\in \cH.$$
Then by \cite{Guo1}, $R$ is a complete algebra with filtration $R_n=\{f\in{\rm Hom}(\cH,\cR)|f(\cH_{n-1})=0\}, n\geq 0$, and $P$ is a Rota-Baxter operator of weight $-1$ and $P^2=P$. Moreover, since $T$ satisfies (\ref{Tproperty}), it is easy to check that $P(P(f)g)=P(f)g$ for any $f,g\in{\rm Hom}(\cH,\cR)$.
Let $\phi:\cH\rightarrow \cR$ be a character. Then $(e-\phi)({\rm 1}_\cH)=e({\rm 1}_{\cH})-\phi({\rm 1}_{\cH})={\rm 1}_\cR-{\rm 1}_\cR=0$. So $e-\phi\in \cR_1$. Set $a=e-\phi$, by Proposition \ref{unique}, we know that there are unique $c_l\in T(\cR)$ and $c_r\in (1-T)(\cR)$ such that $\phi=c_lc_r$. Moreover, by Proposition \ref{simplified} we have $\phi_{-}=b_l=c_l^{-1}=e+T(a)(e-a)^{-1}=e+T(e-\phi)\sum_{n=0}^\infty(e-\phi)^n$.
We also have $\sum_{n=0}^\infty(e-\phi)^n({\rm 1}_\cH)={\rm 1}_\cR$ and for any $X\in{\rm ker}(e)=\oplus_{n>0}\cH_n$, we have $(e-\phi)^0(\Gamma)=e(\Gamma)=0$; $(e-\phi)^1(\Gamma)=-\phi(\Gamma)$; $(e-\phi)^2(\Gamma)=\sum(e-\phi)(\Gamma')(e-\phi)(\Gamma'')=\sum\phi(\Gamma')\phi(\Gamma'')$. More generally, we have $(e-\phi)^n(\Gamma)=(-1)^n\sum\phi(\Gamma^{(1)})\phi(\Gamma^{(2)})\cdots\phi(\Gamma^{(n)})=(-1)^n\phi^{\tilde{\ast}^n}(\Gamma)$.
So for $X\in{\rm ker}(e)=\oplus_{n>0}\cH_n$,
\begin{eqnarray*}
\phi_{-}(\Gamma)&=&(T(e-\phi)\sum_{n=0}^\infty(e-\phi)^n)(\Gamma)\\
&=&T(e-\phi)({\rm 1}_\cH)\sum_{n=0}^\infty(e-\phi)^n(\Gamma)+T(e-\phi)(\Gamma)\sum_{n=0}^\infty(e-\phi)^n({\rm 1}_\cH)\\
& &+\sum T((e-\phi)(\Gamma'))\sum_{n=1}^\infty(e-\phi)^n(\Gamma'')\\
&=&-T(\phi(\Gamma))-\sum T(\phi(\Gamma'))\sum_{n=1}^\infty(-1)^n\sum\phi((\Gamma'')^{(1)})\phi((\Gamma'')^{(2)})\cdots \phi((\Gamma'')^{(n)})\\
&=&-T(\phi(\Gamma))-\sum_{n=1}^\infty(-1)^n\sum T(\phi(\Gamma^{(1)}))\phi(\Gamma^{(2)})\phi(X^{(3)})\cdots \phi(\Gamma^{(n+1)})\\
&=&-T(\phi(\Gamma))-\sum_{n=1}^\infty(-1)^n((T\phi)\tilde{\ast}\phi^{\tilde{\ast}^n})(\Gamma).
\end{eqnarray*}
Suppose that $T$ also satisfies equation \eqref{2property}, then for any $a,b\in \cR$, we have $(1-T)(a)(1-T)(b)=ab-T(a)b-aT(b)+T(a)T(b)=ab-T(T(a)b)-T(aT(b))+T(a)T(b)=ab-T(ab)=(1-T)(ab)$, as $T$ is a Rota-Baxter operator of weight $-1$. As shown in \cite{CoKr} and \cite{EFGK1}, $\phi_{+}(\Gamma)=(1-T)(\phi(\Gamma)+\sum\phi_{-}(\Gamma')\phi(\Gamma''))$. So $\phi_{+}(\Gamma)=(1-T)(\phi(\Gamma))+\sum(1-T)(\phi_{-}(\Gamma'))(1-T)(\phi(\Gamma''))$ by the previous computation. But $\phi_{-}$ is in the image of $T$ and $T^2=T$, so we must have $(1-T)(\phi_{-}(\Gamma'))=0$, which shows that $\phi_{+}(\Gamma)=(1-T)(\phi(\Gamma))$.
\end{proof}

\subsection{A variant of algebraic renormalization}\label{renformsec}

We consider now a setting inspired by the formalism of the Connes--Kreimer
renormalization recalled above. The setting generalizes the one considered in
\cite{CeMa} for configuration space integrals and our main application will
be to extend the approach of \cite{CeMa} to momentum space integrals.

\smallskip

The main difference with respect to the Connes--Kreimer renormalization
is that, instead of renormalizing the Feynman amplitude
(regularized so that it gives a meromorphic function), we
propose to renormalize the differential form, before integration,
and then integrate the renormalized form to obtain a period.

\smallskip

The result obtained by this method differs from the physical renormalization,
as we will discuss further in Section \ref{FeynmanSec} below.
There are at present no explicit examples of periods that are known not
to be expressible in terms of rational combinations of mixed Tate periods, 
just because no such general statement of algebraic 
independence of numbers is known. However, it is generally expected that 
motives that are not mixed Tate will have periods that are
not expressible in terms of mixed Tate periods, for instance periods associated 
to $H^1$ of an elliptic curve. There are known examples (\cite{BrDo},
\cite{BrSch}) of Feynman integrals that give periods of non-mixed Tate motives
(a K3 surface, for instance). 
In our setting, the period obtained by applying the Birkhoff factorization to
the Feynman integrand $\eta_\Gamma$ is always a mixed Tate period.
However, it is difficult to ensure that the result is non-trivial. As we will
discuss in more detail in Section~\ref{KGLsec}, one can ensure a non-trivial result 
by replacing the form $\eta_\Gamma$ with a cohomologous form with logarithmic
poles and taking into account both the result of the pole subtraction and all the
Poincar\'e residues. However, passing to a form with logarithmic poles
requires, in general, restricting to the big cell of the Kausz compactification,
and this introduces a constraint on the nature of the period.  
If the intersection of the big cell of the
Kausz compactification with the divisor $\Sigma_{\ell, g}$ that contains
the boundary of the chain of integeration is a mixed Tate motive, then the
convergent integral we obtain by replacing the integration form with a
form with logarithmic poles is a mixed Tate period. For particular graphs,
for which the form with logarithmic poles extends globally to the Kausz compactification,
with poles along the boundary divisor, we obtain a mixed Tate period without
any further assumption.

\smallskip

The main steps required for our setup are the following. For a variety $X$, we denote
by $\m(X)$ the motive in the Voevodsky category. 

\begin{itemize}
\item For each $\ell\geq 1$, we construct
a pair $(X_\ell, Y_\ell)$ of a smooth projective variety $X_\ell$ (defined over $\Q$) whose
motive $\m(X_\ell)$ is mixed Tate (over $\Z$), together with a (singular)
hypersurface $Y_\ell \subset X_\ell$.
\item For each Feynman graph $\Gamma$ with loop number $\ell$ we construct a map $\Upsilon:
\A^n \smallsetminus \hat X_\Gamma \to X_\ell \smallsetminus Y_\ell$, where $\hat X_\Gamma\subset \A^n$ is the affine graph hypersurface, with $n$ the number of edges of $\Gamma$. 
\item Using the map $\Upsilon$, we describe the Feynman integrand as a morphism of commutative algebras
$$ \phi: \cH_{\rm even} \to \bigwedge_\ell \cM^{\rm even}_{X_\ell, Y_\ell}, \ \ \  \phi(\Gamma)
= \eta_\Gamma, $$
with $\cH$ the Connes--Kreimer Hopf algebra and
with the Rota--Baxter structure of Lemma \ref{prodRB} on the target algebra,
and with $\eta_\Gamma$ an algebraic differential form on $X_\ell$ with polar locus $Y_\ell$, 
for $\ell=b_1(\Gamma)$.
\item We express the (unrenormalized) Feynman
integrals as a (generally divergent) integral $\int_{\Upsilon(\sigma)} \eta_\Gamma$, over
a chain $\Upsilon(\sigma)$ in $X_\ell$ that is the image of a chain $\sigma$ in $\A^n$.
\item We construct a divisor $\Sigma_\ell\subset X_\ell$, that contains the
boundary $\partial \Upsilon(\sigma)$, whose motive $\m(\Sigma_\ell)$ is
mixed Tate (over $\Z$) for all $\ell \geq 1$.
\item We perform the Birkhoff decomposition $\phi_\pm$ obtained inductively
using the coproduct on $\cH$ and the Rota--Baxter operator $T$ (polar part) on
$\cM^*_{X_\ell, Y_\ell}$.
\item This gives a holomorphic form $\phi_+(\Gamma)$
on $X_\ell$. The divergent Feynman integral is then replaced by the integral
$$ \int_{\Upsilon(\sigma)} \phi_+(\Gamma) $$
which is a period of the mixed Tate motive $\m(X_\ell, \Sigma_\ell)$.
\item In addition to the integral of $\phi_+(\Gamma)$ on $X_\ell$ we consider
integrals on the strata of the complement $X_\ell \smallsetminus Y_\ell$ of the
polar part $\phi_-(\Gamma)$, which under suitable conditions will be
interpreted as Poincar\'e residues.
\end{itemize}

If convergent, the Feynman integral $\int_{\Upsilon(\sigma)} \eta_\Gamma$ would be a period of
$\m(X_\ell\smallsetminus Y_\ell,  \Sigma_\ell \smallsetminus (\Sigma_\ell \cap Y_\ell))$.
The renormalization procedure described above replaces it with a (convergent)
integral that is a period of the simpler motive $\m(X_\ell, \Sigma_\ell)$.
By our assumptions on $X_\ell$ and $\Sigma_\ell$, the motive $\m(X_\ell, \Sigma_\ell)$ is
mixed Tate for all $\ell$.

Thus, this strategy eliminates the difficulty of analyzing the motive
$\m(X_\ell \smallsetminus Y_\ell, \Sigma_\ell\smallsetminus (\Sigma_\ell\cap Y_\ell))$
encountered for instance in \cite{AluMa}. The form of renormalization proposed here always
produces a mixed Tate period, but at the cost of incurring in a considerable loss of information
with respect to the original Feynman integral.

\smallskip

Indeed, a difficulty in the procedure described above is ensuring that
the resulting regularized form
$$ \phi_+(\Gamma) = (1-T) (\phi(\Gamma) + \sum_{\gamma \subset \Gamma}
\phi_-(\gamma) \wedge \phi(\Gamma/\gamma)) $$
is nontrivial. This condition may be difficult to control in explicit cases,
although we will discuss below (see Section~\ref{KGLsec}) conditions under which one
can reduce the problem to forms with logarithmic poles, where using the
pole subtraction together with Poincar\'e residues one can obtain
nontrivial periods (although the result one obtains is not equivalent to
the physical renormalization of the Feynman amplitude).

\smallskip

An additional difficulty that can cause loss of information with respect to the
Feynman integral is coming from the combinatorial 
conditions on the graph given in \cite{AluMa} that we will use to ensure that the map $\Upsilon$ 
to the complement of the determinant hypersurface is an embedding, see Section \ref{FeynmanSec}.

\section{Rota--Baxter algebras and forms with logarithmic poles}\label{RBlogSec}

We now focus on the case of meromorphic forms with logarithmic poles,
where the Rota--Baxter structure and the renormalization procedure
described above drastically simplify.

\begin{lem}\label{Pnlogforms}
Let $X$ be a smooth projective variety and $Y\subset X$ a smooth hypersurface
with defining equation $Y=\{ f=0 \}$.
Let $\Omega^\star_{X}(\log(Y))$ be the sheaf of algebraic differential
forms on $X$ with logarithmic poles along $Y$. After passing to global sections,
we obtain a graded-commutative algebra, which we still denote by 
$\Omega^\star_{X}(\log(Y))$, for simplicity. 
The Rota--Baxter operator $T$ of Lemma \ref{RBmerom}
preserves the commutative subalgebra $\Omega^{\rm even}_{X}(\log(Y))$ and the pair
$(\Omega^{\rm even}_{X}(\log(Y)),T)$
is a graded Rota--Baxter algebra of degree $-1$ with the property that, for all
$\omega_1,\omega_2\in \Omega^{\rm even}_{X}(\log(Y))$, the wedge product
$T(\omega_1)\wedge T(\omega_2)=0$.
\end{lem}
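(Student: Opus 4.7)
The plan is to exploit the local normal form of a logarithmic form along the smooth hypersurface $Y=\{f=0\}$, namely the decomposition
\[
\omega = \alpha\wedge\frac{df}{f}+\beta,
\]
with $\alpha,\beta$ holomorphic, and then read off all three claims from this presentation. First I would verify that $T$ preserves $\Omega^{\rm even}_X(\log Y)$: writing $\omega$ as above, the polar part operator from Lemma \ref{RBmerom} (in the form stated in Remark \ref{RBomegaY}) picks out $T(\omega)=\alpha\wedge\tfrac{df}{f}$, which is again logarithmic along $Y$. Restricting to even degrees is harmless since $T$ does not change degree, so $T$ maps $\Omega^{\rm even}_X(\log Y)$ into itself.

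Next, the Rota--Baxter identity of weight $-1$ is inherited for free. Indeed, $\Omega^{\rm even}_X(\log Y)$ is a commutative subalgebra of $\cM^{\rm even}_{X,Y}$ under the wedge product, and we have just shown that $T$ restricts to it. Since \eqref{RBmin1} holds identically on $\cM^{\rm even}_{X,Y}$ by Lemma \ref{RBmerom}, it automatically holds on the subalgebra, giving the graded Rota--Baxter structure of weight $-1$.

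Finally, for the vanishing property, given two logarithmic forms $\omega_i=\alpha_i\wedge\tfrac{df}{f}+\beta_i$ with $i=1,2$, we compute directly
\[
T(\omega_1)\wedge T(\omega_2) = \Bigl(\alpha_1\wedge\tfrac{df}{f}\Bigr)\wedge\Bigl(\alpha_2\wedge\tfrac{df}{f}\Bigr),
\]
which up to sign contains the factor $\tfrac{df}{f}\wedge\tfrac{df}{f}=0$, hence vanishes.

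The only real subtlety, and the step I expect to be the main obstacle, is making precise the decomposition $\omega=\alpha\wedge\tfrac{df}{f}+\beta$ globally after passage to global sections, since locally the splitting depends on trivializations of the conormal bundle of $Y$. Because $Y$ is smooth and cut out by the single equation $f=0$, one can take $df/f$ as a well-defined global section of $\Omega^1_X(\log Y)$, and the residue short exact sequence $0\to\Omega^\star_X\to\Omega^\star_X(\log Y)\xrightarrow{\mathrm{Res}_Y}\Omega^{\star-1}_Y\to 0$ ensures that the decomposition into a holomorphic part $\beta$ and a polar part $\alpha\wedge\tfrac{df}{f}$ is well-defined up to a holomorphic correction absorbed in $\beta$. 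This makes the operator $T$ unambiguous as a map on global sections, and the three assertions follow as above.
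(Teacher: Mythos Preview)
Your proposal is correct and follows essentially the same line as the paper's proof: both use the canonical decomposition $\omega=\tfrac{df}{f}\wedge\xi+\eta$ to identify $T(\omega)$ as the polar piece, inherit the weight $-1$ Rota--Baxter identity from the ambient $\cM^{\rm even}_{X,Y}$ via Remark~\ref{RBomegaY}, and deduce the vanishing $T(\omega_1)\wedge T(\omega_2)=0$ from $\tfrac{df}{f}\wedge\tfrac{df}{f}=0$. Your added paragraph on the global well-definedness of the splitting is a welcome clarification that the paper leaves implicit.
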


\proof Forms $\omega \in \Omega^\star_{X}(\log(Y))$ can be written in canonical
form
$$ \omega=\frac{df}{f} \wedge \xi + \eta, $$
with $\xi$ and $\eta$ holomorphic, so that $T(\omega)=\frac{df}{f} \wedge \xi$.
We then have \eqref{RBmin1} as in Remark \ref{RBomegaY} above, with
$T(\omega_1) \wedge T(\omega_2)=(-1)^{|\xi_1|+1} \alpha \wedge \alpha \wedge \xi_1 \wedge \xi_2$
where $\alpha$ is the 1-form $\alpha=df/f$ so that $\alpha\wedge\alpha=0$.
\endproof

Lemma \ref{Pnlogforms} shows that, when restricted to
$\Omega^\star_{X}(\log(Y))$, the operator $T$ satisfies the simpler
identity
\begin{equation}\label{nonRB}
T(xy) = T(T(x)y) + T(x T(y)).
\end{equation}
This property greatly simplifies the decomposition of the algebra induced by the Rota--Baxter
operator. 

Let $\cR_+=(1-T)\cR$. For an operator $T$ satisfying \eqref{nonRB} and $T(x)T(y)=0$,
for all $x,y \in \cR$, the property that  $\cR_+\subset \cR$ is a subalgebra follows immediately
from the simple identity $$(1-T)(x)\cdot (1-T)(y)=xy -T(x)y - x T(y) $$ $$ = xy -T(x)y -xT(y)
-( T(xy) - T(T(x)y) - T(x T(y))) = (1-T) (xy -T(x)y -xT(y)).$$
Moreover, we obtain a simplified form of the general result of Proposition \ref{phiproperties},
when taking into account the vanishing $T(x)T(y)=0$, as shown in Lemma \ref{Pnlogforms}.

\begin{lem}\label{RBsimple}
Let $\cR$ be a commutative algebra and $T: \cR \to \cR$ a linear operator
that satisfies the identity \eqref{nonRB} and such that,
for all $x,y \in \cR$, the product $T(x) T(y)=0$.
Then both $T$ and $1-T$ are idempotent, $T^2=T$ and $(1-T)^2=1-T$.
\end{lem}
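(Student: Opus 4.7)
The two idempotency claims $T^2=T$ and $(1-T)^2=1-T$ are equivalent, since $(1-T)^2-(1-T)=T^2-T$, so the entire content of the lemma reduces to proving $T^2=T$. I will take $\cR$ to be unital (this is automatic in the intended application to $\cR=\Omega^{\rm even}_X(\log Y)$, which contains the constant function $1$; in the non-unital case one simply passes to the unitization with $T(1):=0$).

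First, applying the vanishing hypothesis $T(x)T(y)=0$ at $x=y=1$ produces the crucial nilpotency $T(1)^2=0$. Next, substituting $x=1$ into \eqref{nonRB} gives the formula
\begin{equation*}
T(y)-T^2(y)\;=\;T\bigl(T(1)\,y\bigr), \qquad (\ast)
\end{equation*}
which identifies the defect of idempotency as the operator $y\mapsto T(T(1)y)$. I also need a preliminary ``partial'' idempotency $T^2=T^3$: substituting $x\to T(x)$ and $y=1$ into \eqref{nonRB} produces $T^2(x)=T^3(x)+T\bigl(T(x)T(1)\bigr)$, and the second summand vanishes by the hypothesis, yielding $T^2=T^3$.

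The final and most delicate step is to promote $(\ast)$ to the vanishing $T(T(1)y)=0$ for every $y$. Applying $T$ to $(\ast)$ and invoking $T^2=T^3$ collapses the left-hand side to zero, so $T^2(T(1)y)=0$ for all $y$. To eliminate the outer $T$, I substitute $y\mapsto T(1)y$ back into $(\ast)$, obtaining
\begin{equation*}
T(T(1)y)-T^2(T(1)y)\;=\;T\bigl(T(1)^2\,y\bigr)\;=\;T(0)\;=\;0,
\end{equation*}
using $T(1)^2=0$. Combined with $T^2(T(1)y)=0$ from the previous step, this forces $T(T(1)y)=0$. Substituting back into $(\ast)$ yields $T^2=T$, and then $(1-T)^2=1-2T+T^2=1-T$. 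The main obstacle is precisely this step of killing $T(T(1)y)$: it requires the combined pressure of the nilpotency $T(1)^2=0$, the partial idempotency $T^2=T^3$, and a further application of identity \eqref{nonRB} after the substitution $y\mapsto T(1)y$.
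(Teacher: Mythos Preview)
Your proof is correct and follows essentially the same strategy as the paper: both arguments feed special substitutions into \eqref{nonRB} and use the vanishing $T(x)T(y)=0$ to force $T^2=T$. The paper's version is slightly more economical: it first shows $T(1)=0$ directly (from $T(1)=2T^2(1)$ and $T^2(1)=T^3(1)$, applying $T$ to the former and invoking the latter gives $T^2(1)=2T^2(1)$, hence $T^2(1)=0$ and $T(1)=0$), after which your identity $(\ast)$ collapses immediately to $T=T^2$. Your route instead proves the stronger intermediate statement $T(T(1)y)=0$ for all $y$, which is valid but requires the extra substitution $y\mapsto T(1)y$; once $T(1)=0$ is known this step becomes unnecessary.
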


\proof The identity \eqref{nonRB} gives $T(1)=0$, since taking $x=y=1$
one obtains $T(1)=2 T^2(1)$ while taking $x=T(1)$ and $y=1$ gives
$T^2(1)=T^3(1)$. Then \eqref{nonRB} with $y=1$ gives
$T(x)=T(x T(1))+T(T(x) 1)=T^2(x)$ for all $x\in \cR$. For $1-T$ we
then have $(1-T)^2(x)=x-2T(x)+T^2(x)=(1-T)(x)$, for all $x\in \cR$.
\endproof

\begin{lem}\label{derRB}
Let $\cR$ be a commutative algebra and $T: \cR \to \cR$ a linear operator
that satisfies the identity \eqref{nonRB} and such that,
for all $x,y \in \cR$ the product $T(x) T(y)=0$. If, for all $x,y \in \cR$,
the identity $T(x)y+x T(y)= T(T(x)y)+T(x T(y))$ holds, then
the operator $(1-T): \cR \to \cR_+$ is an algebra homomorphism
and the operator $T$ is a derivation on $\cR$.
\end{lem}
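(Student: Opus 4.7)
The plan is to observe that this lemma is essentially a direct algebraic consequence of combining the new hypothesis with the earlier identities, with no hidden obstacle.

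First I would verify that $T$ is a derivation. By hypothesis $T(x)y+xT(y)=T(T(x)y)+T(xT(y))$, and by the identity \eqref{nonRB} the right hand side equals $T(xy)$. Hence $T(xy)=T(x)y+xT(y)$ for all $x,y\in\cR$, which is exactly the Leibniz rule.

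Next I would show that $(1-T)$ is multiplicative. Expand
\begin{equation*}
(1-T)(x)\,(1-T)(y) = xy - T(x)y - x\,T(y) + T(x)\,T(y).
\end{equation*}
The last term vanishes by the hypothesis $T(x)T(y)=0$, and by the derivation property just established, $T(x)y+xT(y)=T(xy)$. Therefore
\begin{equation*}
(1-T)(x)\,(1-T)(y) = xy - T(xy) = (1-T)(xy),
\end{equation*}
so $(1-T)$ respects products. For the unit, Lemma \ref{RBsimple} gives $T(1)=0$, so $(1-T)(1)=1$, and since the image lies in $\cR_+=(1-T)\cR$ by definition, we obtain a unital algebra homomorphism $(1-T)\colon \cR\to \cR_+$.

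There is no real obstacle: the entire argument is a two-line verification, and the only subtlety is recognizing that the new hypothesis is precisely what is needed to equate $T(xy)$ (given by \eqref{nonRB}) with $T(x)y+xT(y)$. The condition $T(x)T(y)=0$ then eliminates the one remaining cross term in the expansion of $(1-T)(x)(1-T)(y)$, and $T(1)=0$ (already available from the previous lemma) handles unitality.
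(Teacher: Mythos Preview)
Your proof is correct and follows essentially the same approach as the paper: combine the hypothesis $T(x)y+xT(y)=T(T(x)y)+T(xT(y))$ with \eqref{nonRB} to get the Leibniz rule, and use $T(x)T(y)=0$ to conclude that $(1-T)$ is multiplicative. The only differences are cosmetic---you establish the derivation property first and then deduce multiplicativity, whereas the paper checks $(1-T)(xy)=(1-T)(x)(1-T)(y)$ directly before remarking on the derivation---and you additionally note unitality via $T(1)=0$, which the paper omits.
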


\proof We have $(1-T)(xy)= xy - T(T(x)y) -T(x T(y))$ while
$(1-T)(x) \cdot (1-T)(y)=xy -T(x)y -xT(y)$. Assuming that, for all $x,y\in \cR$,
we have $T(T(x)y) +T(x T(y))=T(x)y +xT(y)$ gives $(1-T)(xy)=(1-T)(x)\cdot (1-T)(y)$.
Moreover, the identity \eqref{nonRB} can be rewritten as
$T(xy)=T(x)y+x T(y)$, hence $T$ is just a derivation on $\cR$.
\endproof

\smallskip

Consider then the case of a smooth hypersurface $Y$ in a smooth projective variety
$X$. We have the following properties.

\begin{prop}\label{logRBder}
Let $Y\subset X$ be a smooth hypersurface in a smooth projective variety.
The Rota--Baxter operator $T: \cM^{\rm even}_{X,Y} \to \cM^{\rm even}_{X,Y}$ of weight $-1$
on meromorphic forms on $X$ with poles along $Y$ restricts to a derivation on the graded
algebra $\Omega^{\rm even}_{X}(\log(Y))$ of forms with logarithmic poles. Moreover,
the operator $1-T$ is a morphism of commutative algebras from
$\Omega^{\rm even}_{X}(\log(Y))$ to the algebra of holomorphic forms
$\Omega^{\rm even}_{X}$.
\end{prop}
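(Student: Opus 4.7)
The plan is to derive both statements as a direct consequence of Lemma \ref{derRB}, once we verify that the canonical-form description of logarithmic forms forces the extra identity required by that lemma. By Lemma \ref{Pnlogforms} we already know that $T$ preserves $\Omega^{\rm even}_{X}(\log(Y))$, satisfies the weak identity \eqref{nonRB}, and has the vanishing property $T(\omega_1)\wedge T(\omega_2)=0$. So the only thing missing in order to invoke Lemma \ref{derRB} is
\[
T(\omega_1)\wedge\omega_2 + \omega_1\wedge T(\omega_2) \;=\; T(T(\omega_1)\wedge\omega_2) + T(\omega_1\wedge T(\omega_2)).
\]

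First I would write each $\omega_i\in \Omega^{\rm even}_{X}(\log(Y))$ in its canonical form $\omega_i = \tfrac{df}{f}\wedge \xi_i + \eta_i$ with $\xi_i,\eta_i$ holomorphic, so that $T(\omega_i)=\tfrac{df}{f}\wedge\xi_i$. Then I would compute
\[
T(\omega_1)\wedge\omega_2 \;=\; \tfrac{df}{f}\wedge\xi_1\wedge\eta_2,\qquad \omega_1\wedge T(\omega_2) \;=\; (-1)^{|\eta_1|}\,\tfrac{df}{f}\wedge\eta_1\wedge\xi_2,
\]
the cross terms involving $\tfrac{df}{f}\wedge\tfrac{df}{f}$ vanishing. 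The key observation is that both expressions are already in canonical form with trivial holomorphic part, so the operator $T$ acts as the identity on them. Consequently $T(T(\omega_1)\wedge\omega_2)=T(\omega_1)\wedge\omega_2$ and $T(\omega_1\wedge T(\omega_2))=\omega_1\wedge T(\omega_2)$, which yields the required identity trivially. Lemma \ref{derRB} then gives that $T$ is a derivation on $\Omega^{\rm even}_{X}(\log(Y))$ and that $1-T$ is a homomorphism of commutative algebras to the image $\cR_+=(1-T)\,\Omega^{\rm even}_{X}(\log(Y))$.

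For the last sentence, I would observe that in canonical form $(1-T)(\omega) = \bigl(\tfrac{df}{f}\wedge\xi+\eta\bigr) - \tfrac{df}{f}\wedge\xi = \eta$, which is holomorphic. Hence the image of $1-T$ is contained in $\Omega^{\rm even}_{X}$, and the algebra homomorphism from $\Omega^{\rm even}_{X}(\log(Y))$ takes values there, as claimed.

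The argument is essentially bookkeeping, so there is no real obstacle; the only point one must be careful about is the sign convention when commuting $\eta_1$ past $\tfrac{df}{f}$, but since we are in even degrees the relevant signs do not affect the identity. The conceptual content is simply that, once the polar part of a logarithmic form is stripped off, further application of $T$ is idempotent at the level of the wedge product, which is precisely the hypothesis that upgrades the Rota--Baxter identity to a derivation property.
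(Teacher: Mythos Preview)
Your proof is correct and follows essentially the same approach as the paper: both verify the hypotheses of Lemma~\ref{derRB} by writing logarithmic forms in canonical form $\omega_i=\frac{df}{f}\wedge\xi_i+\eta_i$, computing $T(\omega_1)\wedge\omega_2$ and $\omega_1\wedge T(\omega_2)$ explicitly, and observing that $T$ fixes each of these. Your additional remark that $(1-T)(\omega)=\eta$ lands in holomorphic forms makes explicit what the paper leaves implicit.
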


\proof It suffices to check that the polar part operator $T: \Omega^{\rm even}_{X}(\log(Y)) \to
\Omega^{\rm even}_{X}(\log(Y))$ satisfies the hypotheses of Lemma \ref{derRB}. We have
seen that, for all $\omega_1, \omega_2\in \Omega^{\rm even}_{X}(\log(Y))$, the product
$T(\omega_1)\wedge T(\omega_2)=0$. Moreover, for $\omega_i = d\log(f) \wedge \xi_i + \eta_i$,
we have $T(\omega_1)\wedge \omega_2 = d\log(f)\wedge \xi_1 \wedge \eta_2$ and
$\omega_1 \wedge T(\omega_2)= (-1)^{|\eta_1|} d\log(f) \wedge \eta_1 \wedge \xi_2$,
where the $\xi_i$ and $\eta_i$ are holomorphic, so that we have $T( T(\omega_1)\wedge \omega_2 )=
T(\omega_1)\wedge \omega_2$ and $T(\omega_1 \wedge T(\omega_2))=\omega_1 \wedge T(\omega_2)$.
Thus, the hypotheses of Lemma \ref{derRB} are satisfied.
\endproof

\subsection{Birkhoff factorization and forms with logarithmic poles}

In cases where the pair $(X, Y)$ has the property that
all de Rham cohomology classes in $H^*_{dR}(X \smallsetminus Y)$
are represented by global algebraic differential forms with logarithmic poles,
the construction above simplifies significantly. Indeed, the Birkhoff
factorization becomes essentially trivial, because of Proposition
\ref{logRBder}. In other words, all graphs behave ``as if they were
log divergent". This can be stated more precisely as follows.

\smallskip

\begin{prop}\label{logpolesT}
Let $Y \subset X$ be a smooth hypersurface inside a smooth projective variety and let
$\Omega^{\rm even}_{X}(\log(Y))$ denote the commutative
algebra of algebraic differential forms on $X$ of even degree with logarithmic
poles on $Y$. Let $\phi: \cH \to \Omega^{\rm even}_{X}(\log(Y))$
be a morphism of commutative algebras from a commutative
Hopf algebra $\cH$ to
$\Omega^{\rm even}_{X}(\log(Y))$ with the operator $T$ of pole
subtraction. Then for every $\Gamma\in \cH$ one has
$$ \phi_+(\Gamma) =(1-T) \phi(\Gamma), $$
while the negative part of the Birkhoff factorization takes the form
$$ \phi_-(\Gamma) = -T (\phi(\Gamma)) - \sum \phi_-(\Gamma') \phi(\Gamma''), $$
where $\Delta(\Gamma)=\Gamma\otimes 1 + 1 \otimes \Gamma + \sum \Gamma' \otimes \Gamma''$.
Moreover, $\phi_{-}$ takes the following nonrecursive form on ${\rm ker}(e)=\oplus_{n>0}\cH_n$:
\begin{eqnarray*}
\phi_{-}(\Gamma)&=&-T(\phi(\Gamma))-\sum_{n=1}^\infty(-1)^n\sum T(\phi(\Gamma^{(1)}))\phi(\Gamma^{(2)})\phi(\Gamma^{(3)})\cdot\cdot\cdot \phi(\Gamma^{(n+1)})\\
&=&-T(\phi(\Gamma))-\sum_{n=1}^\infty(-1)^n((T\phi)\tilde{\ast}\phi^{\tilde{\ast}^n})(\Gamma).
\end{eqnarray*}
\end{prop}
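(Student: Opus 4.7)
The plan is to realize Proposition \ref{logpolesT} as a direct corollary of Proposition \ref{phiproperties} by verifying, for the polar-part operator $T$ restricted to $\Omega^{\rm even}_X(\log Y)$, the three hypotheses $T^2=T$, $T(T(x)y)=T(x)y$, and $T(xT(y))=xT(y)$. Once these are established, both the closed-form expression for $\phi_-$ on $\ker(e)$ and the identity $\phi_+(\Gamma)=(1-T)\phi(\Gamma)$ follow immediately from that proposition applied to the character $\phi:\cH\to \Omega^{\rm even}_X(\log Y)$.

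To check the three conditions, I would argue as follows. The vanishing $T(\omega_1)\wedge T(\omega_2)=0$ is already recorded in Lemma \ref{Pnlogforms}, so the Rota--Baxter identity \eqref{RBmin1} collapses to the simplified identity \eqref{nonRB}. Then Lemma \ref{RBsimple} yields $T^2=T$. For the two absorption identities, I would use the canonical decomposition $\omega_i=\frac{df}{f}\wedge\xi_i+\eta_i$ with $\xi_i,\eta_i$ holomorphic. One computes $T(\omega_1)\wedge\omega_2=\frac{df}{f}\wedge\xi_1\wedge\eta_2$, which is already of the form (logarithmic 1-form)$\wedge$(holomorphic), so $T$ fixes it; the identity $T(\omega_1\wedge T(\omega_2))=\omega_1\wedge T(\omega_2)$ is symmetric. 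This is exactly the calculation already carried out in the proof of Proposition \ref{logRBder}.

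For the intermediate recursive formula, I would start from the standard inductive Birkhoff formula \eqref{recBirkhoff},
$$\phi_-(\Gamma)=-T\!\left(\phi(\Gamma)+\sum\phi_-(\Gamma')\phi(\Gamma'')\right),$$
and observe that each $\phi_-(\Gamma')$ lies in $T(\cR)$ by construction, so the absorption identity $T(T(x)y)=T(x)y$ (just verified) gives $T(\phi_-(\Gamma')\phi(\Gamma''))=\phi_-(\Gamma')\phi(\Gamma'')$. Distributing $T$ across the sum then produces the claimed recursive form. The nonrecursive expansion is then a direct quotation of the corresponding formula in Proposition \ref{phiproperties}.

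I do not expect a substantive obstacle: the proposition is essentially a repackaging of Proposition \ref{phiproperties} for the particular target $\Omega^{\rm even}_X(\log Y)$. The only point worth minor care is that Proposition \ref{phiproperties} is stated for an associative target, whereas the full algebra $\Omega^\star_X(\log Y)$ is only graded-commutative; restricting to even-degree forms, as in Lemma \ref{Pnlogforms}, guarantees that the target is genuinely commutative so that the hypotheses of Proposition \ref{phiproperties} apply without sign modifications.
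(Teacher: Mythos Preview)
Your proposal is correct and essentially mirrors the paper's own argument: both verify the absorption identities $T(T(x)y)=T(x)y$ and $T(xT(y))=xT(y)$ via the computation recorded in Proposition~\ref{logRBder}, and both invoke Proposition~\ref{phiproperties} for the nonrecursive expression for $\phi_-$. The only cosmetic difference is that for the recursive form of $\phi_-$ the paper emphasizes that $T$ is a derivation (writing $T(\phi_-(\Gamma')\phi(\Gamma''))=T(\phi_-(\Gamma'))\phi(\Gamma'')+\phi_-(\Gamma')T(\phi(\Gamma''))$ and killing the second term via $T(x)T(y)=0$), whereas you reach the same conclusion directly from the absorption identity $T(T(x)y)=T(x)y$; these are equivalent rearrangements of the same computation.
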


\proof The operator $T$ of pole subtraction is a derivation
on $\Omega^{\rm even}_{X}(\log(Y))$. By \eqref{recBirkhoff} we have
$\phi_+(\Gamma)= (1-T) (\phi(\Gamma) + \sum \phi_-(\Gamma') \phi(\Gamma''))$. By
Proposition \ref{logRBder} we know that, in the case of forms
with logarithmic poles along a smooth hypersurface, $1-T$ is an algebra homomorphism,
hence $\phi_+(\Gamma)= (1-T) (\phi(\Gamma)) + \sum (1-T)(\phi_-(\Gamma')) (1-T)(\phi(\Gamma'')))$,
but $\phi_-(\Gamma')$ is in the range of $T$ and, again by Proposition \ref{logRBder},
we have $T^2=T$, so that the terms in the sum all vanish, since $(1-T)(\phi_-(\Gamma')) =0$.
By \eqref{recBirkhoff} we have
$\phi_-(\Gamma)=-T (\phi(\Gamma) + \sum \phi_-(\Gamma') \phi(\Gamma'')) = -T \phi(\Gamma) - \sum T(\phi_-(\Gamma')) \phi(\Gamma'')
-\sum \phi_-(\Gamma') T(\phi(X\Gamma'))$, because by Proposition \ref{logRBder} $T$ is a derivation.
The last sum vanishes because $\phi_-(\Gamma')$ is in the range of $T$ and we have
$T(\eta)\wedge T(\xi)=0$ for all $\eta,\xi \in \Omega^*_{X}(\log(Y))$. Thus,
we are left with $\phi_-(\Gamma)=-T \phi(\Gamma) - \sum T(\phi_-(\Gamma')) \phi(\Gamma'') =-T \phi(\Gamma) - \sum
\phi_-(\Gamma') \phi(\Gamma'')$. The last part follows from Proposition \ref{phiproperties},
since $T(T(\eta)\wedge\xi)=T(\eta)\wedge\xi$.
\endproof

\smallskip

Notice that this is compatible with the property that
$\phi(\Gamma) =(\phi_-\circ S \star \phi_+)(\Gamma)$ (with the $\star$-product
dual to the Hopf algebra coproduct). In fact, this identity is equivalent
to $\phi_+=\phi_-\star \phi$, which means that $\phi_+(\Gamma)=\langle
\phi_-\otimes \phi , \Delta(\Gamma)\rangle = \phi_-(\Gamma) + \phi(\Gamma) +\sum
\phi_-(\Gamma') \phi(\Gamma'') =(1-T) \tilde\phi(\Gamma)$ as above. Equivalently, all
the nontrivial terms $\phi_-(\Gamma') \phi(\Gamma'')$ in $\tilde\phi(\Gamma)$ satisfy
$T(\phi_-(\Gamma') \phi(\Gamma''))=\phi_-(\Gamma') \phi(\Gamma'')$, because of the simplified
form \eqref{nonRB} of the Rota--Baxter identity.

\medskip

\begin{cor}\label{OneXYcor}
Suppose given a character $\phi: \cH \to \Omega^{\rm even}_X(\log(Y))$ 
of the Hopf algebra of Feynman graphs, where $X=X_\ell$ and $Y=Y_\ell$ independently 
of the number of loops $\ell\geq 1$. Then the negative part of the Birkhoff factorization
of Proposition \ref{logpolesT} has the simple form
\begin{equation}\label{phiminXY}
\phi_-(\Gamma)=- \frac{dh}{h} \wedge \left( \xi_\Gamma +
\sum_{N\geq 1} (-1)^N\sum_{\gamma_N\subset \cdots \subset \gamma_1 \subset \gamma_0=\Gamma}
\xi_{\gamma_N} \wedge \bigwedge_{j=1}^N \eta_{\gamma_{j-1}/\gamma_j} \right),
\end{equation}
where $\phi(\Gamma)=\frac{dh}{h} \wedge \xi_\Gamma + \eta_\Gamma$, and $Y=\{ h=0 \}$.
\end{cor}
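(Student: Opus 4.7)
The plan is to deduce the corollary directly from the nonrecursive closed form of $\phi_-(\Gamma)$ established in Proposition \ref{logpolesT}, by exploiting the canonical decomposition of logarithmic forms and the vanishing $dh/h\wedge dh/h=0$. Under the standing hypothesis, every $\phi(\Gamma)$ lies in the single algebra $\Omega^{\rm even}_X(\log(Y))$ with $Y=\{h=0\}$, so each $\phi(\Gamma)$ admits the canonical decomposition
$$ \phi(\Gamma) = \frac{dh}{h}\wedge \xi_\Gamma + \eta_\Gamma, $$
with $\xi_\Gamma,\eta_\Gamma$ holomorphic, and $T(\phi(\Gamma)) = \frac{dh}{h}\wedge \xi_\Gamma$. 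This is the uniform input the corollary needs.

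First I would rewrite the nonrecursive formula from Proposition \ref{logpolesT},
$$ \phi_-(\Gamma) = -T(\phi(\Gamma)) - \sum_{N\ge 1}(-1)^N \sum T(\phi(\Gamma^{(1)}))\wedge \phi(\Gamma^{(2)})\wedge\cdots\wedge \phi(\Gamma^{(N+1)}), $$
and identify the iterated reduced coproduct $\tilde\Delta^{N}(\Gamma)=\sum \Gamma^{(1)}\otimes\cdots\otimes \Gamma^{(N+1)}$ with the sum over flags of divergent subgraphs $\gamma_N\subset \cdots\subset \gamma_1\subset \gamma_0=\Gamma$. By induction on $N$, using the Connes--Kreimer coproduct $\tilde\Delta(\Gamma)=\sum_{\gamma\in\cV(\Gamma)}\gamma\otimes\Gamma/\gamma$ and coassociativity of $\tilde\Delta$, one gets $\Gamma^{(1)}=\gamma_N$ and $\Gamma^{(j)}=\gamma_{N-j+1}/\gamma_{N-j+2}$ for $j=2,\dots,N+1$.

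Second, I would substitute the canonical decomposition into each factor. The first factor contributes $T(\phi(\gamma_N)) = \frac{dh}{h}\wedge \xi_{\gamma_N}$. For each subsequent factor $\phi(\Gamma^{(j)}) = \frac{dh}{h}\wedge \xi_{\Gamma^{(j)}}+\eta_{\Gamma^{(j)}}$, the polar piece is annihilated after wedging with the leading $\frac{dh}{h}$, because $\frac{dh}{h}\wedge\frac{dh}{h}=0$ (this is exactly the property used in Lemma \ref{Pnlogforms}). Hence only the holomorphic pieces $\eta_{\gamma_{N-j+1}/\gamma_{N-j+2}}$ survive, and each summand reduces to
$$ \frac{dh}{h}\wedge \xi_{\gamma_N}\wedge \eta_{\gamma_{N-1}/\gamma_N}\wedge\eta_{\gamma_{N-2}/\gamma_{N-1}}\wedge\cdots\wedge\eta_{\gamma_0/\gamma_1}. $$

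Third, since all forms $\eta_{\gamma_{j-1}/\gamma_j}$ have even degree (we are working in $\Omega^{\rm even}_X(\log Y)$, matched to $\cH_{\rm even}$), the wedge product of these holomorphic factors is commutative, so the reversed order
$\eta_{\gamma_{N-1}/\gamma_N}\wedge\cdots\wedge\eta_{\gamma_0/\gamma_1}$
coincides with $\bigwedge_{j=1}^{N}\eta_{\gamma_{j-1}/\gamma_j}$. Factoring out the common $\frac{dh}{h}$ and including the $N=0$ term $-T(\phi(\Gamma))=-\frac{dh}{h}\wedge \xi_\Gamma$ yields precisely the expression \eqref{phiminXY}. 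The main thing to watch is the bookkeeping of the iterated coproduct indices and confirming that evenness justifies commuting the $\eta$-factors; once these are in place, the argument is a direct substitution.
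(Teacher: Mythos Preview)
Your argument is correct and essentially parallel to the paper's. The only difference is organizational: the paper starts from the \emph{recursive} expression $\phi_-(\Gamma)=-T(\phi(\Gamma))-\sum_{\gamma}\phi_-(\gamma)\phi(\Gamma/\gamma)$ of Proposition~\ref{logpolesT} and unwinds it inductively (at each step the factor $\phi_-(\gamma)$ carries a leading $\frac{dh}{h}$, which annihilates the polar part of $\phi(\Gamma/\gamma)$ and leaves only $\eta_{\Gamma/\gamma}$), whereas you invoke the already-unwound nonrecursive formula from the same proposition and substitute directly. Both routes rest on the same two inputs---the structure of the iterated Connes--Kreimer coproduct as a sum over flags $\gamma_N\subset\cdots\subset\gamma_1\subset\Gamma$ and the vanishing $\frac{dh}{h}\wedge\frac{dh}{h}=0$---so the content is the same; your version merely avoids redoing the induction that produced the nonrecursive formula in the first place.
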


\proof The result follows from the expression
$$ \phi_-(\Gamma) = -T (\phi(\Gamma)) - \sum_{\gamma \subset \Gamma} \phi_-(\gamma) \phi(\Gamma/\gamma), $$
obtained in Proposition \ref{logpolesT}, where $\phi(\Gamma)=\omega_\Gamma = \frac{dh}{h} \wedge \xi_\Gamma + \eta_\Gamma$, so that $T(\phi(\Gamma))=\frac{dh}{h} \wedge \xi_\Gamma$ and $\phi(\Gamma/\gamma) = \frac{dh}{h} \wedge \xi_{\Gamma/\gamma} + \eta_{\Gamma/\gamma}$. The wedge product of
$\phi_-(\gamma) = -T(\phi(\gamma)) - \sum_{\gamma_2\subset \gamma} \phi_-(\gamma_2) \phi(\gamma/\gamma_2)$
with $\phi(\Gamma/\gamma)$ will give a term $\frac{dh}{h} \wedge \xi_{\gamma} \wedge \eta_{\Gamma/\gamma}$
and additional terms $\phi_-(\gamma_2) \phi(\gamma/\gamma_2) \wedge \eta_{\Gamma/\gamma}$. Proceeding
inductively on these terms, one obtains \eqref{phiminXY}.
\endproof

\medskip

\begin{rem}\label{notoneXY}{\rm 
In the geometric construction we consider here, one does not have a single pair $(X,Y)$ for
all loop numbers. Instead, we consider a more general situation, where $X_\ell$ and $Y_\ell$ 
depend on the loop number $\ell \geq 1$. In this case, the form of the negative piece 
$\phi_-(\Gamma)$ is more complicated than in Corollary \ref{OneXYcor}, as it contains forms
on the products $X_{\ell(\gamma)}\times X_{\ell(\Gamma/\gamma)}$ with logarithmic poles
along $Y_{\ell(\gamma)}\times X_{\ell(\Gamma/\gamma)} \cup X_{\ell(\gamma)}\times Y_{\ell(\Gamma/\gamma)}$. However, the general form of the expression is similar, only more cumbersome to
write explicitly.}
\end{rem}

\smallskip
\subsection{Polar subtraction and the residue}\label{ResSec}

We have seen that, in the case of a smooth hypersurface $Y\subset X$,
the Birkhoff factorization in the algebra of forms with logarithmic poles
reduces to a simple pole subtraction, $\phi_+(\Gamma)=(1-T)\phi(\Gamma)$. If
the unrenormalized $\phi(\Gamma)$ is a form written as $\alpha + \frac{df}{f}\wedge \beta$,
with $\alpha$ and $\beta$ holomorphic, then $\phi_+(\Gamma)$ vanishes identically
whenever $\alpha=0$. In that case, all information about $\phi(\Gamma)$ is lost in
the process of pole substraction. Suppose that $\int_\sigma \phi(\Gamma)$ is the
original unrenormalized integral. To maintain some additional information, it
is preferable to consider, in addition to the integral $\int_\sigma \phi_+(\Gamma)$,
also an integral of the form
$$ \int_{\sigma \cap Y} {\rm Res}_Y(\eta), $$
where ${\rm Res}_Y(\eta)=\beta$ is the Poincar\'e residue of
$\eta=\alpha + \frac{df}{f}\wedge \beta$ along $Y$. It is dual to
the Leray coboundary, in the sense that
$$ \int_{\sigma\cap Y} {\rm Res}_Y(\eta) =\frac{1}{2\pi i} \int_{\cL(\sigma\cap Y)} \eta, $$
where the Leray coboundary $\cL(\sigma\cap Y)$ is a circle bundle over
$\sigma\cap Y$. In this way, even when $\alpha=0$, one can still retain the
nontrivial information coming from the Poincar\'e residue, which is
also expressed as a period.

\smallskip
\section{Singular hypersurfaces and meromorphic forms}\label{SingHypSec}

In our main application, we will need to work with pairs $(X,Y)$ where
$X$ is smooth projective, but the hypersurface $Y$ is singular. Thus,
we now discuss extensions of the results above to more general
situations where $Y\subset X$ is a singular hypersurface in a
smooth projective variety $X$.

\smallskip

Again we denote by $\cM^*_{X,Y}$ the sheaf of meromorphic differential forms on $X$
with poles along $Y$, of arbitrary order, and by $\Omega_X^*(\log(Y))$ the sub-sheaf
of forms with logarithmic poles along $Y$.
Let $h$ be a local determination of $Y$, so that $Y=\{ h=0 \}$. We can then locally
represent forms $\omega \in \cM^*_{X,Y}$ as finite sums
$\omega =\sum_{p\geq 0} \omega_p / h^p$, with the $\omega_p$ holomorphic.
The polar part operator $T: \cM^{\rm even}_{X,Y} \to \cM^{\rm even}_{X,Y}$
can then be defined as in \eqref{Tmeromf}.

\smallskip

In the case we considered in the previous section, with $Y\subset X$
a smooth hypersurface, forms with logarithmic poles
can be represented as
\begin{equation}\label{YsmoothRes}
 \omega = \frac{dh}{h} \wedge \xi +\eta,
\end{equation}
with $\xi$ and $\eta$ holomorphic. The Leray residue
is given by ${\rm Res}(\omega)=\xi$. It is well defined, as the restriction of $\xi$
to $Y$ is independent of the choice of a local equation for $Y$.

\smallskip

In the next subsection we discuss how the case of a smooth hypersurface 
generalizes to the case of a normal crossings divisor $Y\subset X$ inside 
a smooth projective variety $X$. The normal crossings divisor is a particularly
nice case of a larger class of singular hypersurfaces.  
The complex of forms with logarithmic poles extends from the smooth hypersurface case
to the normal crossings divisor case as in \cite{Del}. 
For more general singular hypersurfaces, an appropriate notion of
forms with logarithmic poles was introduced by Saito in \cite{Sa}.
The construction of the residue was also generalized from the smooth hypersurface
case to the case where $Y$ is a normal crossings divisor in \cite{Del} and to more 
general singular hypersurfaces in \cite{Sa}.

\subsection{Normal crossings divisors}

The main case of singular hypersurfaces that we focus on for our applications
will be simple normal crossings divisors. In fact, while our formulation of the Feynman
amplitude in momentum space is based on the formulation of \cite{AluMa},
where the unrenormalized Feynman integral lives on the complement of the
determinant hypersurface, which has worse singularities, we will reformulate
the integral on the Kausz compactification of $\GL_n$ where the boundary
divisor of the compactification is normal crossings.

\smallskip

If $Y\subset X$ is a simple normal crossings divisor in a smooth projective variety, with
$Y_j$ the components of $Y$, with local equations $Y_j=\{ f_j=0 \}$,
the complex of forms with logarithmic poles
$\Omega^*_X(\log(Y))$ is spanned by the forms $\frac{df_j}{f_j}$ and
by the holomorphic forms on $X$.

\smallskip

As in Theorem 6.3 of \cite{CeMa}, we obtain that
the Rota--Baxter operator of polar projection
$T : \cM^{\rm even}_{X,Y}\to \cM^{\rm even}_{X,Y}$ restricts to a
Rota--Baxter operator $T: \Omega^{\rm even}_X(\log(Y))
\to \Omega^{\rm even}_X(\log(Y))$ given by
\begin{equation}\label{RBlogDiv}
T: \eta \mapsto T(\eta)= \sum_j \frac{df_j}{f_j} \wedge {\rm Res}_{Y_j}(\eta),
\end{equation}
where the holomorphic form
${\rm Res}_{Y_j}(\eta)$ is the Poincar\'e residue of $\eta$ restricted to $Y_j$.

\smallskip

Unlike the case of a single smooth hypersurface, for a simple normal
crossings divisor the Rota--Baxter operator operator $T$ does not
satisfy $T(x)T(y)\equiv 0$, since we now have terms like
$\frac{df_j}{f_j} \wedge \frac{df_k}{f_k}\neq 0$, for $j\neq k$, so the
Rota--Baxter identity for $T$ does not reduce to a derivation, but some
of the properties that simplify the Birkhoff factorization in the case of
a smooth hypersurface still hold in this case.

\begin{prop}\label{Tlogid}
The Rota--Baxter operator $T$ of \eqref{RBlogDiv} satisfies $T^2=T$ and
the Rota--Baxter identity simplifies to the form
\begin{equation}\label{simplRB}
T(\eta \wedge \xi) = T(\eta) \wedge \xi + \eta \wedge T(\xi) - T(\eta)\wedge T(\xi).
\end{equation}
The operator $(1-T): \cR \to \cR_+$ is an algebra homomorphism, with
$\cR=\Omega^{\rm even}_X(\log(Y))$ and $\cR_+=(1-T)\cR$. The
Birkhoff factorization of a commutative algebra homomorphism $\phi:\cH \to \cR$,
with $\cH$ a commutative Hopf algebra, is given by
\begin{equation}\label{BirkDivLog}
\begin{array}{l}
\phi_+(\Gamma)= (1-T) \phi(\Gamma) \\
\phi_-(\Gamma) =-T (\phi(\Gamma) + \sum \phi_-(\Gamma') \phi(\Gamma'')) .
\end{array}
\end{equation}
Moreover, $\phi_{-}$ takes the following form on
${\rm ker}(e)=\oplus_{n>0}\cH_n$:
\begin{eqnarray*}
\phi_{-}(\Gamma)&=&-T(\phi(\Gamma))-\sum_{n=1}^\infty(-1)^n\sum T(\phi(\Gamma^{(1)}))\phi(\Gamma^{(2)})\phi(\Gamma^{(3)})\cdots \phi(\Gamma^{(n+1)})\\
&=&-T(\phi(\Gamma))-\sum_{n=1}^\infty(-1)^n((T\phi)\tilde{\ast}\phi^{\tilde{\ast}^n})(\Gamma).
\end{eqnarray*}
\end{prop}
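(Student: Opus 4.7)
The plan is to exploit the canonical decomposition of a form with logarithmic poles along a simple normal crossings divisor into a holomorphic part and a genuine polar part. Concretely, any $\eta \in \Omega^{\rm even}_X(\log Y)$ can be written uniquely as $\eta = \eta_h + T(\eta)$, where $\eta_h \in \Omega^{\rm even}_X$ is holomorphic and $T(\eta)$ is a sum of terms each containing at least one factor $df_j/f_j$; this is just the statement that $\Omega^{\rm even}_X(\log Y)$ is generated over the holomorphic forms by the $df_j/f_j$. From this decomposition $T^2 = T$ is immediate, since $T$ kills $\eta_h$ and acts as the identity on any expression all of whose terms already involve some $df_j/f_j$. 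The same observation gives the auxiliary identities $T(T(x)\wedge y) = T(x)\wedge y$ and $T(x\wedge T(y)) = x\wedge T(y)$, because wedging with $T(x)$ forces every resulting term to contain a logarithmic pole.

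For the simplified Rota--Baxter identity, I would expand $\eta\wedge\xi$ using $\eta = \eta_h + T(\eta)$ and $\xi = \xi_h + T(\xi)$. Of the four resulting terms only $\eta_h\wedge\xi_h$ is purely holomorphic; the other three lie in the polar part. Hence
\begin{equation*}
T(\eta\wedge\xi) = \eta_h\wedge T(\xi) + T(\eta)\wedge\xi_h + T(\eta)\wedge T(\xi),
\end{equation*}
and substituting $\eta_h = \eta - T(\eta)$ and $\xi_h = \xi - T(\xi)$ gives exactly \eqref{simplRB}. The fact that $(1-T)$ is an algebra homomorphism then follows either directly from $(1-T)(\eta\wedge\xi) = \eta_h\wedge\xi_h = (1-T)\eta\wedge(1-T)\xi$, or as a formal consequence of \eqref{simplRB} by expanding $(1-T)\eta\wedge(1-T)\xi$.

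For the Birkhoff factorization I would feed these facts into the general recursive formula \eqref{recBirkhoff}. Writing $\phi_+(\Gamma) = (1-T)\bigl(\phi(\Gamma) + \sum \phi_-(\Gamma')\phi(\Gamma'')\bigr)$, multiplicativity of $1-T$ together with $(1-T)\,T = 0$ (which is $T^2 = T$) annihilates every cross term, because each $\phi_-(\Gamma')$ lies in the image of $T$; this yields $\phi_+(\Gamma) = (1-T)\phi(\Gamma)$. The formula for $\phi_-(\Gamma)$ is then read off directly from \eqref{recBirkhoff}. The nonrecursive closed form for $\phi_-$ follows by invoking Proposition \ref{phiproperties}, whose hypotheses $T^2 = T$ and $T(T(x)y) = T(x)y$ were verified in the first step. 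The only place where the passage from a smooth hypersurface to a normal crossings divisor has real content is that one no longer has $T(x)\wedge T(y) \equiv 0$, which is precisely why the simplified identity \eqref{simplRB} retains the extra $-T(\eta)\wedge T(\xi)$ correction and why $\phi_-$ does not collapse to the simple form of Corollary \ref{OneXYcor}; this is the subtle point to keep track of but does not pose a genuine obstacle to the argument.
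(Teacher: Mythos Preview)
Your proposal is correct and follows essentially the same approach as the paper. The paper likewise establishes the auxiliary identities $T(T(\eta)\wedge\xi)=T(\eta)\wedge\xi$ and $T(\eta\wedge T(\xi))=\eta\wedge T(\xi)$, deduces \eqref{simplRB} and the multiplicativity of $1-T$ from them, kills the cross terms in $\phi_+$ via $(1-T)T=0$, and invokes Proposition~\ref{phiproperties} for the nonrecursive form of $\phi_-$; the only cosmetic difference is that you read off \eqref{simplRB} directly from the holomorphic/polar splitting $\eta=\eta_h+T(\eta)$, whereas the paper substitutes the auxiliary identities into the weight $-1$ Rota--Baxter relation.
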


\proof The argument is the same as in the proof of Theorem 6.3 in \cite{CeMa}.
It is clear by construction that $T$ is idempotent and the simplified form \eqref{simplRB}
of the Rota--Baxter identity follows by observing that
$T(T(\eta)\wedge \xi)=T(\eta)\wedge \xi$ and $T(\eta \wedge T(\xi))=\eta\wedge T(\xi)$
as in Theorem 6.3 in \cite{CeMa}. Then one sees that $$(1-T)(\eta)\wedge
(1-T)(\xi)= \eta\wedge \xi -T(\eta)\wedge \xi - \eta \wedge T(\xi) +T(\eta)\wedge T(\xi)=
\eta\wedge \xi - T(\eta \wedge \xi)$$ by \eqref{simplRB}.
Consider then the Birkhoff factorization. We write $\tilde\phi(\Gamma):=
\phi(\Gamma) + \sum \phi_-(\Gamma') \phi(\Gamma'')$. The fact that $(1-T)$ is an
algebra homomorphism then gives
$$\phi_+(\Gamma)= (1-T)(\tilde\phi(\Gamma))= (1-T) (\phi(\Gamma) + \sum \phi_-(\Gamma') \phi(\Gamma''))
$$ $$ = (1-T)(\phi(\Gamma)) + \sum (1-T)(\phi_-(\Gamma')) (1-T)(\phi(\Gamma''))),$$ with
$(1-T)(\phi_-(\Gamma'))=-(1-T)T(\tilde\phi_-(\Gamma')) =0$, because $T$ is idempotent.
The last statement again follows from Proposition \ref{phiproperties}, since
we have $T(T(\eta)\wedge\xi)=T(\eta)\wedge\xi$.
\endproof

\smallskip

\subsection{Multidimensional residues}\label{multiresSec}

In the case of a simple normal crossings divisor $Y \subset X$,
we can proceed as discussed in Section \ref{ResSec}  for the
case of a smooth hypersurface. Indeed, as we have seen in
Proposition \ref{Tlogid}, we also have in this case a simple
pole subtraction $\phi_+(\Gamma)=(1-T)\phi(\Gamma)$, even though the
negative term $\phi_-(\Gamma)$ of the Birkhoff factorization can
now be more complicated than in the case of a smooth
hypersurface.

The unrenormalized $\phi(\Gamma)$ is a form $\eta =\alpha + \sum_j \frac{df_j}{f_j} \wedge \beta_j$,
with $\alpha$ and $\beta_j$ holomorphic and $Y_j=\{ f_j=0 \}$ the components of $Y$.
Again, if $\alpha=0$ we loose all information about $\phi(\Gamma)$ in our renormalization of the
logarithmic form. To avoid this problem, we can again consider, instead of the
single renormalized integral $\int_\sigma \phi_+(\Gamma)$, an additional family of integrals
$$ \int_{\sigma \cap Y_I} {\rm Res}_{Y_I}(\eta), $$
where $Y_I=\cap_{j\in I} Y_j$ is an intersection of components of the divisor $Y$
and ${\rm Res}_{Y_I}(\eta)$ is the iterated (or multidimensional, or higher)
Poincar\'e residue of $\eta$, in the sense of \cite{Del}. These are dual to
the iterated Leray coboundaries,
$$ \int_{\sigma \cap Y_I} {\rm Res}_{Y_I}(\eta) =\frac{1}{(2\pi i)^n} \int_{\cL_I (\sigma\cap Y_I)} \eta, $$
where $\cL_I= \cL_{j_i}\circ \cdots \circ \cL_{j_n}$ for $Y_I=Y_{j_1}\cap \cdots \cap Y_{j_n}$.

If arbitrary intersections $Y_I$ of components of $Y$ are all mixed Tate motives,
then all these integrals are also periods of mixed Tate motives.

\smallskip
\subsection{Saito's logarithmic forms}

Given a singular reduced hypersurface $Y\subset X$, a differential form $\omega$
with logarithmic poles along $Y$, in the sense of Saito \cite{Sa}, can always be written
in the form (\cite{Sa}, (1.1))
\begin{equation}\label{YsingRes}
f\, \omega = \frac{dh}{h} \wedge \xi + \eta,
\end{equation}
where $f \in \cO_X$ defines a hypersurface $V=\{ f=0 \}$ with $\dim(Y\cap V)\leq \dim(X)-2$,
and with $\xi$ and $\eta$ holomorphic forms.

\smallskip

In the following, we use the notation $^{S}\Omega^\star_X(\log(Y))$ to denote the
forms with logarithmic poles along $Y$ in the sense of Saito, to distinguish it from
the more restrictive notion of forms with logarithmic poles $\Omega^\star_X(\log(Y))$
considered above for the normal crossings case.

\smallskip

Following \cite{Ale2}, we say that a (reduced) hypersurface $Y\subset X$ has {\em Saito singularities}
if the modules of logarithmic differential forms and vector fields along $Y$ are free.
The condition that $Y\subset X$ has Saito singularities is equivalent to the condition
that $^{S}\Omega^n_X(\log(Y))= \bigwedge^n\,\, {}^{S}\Omega^1_X(\log(Y))$, \cite{Sa}.

\smallskip

Let $Y$ be a hypersurface with Saito singularities and
let $\cM_Y$ denote the sheaf of germs of meromorphic functions on $Y$. Then
setting
\begin{equation}\label{Resfxi}
{\rm Res}(\omega)=\frac{1}{f}\, \xi  \, |_Y
\end{equation}
defines the residue as a morphism of $\cO_X$-modules, for all $q\geq 1$,
\begin{equation}\label{Reshom}
 {\rm Res}: {}^{S}\Omega^q_X(\log(Y)) \to \cM_Y \otimes_{\cO_Y} \Omega^{q-1}_Y.
\end{equation}

\smallskip

Unlike the case of smooth hypersurfaces and normal crossings divisors, in the
case of more general hypersurfaces with Saito singularities, the Saito 
residue of forms with logarithmic poles is not a holomorphic form, but only
a {\em meromorphic} form on $Y$.

\smallskip

For $Y\subset X$ a reduced hypersurface that is quasihomogeneous with
Saito singularities, a refinement of \eqref{Reshom}, which we view as the exact sequence
$$ 0 \to \Omega_X^q \to {}^{S}\Omega^q_X(\log(Y)) \stackrel{{\rm Res}}{\to} \cM_Y \otimes_{\cO_Y} \Omega^{q-1}_Y, $$
is given in \cite{Ale2}, where the image of the Saito Poincar\'e residue is more
precisely identified as ${\rm Res}{}^S \Omega^q_X(\log (Y))\simeq \omega^{q-1}_Y$,
where $\omega_Y^\bullet$ denotes the module of regular meromorphic differential forms
in the sense of \cite{Barlet}, with $\omega_Y^\bullet \subset j_* j^* \Omega_Y^\bullet$, where 
$j:S\hookrightarrow Y$ is the
inclusion of the singular locus. 
Namely, it is shown in \cite{Ale2} that one has, for all $q\geq 2$, an exact sequence
of $\cO_X$-modules
\begin{equation}\label{resseq}
0 \to \Omega_X^q \to {}^S \Omega_X^q(\log(Y))
\stackrel{{\rm Res}}{\longrightarrow} \omega_Y^{q-1} \to 0.
\end{equation}

\smallskip

It is natural to ask whether the extraction of polar part from
forms with logarithmic poles that we considered here
for the case of smooth hypersurfaces and normal crossings divisors
extends to more general singular hypersurfaces using Saito's
formulation.

\smallskip

\begin{ques}\label{Saitoquestion}
Is there a Rota--Baxter operator $T$ expressed in terms of the
Saito residue, in the case of a singular hypersurfaces $Y\subset X$ with Saito singularities?
\end{ques}

\medskip

We describe here a
possible approach to this question. We introduce an analog of the Rota--Baxter operator
considered above, given by the extraction of the polar part. The ``polar part" operator,
in this more general case,  
does not map $\Omega_X^{{\rm even}}(\log(Y))$ to itself, but we show below that it
gives a well defined Rota-Baxter operator of weight $-1$ on the space of Saito
forms $^{S}\Omega_X^{{\rm even}}(\log(Y))$, and that this operator is a derivation.

\smallskip

\begin{lem}\label{SYloclem}
The set $S_Y:=\{f\,|\, {\rm dim}(\{f=0\}\cap Y)\leq {\rm dim}(X)-2\}$
is a multiplicative set. Localization of the Saito forms with logarithmic poles gives
$S_Y^{-1}\, {}^{S}\Omega_X(\log(Y))={}^{S}\Omega_X(\log(Y))$.
\end{lem}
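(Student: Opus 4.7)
The plan is to verify the two assertions in order, as they cleanly split into (i) a set-theoretic/dimension computation and (ii) a direct manipulation using Saito's definition of logarithmic forms.

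For the multiplicativity of $S_Y$, the key point is that $\{fg=0\} = \{f=0\}\cup\{g=0\}$ as subsets of $X$, so
\[
\{fg=0\}\cap Y \;=\; (\{f=0\}\cap Y)\cup(\{g=0\}\cap Y),
\]
and since the dimension of a union is the maximum of the dimensions of its components, the codimension bound $\leq \dim(X)-2$ is preserved under products. I would also note that $1\in S_Y$ trivially (empty intersection), giving the unit, and that each $f\in S_Y$ is nonzero in $\cO_{X,p}$ for every $p\in X$ (since smoothness makes these stalks integral domains, and $\dim\{f=0\}\cap Y<\dim Y$ forces $f$ to be nonzero on each local component of $X$ through a point of $Y$), so the localization is well defined.

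For the second assertion, the inclusion $^{S}\Omega_X(\log Y)\subseteq S_Y^{-1}\,{}^{S}\Omega_X(\log Y)$ is tautological via $\omega\mapsto\omega/1$. For the reverse inclusion, I would take a representative $\omega/f$ with $\omega\in{}^{S}\Omega_X(\log Y)$ and $f\in S_Y$. By Saito's defining property \eqref{YsingRes}, there exists $f_0\in S_Y$ and holomorphic forms $\xi,\eta$ with
\[
 f_0\,\omega \;=\; \frac{dh}{h}\wedge\xi+\eta.
\]
Multiplying the putative quotient $\omega/f$ by $f\cdot f_0$ gives
\[
 (ff_0)\cdot(\omega/f) \;=\; f_0\,\omega \;=\; \frac{dh}{h}\wedge\xi+\eta,
\]
which is again in the canonical form of \eqref{YsingRes}. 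By the first part, $ff_0\in S_Y$, so $\omega/f$ satisfies Saito's condition and hence lies in $^{S}\Omega_X(\log Y)$.

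The main (minor) obstacle is just bookkeeping: making sure the product $ff_0$ is in $S_Y$ rather than merely having some nontrivial intersection with $Y$, which is exactly what the first part supplies, and being careful that $f$ is a non-zero-divisor so that the localization $\omega/f$ is unambiguous. Once multiplicativity is in hand, the second statement is essentially a one-line reduction back to the defining formula \eqref{YsingRes}.
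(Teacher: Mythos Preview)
Your proposal is correct and follows essentially the same argument as the paper: the multiplicativity of $S_Y$ is shown via $\{fg=0\}=\{f=0\}\cup\{g=0\}$ together with $1\in S_Y$, and the localization statement is proved by taking $\omega/f$, writing $f_0\,\omega=\frac{dh}{h}\wedge\xi+\eta$ via Saito's definition, and observing that $ff_0\in S_Y$ puts $\omega/f$ back in ${}^{S}\Omega_X(\log Y)$. Your additional remark on non-zero-divisors is a small extra care not present in the paper, but the core route is the same.
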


\proof
We have $V_{12}=\{f_1f_2=0\}=\{f_1=0\}\cup\{f_2=0\}$ and $${\rm dim}(Y\cap V_{12})={\rm dim}((Y\cap\{f_1=0\})\cup (Y\cap\{f_2=0\}))\leq{\rm dim}(X)-2,$$
since ${\rm dim}(Y\cap\{f_i=0\})\leq{\rm dim}(X)-2$ for $i=1,2$. Thus, for any $f_1,f_2\in S_Y$, we have $f_1f_2\in S_Y$.
Moreover, we have $1\in S_Y$, hence $S_Y$ is a multiplicative set. The localization of ${}^{S}\Omega^\star_X(\log(Y))$
at $S_Y$ is just ${}^{S}\Omega^\star_X(\log(Y))$ itself: in fact, for
$\tilde{f}^{-1}\omega\in S_Y^{-1}\, {}^{S}\Omega^\star_X(\log(Y))$,
with $\tilde{f}\in S_Y$ and $\omega\in {}^{S}\Omega^\star_X(\log(Y))$, expressed as in \eqref{YsingRes},
we have
$$ f\tilde{f}(\tilde{f}^{-1}\omega)=f\omega=\frac{dh}{h}\wedge\xi+\eta,$$
where $f\tilde{f}\in S_Y$, hence $\tilde{f}^{-1}\omega\in {}^{S}\Omega_X(\log(Y))$.
 \endproof

\smallskip

Given a form $\omega\in{}^{S}\Omega^\star_X(\log(Y))$,
which we can write as in \eqref{YsingRes}, the residue \eqref{Resfxi} is the image under
the restriction map $S_Y^{-1}\Omega^\star_X \to S_Y^{-1}\Omega^\star_Y$ of the form $f^{-1} \xi
\in S_Y^{-1}\Omega^\star_X$. Moreover, we have an inclusion $\Omega^\star_X \hookrightarrow
 {}^{S}\Omega^\star_X(\log(Y))$, which induces a map of the localizations
 $S_Y^{-1} \Omega^\star_X \hookrightarrow S_Y^{-1}\, {}^{S}\Omega^\star_X(\log(Y))={}^{S}\Omega^\star_X(\log(Y))$.
We can then define a linear operator
$$ T: {}^{S}\Omega^\star_X(\log(Y)) \to {}^{S}\Omega^\star_X(\log(Y))\wedge S_Y^{-1}\, \Omega^\star_X \hookrightarrow
{}^{S}\Omega^\star_X(\log(Y))\wedge S_Y^{-1}\, {}^{S}\Omega^\star_X(\log(Y))={}^{S}\Omega^\star_X(\log(Y))$$
given by
\begin{equation}\label{SaitoT}
T(\omega)=\frac{dh}{h} \wedge \frac{\xi}{f}, \ \ \
\text{ for } \ \  f\, \omega = \frac{dh}{h} \wedge \xi + \eta .
\end{equation}

\smallskip

\begin{lem}\label{DerTSaito}
The operator $T$ of \eqref{SaitoT} is a Rota--Baxter operator of weight $-1$ on ${}^{S}\Omega^{\rm even}_X(\log(Y))$,
which is just given by a derivation, satisfying
the Leibnitz rule $T(\omega_1 \wedge \omega_2)= T(\omega_1) \wedge \omega_2 + \omega_1 \wedge T(\omega_2)$.
\end{lem}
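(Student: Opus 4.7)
The plan is to verify the Leibniz rule directly in the canonical Saito presentation and then to deduce the weight $-1$ Rota--Baxter identity essentially for free, via Lemma \ref{derRB}. Fix local expressions $f_i\,\omega_i=\frac{dh}{h}\wedge \xi_i+\eta_i$ for $i=1,2$, with $f_i\in S_Y$ and $\xi_i,\eta_i$ holomorphic, so that $T(\omega_i)=\frac{dh}{h}\wedge\frac{\xi_i}{f_i}$. Since the $\omega_i$ are even, the $\eta_i$ are even and the $\xi_i$ are odd. Expanding
\begin{equation*}
f_1 f_2\,\omega_1\wedge \omega_2=\bigl(\tfrac{dh}{h}\wedge \xi_1+\eta_1\bigr)\wedge\bigl(\tfrac{dh}{h}\wedge \xi_2+\eta_2\bigr),
\end{equation*}
the cross-term $\tfrac{dh}{h}\wedge \xi_1\wedge \tfrac{dh}{h}\wedge \xi_2$ vanishes, since moving one copy of $\tfrac{dh}{h}$ past the odd-degree $\xi_1$ produces $\pm\tfrac{dh}{h}\wedge\tfrac{dh}{h}\wedge(\cdots)=0$. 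The remaining three terms constitute a Saito presentation of $\omega_1\wedge\omega_2$ with denominator $f_1 f_2\in S_Y$, so the polar part is
\begin{equation*}
T(\omega_1\wedge \omega_2)=\frac{1}{f_1 f_2}\,\frac{dh}{h}\wedge\bigl(\xi_1\wedge \eta_2+\eta_1\wedge \xi_2\bigr).
\end{equation*}

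Next I would compute $T(\omega_1)\wedge \omega_2$ and $\omega_1\wedge T(\omega_2)$ independently. In both expansions the $\tfrac{dh}{h}\wedge \tfrac{dh}{h}$ term again dies, leaving a single contribution; summing them reproduces the display above, which is the Leibniz rule $T(\omega_1\wedge \omega_2)=T(\omega_1)\wedge \omega_2+\omega_1\wedge T(\omega_2)$. The same vanishing yields $T(\omega_1)\wedge T(\omega_2)=0$. Moreover, any form of the shape $\tfrac{dh}{h}\wedge \zeta$ coincides with its own polar part, so $T$ is idempotent on its image; in particular $T(T(\omega_1)\wedge \omega_2)=T(\omega_1)\wedge \omega_2$ and $T(\omega_1\wedge T(\omega_2))=\omega_1\wedge T(\omega_2)$. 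These three properties are precisely the hypotheses of Lemma \ref{derRB}, so $1-T$ is an algebra homomorphism and $T$ is a derivation; equivalently, the weight $-1$ identity \eqref{RBmin1} reduces via $T(\omega_1)\wedge T(\omega_2)=0$ to the tautology just established.

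The one delicate point is well-definedness of $T$. The Saito presentation $f\omega=\frac{dh}{h}\wedge \xi+\eta$ is not unique, and unlike the classical Saito residue $\tfrac{\xi}{f}|_Y$, the element $T(\omega)=\tfrac{dh}{h}\wedge\tfrac{\xi}{f}\in S_Y^{-1}\Omega_X^\star$ involves $\tfrac{\xi}{f}$ globally, not just its restriction to $Y$. This is the main obstacle, and I would resolve it by comparing two presentations $(f,\xi,\eta)$ and $(f',\xi',\eta')$ of the same $\omega$, multiplying through by $ff'$, and checking that the resulting ambiguity $\tfrac{dh}{h}\wedge\bigl(\tfrac{\xi}{f}-\tfrac{\xi'}{f'}\bigr)$ vanishes in $^S\Omega_X^{\rm even}(\log Y)$ under the inclusion $S_Y^{-1}\Omega_X^\star\hookrightarrow\,^S\Omega_X^\star(\log Y)$ that appears in the definition of $T$, using the freeness of the module of Saito logarithmic forms \cite{Sa,Ale2}. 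Once this is settled, the rest of the argument is purely formal.
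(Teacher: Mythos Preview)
Your argument is essentially the same as the paper's: both expand $f_1f_2\,\omega_1\wedge\omega_2$, kill the $\tfrac{dh}{h}\wedge\tfrac{dh}{h}$ term, read off the polar part, compute $T(\omega_1)\wedge\omega_2$ and $\omega_1\wedge T(\omega_2)$ directly, verify $T(\omega_1)\wedge T(\omega_2)=0$ and $T(T(\omega_1)\wedge\omega_2)=T(\omega_1)\wedge\omega_2$, and conclude that the Leibniz rule is equivalent to the weight $-1$ Rota--Baxter identity. Your extra paragraph on well-definedness of $T$ goes beyond what the paper proves in this lemma (the paper simply posits the operator in \eqref{SaitoT} and does not revisit the ambiguity in the Saito presentation), so that portion is additional care on your part rather than a divergence in strategy.
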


\proof Let
\begin{equation*}
f_1\, \omega_1 = \frac{dh}{h} \wedge \xi_1 + \eta_1\quad f_2\, \omega_2 = \frac{dh}{h} \wedge \xi_2 + \eta_2.
\end{equation*}
Then
$$f_1\, f_2\, \omega_1 \wedge \omega_2 = (\frac{dh}{h} \wedge \xi_1 + \eta_1) \wedge (\frac{dh}{h} \wedge \xi_2 + \eta_2) = \frac{dh}{h} \wedge
(\xi_1 \wedge \eta_2 + (-1)^p\eta_1\wedge\xi_2)+ \eta_1\wedge\eta_2,$$
where $\eta_1\in\Omega^p(X)$.
By Lemma \ref{SYloclem}, we know that $f_1 f_2 \in S_Y$. We have
$$T(\omega_1\wedge\omega_2)=\frac{dh}{h}\wedge(\frac{\xi_1}{f_1}\wedge \frac{\eta_2}{f_2}+(-1)^p\frac{\eta_1}{f_1}\wedge \frac{\xi_2}{f_2}).$$
Since
$$T(\omega_1)=\frac{dh}{h}\wedge\frac{\xi_1}{f_1},\quad {\rm and}\quad T(\omega_2)=\frac{dh}{h}\wedge\frac{\xi_2}{f_2},$$
we obtain
$$T(\omega_1)\wedge T(\omega_2)=\frac{dh}{h}\wedge\frac{\xi_1}{f_1}\wedge\frac{dh}{h}\wedge\frac{\xi_2}{f_2}=0. $$
Moreover, we have
$$T(\omega_1)\wedge\omega_2= (\frac{dh}{h}\wedge\frac{\xi_1}{f_1})\wedge\frac{dh}{h}\wedge\frac{\xi_2}{f_2}+\frac{dh}{h}\wedge\frac{\xi_1}{f_1}\wedge\frac{\eta_2}{f_2}
=\frac{dh}{h}\wedge\frac{\xi_1}{f_1}\wedge\frac{\eta_2}{f_2},$$
with $$f_1f_2(T(\omega_1)\wedge\omega_2)=\frac{dh}{h}\wedge\xi_1\wedge\eta_2,$$
and similarly, $$\omega_1\wedge T(\omega_2)=(-1)^p\frac{dh}{h}\wedge\frac{\eta_1}{f_1}\wedge \frac{\xi_2}{f_2},$$
hence $T$ satisfies the Leibnitz rule. The operator $T$ also satisfies
$T(T(\omega_1)\wedge\omega_2)=T(\omega_1)\wedge\omega_2$, and $T(\omega_1\wedge T(\omega_2))=\omega_1\wedge T(\omega_2)$, hence the condition that $T$ is a derivation is equivalent to the condition that it is a
Rota-Baxter operator of weight $-1$.
\endproof

\medskip

Correspondingly, we have
$$ (1-T)\omega = \omega - \frac{dh}{h}\wedge \frac{\xi}{f} =\frac{\eta}{f}   \in S_Y^{-1} \, \Omega_X^{\rm even}. $$
Under the restriction map $S_Y^{-1} \, \Omega_X^{\rm even} \to S_Y^{-1} \, \Omega_Y^{\rm even}$ we obtain
a form $(1-T)(\omega)|_Y$. It follows that we can define a ``subtraction of divergences"
operation on $\phi: \cH \to {}^{S}\Omega_X^{{\rm even}}(\log(Y))$
by taking $\phi_+: \cH \to {}^{S}\Omega^{\rm even}_X(\log(Y))$ given by $\phi_+(a)=(1-T) \phi(a)|_Y$,
for $a\in \cH$, which maps
$\phi(a)=\omega$ to $(1-T)\omega|_Y = f^{-1} \eta|_Y$,
where $f\, \omega = \frac{dh}{h}\wedge \xi + \eta$.
While this has subtracted the logarithmic pole along $Y$, it has also created
a new pole along $V=\{ f=0 \}$.
Thus, it results again in a meromorphic form. If we consider the restriction to
$Y$ of $\phi_+(a)= f^{-1}\, \eta|_Y$,
we obtain a meromorphic form with first order poles along a subvariety $V\cap Y$, which is by hypothesis of codimension at least one in $Y$. Thus, we can conceive of a more complicated renormalization method that progressively subtracts poles on subvarieties of increasing
codimension, inside the polar locus of the previous pole subtraction, by iterating this procedure.
A more detailed account of this iterative procedure and of possible applications to the 
setting of renormalization will be discussed elsewhere.

\section{Compactifications of $\GL_n$ and momentum space Feynman integrals}\label{KGLsec}

In this section, we restrict our attention to the case of compactifications of
$\PGL_\ell$ and of $\GL_\ell$ and we use a formulation of the parametric
Feynman integrals of perturbative quantum field theory in terms of
(possibly divergent) integrals on a cycle in the complement of the
determinant hypersurface \cite{AluMa}, to obtain a new method of
regularization and renormalization. This gives rise to a
renormalized integral that is a period of a mixed Tate motive, under certain conditions
on the graph and on the intersection of the big cell of the compactification with
a divisor $\Sigma_{\ell,g}$. We show that 
a certain loss of information can occur with respect to the usual physical
Feynman integral.

\subsection{The determinant hypersurface}

In the following we use the notation $\hat\cD_\ell$ and $\cD_\ell$, respectively,
for the affine  and the  projective determinant hypersurfaces. Namely, we consider
in the affine space $\A^{\ell^2}$, identified with the space of all $\ell\times \ell$-matrices,
with coordinates $(x_{ij})_{i,j=1,\ldots,\ell}$, the hypersurface
$$ \hat\cD_\ell =\{ \det(X)=0\,|\, X=(x_{ij}) \} \subset \A^{\ell^2}. $$
Since $\det(X)=0$ is a homogeneous polynomial in the variables $(x_{ij})$,
we can also consider the projective hypersurface $\cD_\ell \subset \P^{\ell^2-1}$.

\smallskip

The complement $\A^{\ell^2}\smallsetminus \hat\cD_\ell$ is identified with the
space of invertible $\ell\times \ell$-matrices, namely with $\GL_\ell$.

\subsection{The Kausz compactification of $\GL_n$}

We recall here some basic facts about the Kausz compactification $K\GL_n$
of $\GL_n$, following \cite{Kausz} and the exposition in \S 12 of \cite{MaTha}.

\smallskip

We first recall the Vainsencher compactification \cite{Vain} of $\PGL_\ell$.
Let $X_0=\P^{\ell^2-1}$ be the projectivization of the vector space of
square $\ell\times \ell$-matrices. Let $Y_i$ be the locus of matrices of
rank $i$ and consider the iterated blowups $X_i= {\rm Bl}_{\bar Y_i}(X_{i-1})$,
with $\bar Y_i$ the closure of $Y_i$ in $X_{i-1}$. The $Y_i$ are $\PGL_i$-bundles 
over a product of Grassmannians.
It is shown in Theorem 1 and (2.4) of \cite{Vain} that the $X_i$ are smooth, and that $X_{\ell -1}$
is a wonderful compactification of $\PGL_\ell$, in the sense of \cite{DeCoPro}.
One denotes by $\overline{\PGL_\ell}$ the wonderful compactification of
$\PGL_\ell$ obtained in this way. We also refer the reader to \S 12 of \cite{MaTha} 
for a quick overview of the main properties of the Vainsencher compactification.

\smallskip

The Kausz compactification \cite{Kausz} of $\GL_\ell$ is similar. One regards $\A^{\ell^2}$ as the big cell
in $\cX_0=\P^{\ell^2}$. The iterated sequence of blowups is given in this case by setting
$\cX_i ={\rm Bl}_{\bar\cY_{i-1} \cup \bar\cH_i}(\cX_{i-1})$, where $\cY_i \subset \A^{\ell^2}$ are
the matrices of rank $i$ and $\cH_i$ are the matrices at infinity (that is, in
$\P^{\ell^2-1}=\P^{\ell^2}\smallsetminus \A^{\ell^2}$) of rank $i$. The Kausz compactification
is $K\GL_\ell = \cX_{\ell-1}$. It is shown in Corollary 4.2 of \cite{Kausz}
that the $\cX_i$ are smooth and in Corollary 4.2 and Theorem 9.1 of \cite{Kausz} that the blowup 
loci are disjoint unions of loci with the following structure: 
the closure $\bar\cY_{i-1}$ in $\cX_{i-1}$ is a $K\GL_{i-1}$-bundle over a product of
Grassmannians and the closure $\bar\cH_i$ in in $\cX_{i-1}$ is a $\overline{\PGL_i}$-bundle
over a product of Grassmannians. Theorem 9.1 of \cite{Kausz} also shows that these
compactifications have a moduli space interpretation. 
An overview of these properties and of the relation between the 
Vainsencher and the Kausz compactifications is given in \S 12 of \cite{MaTha}.

\smallskip

As observed in \cite{MaTha}, the Kausz compactification is then the closure
of $\GL_\ell$ inside the wonderful compactification of $\PGL_{\ell+1}$, see  also
\cite{Huru}, Chapter 3, \S 1.4. The compactification $K\GL_\ell$ is smooth and projective over
${\rm Spec}(\Z)$ (Corollary 4.2 \cite{Kausz}).

\smallskip

The other property of the Kausz compactification that we will be using in the
following is the fact that the complement of the dense open set $\GL_\ell$
inside the compactification $K\GL_\ell$ is a normal crossing divisor
(Corollary 4.2 \cite{Kausz}).

\subsection{The virtual motive of the Kausz compactification}\label{virtKsec}

We organize the computation of the motive of the Kausz compactification in
three subsections, respectively dealing with the virtual motive (Grothendieck class),
the numerical motive, and the Chow motive. The main reason for providing separate
arguments, instead of giving only the strongest result about the Chow motive, 
is a pedagogical illustration of the difference between these three levels of motivic structure,
where one can see in a very explicit case what is needed to improve from one 
level to the next, and what are the implications (conditional and unconditional).
We begin with the Grothendieck class, which is usually more familiar,
especially in the mathematical physics setting.

\smallskip

We use the description recalled above of the Kausz compactification,
together with the blowup formula, to check that the virtual motive (class in
the Grothendieck ring) of the Kausz compactification is Tate.

\begin{prop}\label{KauszTateGr}
Let $K_0(\cV)$ be the Grothendieck ring of varieties (defined over $\Q$ or
over $\Z$) and let $\Z[\bL]\subset K_0(\cV)$ be the Tate subring generated by
the Lefschetz motive $\bL=[\A^1]$. For all $\ell\geq 1$ the class $[K\GL_\ell]$
is in $\Z[\bL]$. Moreover, let $\cZ_\ell$ be the normal crossings divisor
$\cZ_\ell = K\GL_\ell \smallsetminus \GL_\ell$. Then all the unions and
intersections of components of $\cZ_\ell$ have Grothendieck classes in
$\Z[\bL]$.
\end{prop}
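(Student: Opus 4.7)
The plan is an induction on the loop number $\ell \geq 1$. The base case $\ell=1$ is immediate: since $\cX_0=\P^1$ admits no further blowups in the Kausz sequence, $K\GL_1=\P^1$ with class $1+\bL\in\Z[\bL]$, and the boundary $\cZ_1=\{0\}\cup\{\infty\}$ consists of two reduced points, so all unions and intersections of components have Tate classes trivially.

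For the inductive step I would use two standard tools in the Grothendieck ring. First, the blowup formula: for a smooth closed subvariety $Z\subset X$ of codimension $c$,
\[ [{\rm Bl}_Z X] \,=\, [X] + ([\P^{c-1}]-1)\,[Z] \]
in $K_0(\cV)$, since the exceptional divisor is a projective bundle (hence Zariski locally trivial) over $Z$. Second, multiplicativity for Zariski locally trivial fiber bundles: $[E]=[B]\cdot[F]$ when $E\to B$ has fiber $F$ and trivializes Zariski locally. Since the Kausz construction $\cX_i={\rm Bl}_{\bar\cY_{i-1}\cup\bar\cH_i}(\cX_{i-1})$ blows up a \emph{disjoint} union, $[\bar\cY_{i-1}\cup\bar\cH_i]=[\bar\cY_{i-1}]+[\bar\cH_i]$, and the formula is applied iteratively starting from $[\cX_0]=[\P^{\ell^2}]\in\Z[\bL]$. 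By the descriptions in Corollary~4.2 and Theorem~9.1 of \cite{Kausz}, $\bar\cY_{i-1}$ is a $K\GL_{i-1}$-bundle and $\bar\cH_i$ is an $\overline{\PGL_i}$-bundle over products of Grassmannians. Products of Grassmannians are Tate via their Schubert cell decomposition, $K\GL_{i-1}$ is Tate by the inductive hypothesis, and $[\overline{\PGL_i}]\in\Z[\bL]$ follows by a parallel induction on the Vainsencher construction $X_j={\rm Bl}_{\bar Y_j}(X_{j-1})$, whose blowup loci are $\PGL_i$-bundles over products of Grassmannians (and $[\PGL_i]$ itself is Tate).

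For the boundary part, $\cZ_\ell$ is a simple normal crossings divisor by Corollary~4.2 of \cite{Kausz}, so every iterated intersection $Z_I=\bigcap_{i\in I}Z_i$ of components is smooth. Each component of $\cZ_\ell$ arises either as the proper transform of the divisor at infinity in $\P^{\ell^2}$ or as (the proper transform of) one of the exceptional divisors of the blowups above, and is again identified in \cite{Kausz} as a bundle over a product of Grassmannians whose fiber is a Kausz or Vainsencher compactification (or a specified boundary stratum) of strictly smaller rank. The same inductive hypothesis together with multiplicativity then gives $[Z_I]\in\Z[\bL]$ for each multi-index $I$. Classes of unions follow from inclusion-exclusion in $K_0(\cV)$,
\[ \Bigl[\bigcup_{i\in J} Z_i\Bigr] \,=\, \sum_{\emptyset\neq I\subseteq J}(-1)^{|I|+1}[Z_I]. \]

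The main obstacle is technical rather than conceptual: I must verify that the $K\GL_{i-1}$- and $\overline{\PGL_i}$-bundles appearing as blowup loci (and as boundary strata) are \emph{Zariski} locally trivial, not merely étale locally trivial, so that multiplicativity $[E]=[B][F]$ applies in $K_0(\cV)$ without inverting $\bL$. For the projective-bundle factors coming from the exceptional divisors and for the Grassmannian bases this is standard, and in Kausz's explicit construction the relevant torsors are built from genuine vector bundles by projectivization and allied operations, so Zariski local triviality holds; but this requires unpacking the moduli-theoretic description. A secondary bookkeeping task is indexing the components of $\cZ_\ell$ and their intersections, which is routine once Kausz's stratification is made explicit.
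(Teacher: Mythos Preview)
Your proposal is correct and follows essentially the same route as the paper: induction on $\ell$, the blowup formula in $K_0(\cV)$, the description of the blowup loci as bundles over products of Grassmannians with fibers $K\GL_{i-1}$ or $\overline{\PGL_i}$, and inclusion--exclusion for unions of boundary strata. The only minor differences are that the paper handles $[\overline{\PGL_i}]\in\Z[\bL]$ by citing a cell decomposition (Proposition~4.4 of \cite{HabRad}) rather than by your parallel Vainsencher induction, and describes the boundary intersections as bundles over products of \emph{flag varieties} rather than Grassmannians; both variants work equally well. You are in fact more careful than the paper on one point: the paper simply asserts that ``the Grothendieck class of a bundle is the product of the class of the base and the class of the fiber'' without isolating the Zariski local triviality hypothesis, which you correctly flag as the genuine technical check.
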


\proof We use the blowup formula for classes in the Grothendieck ring: if
$\tilde\cX={\rm Bl}_{\cY}(\cX)$, where $\cY$ is of codimension $m+1$ in $\cX$,
then the classes satisfy
\begin{equation}\label{blowupGr}
 [\tilde\cX]=[\cX] + \sum_{k=1}^m [\cY] \bL^k .
\end{equation}
The Kausz compactification is obtained as an iterated blowup, starting with
a projective space, whose class is in $\Z[\bL]$ and blowing up at each step
a smooth locus that is a bundle over a product of Grassmannians with fiber
either a $K\GL_i$ or a $\overline{\PGL}_i$ for some $i<\ell$. The Grothendieck
class of a bundle is the product of the class of the base and the class of the fiber.
Classes of Grassmannians (and products of Grassmannians) are in $\Z[\bL]$.
The classes of the wonderful compactifications $\overline{\PGL}_i$ of $\PGL_i$
are also in $\Z[\bL]$, since it is known that the motive of these wonderful
compactifications are mixed Tate (this follows, for instance, from the cell 
decomposition given in Proposition 4.4. of \cite{HabRad}). Thus, it
suffices to assume, inductively, that the classes $[K\GL_i]\in \Z[\bL]$ for all $i<\ell$,
and conclude via the blowup formula that $[K\GL_\ell] \in \Z[\bL]$.

Consider then the boundary divisor $\cZ_\ell = K\GL_\ell \smallsetminus \GL_\ell$.
The geometry of the normal crossings divisor $\cZ_\ell$ is
described explicitly in \S 4 of \cite{Kausz}. It has components
$Y_i$ and $Z_i$, for $0\leq i\leq \ell$, that correspond to the blowup loci described
above. The multiple intersections $\cap_{i\in I} Y_i \cap \cap_{j\in J} Z_j$ of
these components of $\cZ_\ell$ are described in turn in terms of bundles over
products of flag varieties with fibers that are lower dimensional compactifications
$K\GL_i$ and $\overline{\PGL}_i$ and products. Again, flag varieties have cell
decompositions, hence their Grothendieck classes are in $\Z[\bL]$ and the
rest of the argument proceeds as in the previous case. If arbitrary intersections
of the components of $\cZ_\ell$ have classes in $\Z[\bL]$ then arbitrary unions
and unions of intersections also do by inclusion-exclusion in $K_0(\cV)$.
\endproof

\smallskip
\subsection{The numerical motive of the Kausz compactification}\label{numKsec}

Knowing that the Grothendieck class $[K\GL_\ell]$ is in the Tate subring
$\Z[\bL]\subset K_0(\cV)$ implies that the motive is of Tate type in the category 
of pure motives with respect to the numerical equivalence. More precisely, we have
the following.

\begin{prop}\label{nummotive}
Let $h_{\rm num}(K\GL_\ell)$ denote the motive of the Kausz
compactification $K\GL_\ell$ in the category of pure motives
over $\Q$, with the numerical equivalence relation. Then
$h_{\rm num}(K\GL_\ell)$ is in the (tensor) subcategory generated by
the Tate object. The same is true for arbitrary unions and
intersections of the components of the boundary divisor $\cZ_\ell$
of the compactification.
\end{prop}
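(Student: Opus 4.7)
The plan is to promote the Grothendieck-class argument of Proposition \ref{KauszTateGr} to the level of numerical motives by using the blowup formula for Chow motives at each step of the Kausz iterated blowup construction, then passing to numerical equivalence. Since Chow motives surject onto numerical motives and the Tate subcategory is preserved under this functor, it is in fact enough to argue directly in the Chow category; the conclusion for $h_{\rm num}$ follows formally.

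The main tool is Manin's blowup formula: if $\tilde\cX = {\rm Bl}_{\cY}(\cX)$ with $\cY\subset \cX$ smooth of codimension $m+1$ inside a smooth projective $\cX$, then
\begin{equation*}
h(\tilde\cX) \;\cong\; h(\cX) \oplus \bigoplus_{k=1}^{m} h(\cY)(-k).
\end{equation*}
I would set up a double induction on $\ell$, with the inductive hypothesis that for all $i<\ell$ both $h(K\GL_i)$ and $h(\overline{\PGL}_i)$ are in the Tate subcategory, and similarly for all unions and intersections of boundary components. The base case $\ell=1$ is immediate. For the inductive step, start with $\cX_0=\P^{\ell^2}$, whose motive is visibly Tate, and apply the blowup formula to each stage $\cX_i = {\rm Bl}_{\bar\cY_{i-1}\cup \bar\cH_i}(\cX_{i-1})$ described in Section \ref{KGLsec}. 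Each blowup center is a disjoint union of bundles over products of Grassmannians with fibers $K\GL_{j}$ or $\overline{\PGL}_{j}$ for $j<\ell$, hence has Tate motive by the inductive hypothesis together with the lemma discussed below. The blowup formula then produces a Tate motive for $\cX_i$ at each step, and therefore for $K\GL_\ell = \cX_{\ell-1}$. For the components $Y_i, Z_i$ of the boundary divisor $\cZ_\ell$ and their arbitrary intersections, the same argument applies using the explicit description in \S 4 of \cite{Kausz} of these strata as bundles over products of flag varieties with fibers that are lower-dimensional $K\GL$'s and $\overline{\PGL}$'s.

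The crucial subsidiary lemma I would need is that if $E\to B$ is a Zariski-locally trivial fibration between smooth projective varieties, whose base $B$ is a product of Grassmannians (hence cellular) and whose fiber $F$ has Tate motive, then $h(E)$ is Tate with $h(E)\cong h(B)\otimes h(F)$. For $F = \overline{\PGL}_j$ this uses the cell decomposition of \cite{HabRad} cited in the proof of Proposition \ref{KauszTateGr}; for $F = K\GL_j$ with $j<\ell$ it uses the inductive hypothesis together with the fact that $K\GL_j$ admits a cell decomposition inherited from the iterated blowup structure starting from projective space along centers that are themselves cellular (which I would verify simultaneously with the motivic statement, since cellularity propagates through blowups along cellular subvarieties).

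The main obstacle is this last point: establishing the motivic K\"unneth-type decomposition for the blowup centers, which requires a careful treatment of the bundle structures in Kausz's description and a proof that the iterated blowup construction indeed yields a variety stratified by affine cells. Once cellularity is available for all relevant strata, the Tate conclusion in the numerical (and Chow) category is automatic, and the inclusion-exclusion argument of Proposition \ref{KauszTateGr}, now applied in the semisimple abelian category of numerical motives, gives the Tate property for all unions and intersections of components of $\cZ_\ell$.
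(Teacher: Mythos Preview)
Your approach is correct in outline but takes a considerably harder route than the paper does for this particular statement. You propose to work in the Chow category and then descend to numerical equivalence; this forces you to confront the motivic K\"unneth decomposition for the bundle-type blowup loci, which you rightly flag as the main obstacle. The paper sidesteps this obstacle entirely: working directly with numerical motives, it invokes the fact (Exercise~13.2.2.2 of \cite{Andre}) that for a locally trivial fibration $X\to S$ with fiber $Y$ one has $h_{\rm num}(X)\cong h_{\rm num}(Y)\otimes h_{\rm num}(S)$ unconditionally. This product formula for fibrations is specific to the numerical category and fails in general for Chow motives, which is precisely why the paper separates the numerical case (Proposition~\ref{nummotive}) from the Chow case (Proposition~\ref{Chowprop}). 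With this formula in hand, the blowup loci are immediately Tate as products of Grassmannian motives and lower-dimensional $h_{\rm num}(K\GL_i)$ or $h_{\rm num}(\overline{\PGL}_i)$, and no cellularity of $K\GL_i$ is needed at all for the induction.

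What your approach buys is a direct path to the Chow statement, and indeed the paper's Proposition~\ref{Chowprop} follows a line close to yours. But even there the paper obtains cellularity of $K\GL_\ell$ by a different and cleaner argument: rather than propagating cellularity inductively through iterated blowups along cellular centers (which is delicate to make precise), it observes that $K\GL_\ell$ is a smooth complete spherical variety and invokes the Bialynicki--Birula decomposition via \cite{Brion}. So your ``verify cellularity simultaneously with the motivic statement'' step, while plausible, is both unnecessary for the numerical proposition and replaced by a more structural argument in the Chow proposition.
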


\proof The same argument used in Proposition \ref{KauszTateGr} can
be upgraded at the level of numerical motives. We replace the blowup
formula \eqref{blowupGr} for Grothendieck classes with the corresponding
formula for motives, which follows (already at the level of Chow motives)
from Manin's identity principle, \cite{Man}:
\begin{equation}\label{blowupmot}
h(\tilde X)=  h(X) \oplus \bigoplus_{r=1}^m h(Y) \otimes \bL^{\otimes r},
\end{equation}
with $\tilde X = {\rm Bl}_Y(X)$ the blowup of a smooth subvariety
$Y\subset X$ of codimension $m+1$ in a smooth projective variety $X$,
and where $\bL=h^2(\P^1)$ is the Lefschetz motive. Moreover, we use
the fact that, for numerical motives, the motive of a locally trivial
fibration $X\to S$ with fiber $Y$ is given by the product
\begin{equation}\label{fibrnum}
h_{\rm num}(X)= h_{\rm num}(Y) \otimes h_{\rm num}(S) ,
\end{equation}
see Exercise 13.2.2.2 of \cite{Andre}.
The decomposition \eqref{fibrnum} allows us to describe the
numerical motives of the blowup loci of the iterated blowup
construction of $K\GL_\ell$ as products of numerical motives
of Grassmannians and of lower dimensional compactifications
$K\GL_i$ and $\overline{\PGL}_i$. The motive of a Grassmannian
can be computed explicitly as in \cite{Ko}, already at the level
of Chow motives. If $G(d,n)$ denotes the Grassmannian of
$d$-planes in $k^n$, the Chow motive $h(G(d,n))$ is given by
\begin{equation}\label{MotGrass}
h(G(d,n))=\bigoplus_{\lambda \in W^d} \bL^{\otimes |\lambda|},
\end{equation}
where
$$ W^d =\{ \lambda=(\lambda_1,\ldots, \lambda_d) \in \N^d\,|\, n-d \geq \lambda_1 \geq
\cdots \geq \lambda_d \geq 0 \} $$
and $|\lambda|=\sum_i \lambda_i$, see Theorem 2.1 and Lemma 3.1 of \cite{Ko}.
The same decomposition into powers of the Lefschetz motive holds at the
numerical level. Moreover, we know (also already for Chow motives) that
the motives $h(\overline{\PGL}_i)$ of the wonderful compactifications
are Tate (see  \cite{HabRad}), and we conclude the argument as in
Proposition \ref{KauszTateGr} by assuming inductively that the
motives $h_{\rm num}(K\GL_i)$ are Tate, for $i<\ell$. The argument for
the loci $\cap_{i\in I} Y_i \cap \cap_{j\in J} Z_j$ in $\cZ_\ell$ is analogous.
\endproof

\begin{rem}\label{remNumK0}{\rm
Proposition \ref{nummotive} also follows from Proposition \ref{KauszTateGr} using
the general fact that two numerical motives that have the same class in
$K_0({\rm Num}(k)_\Q)$ are isomorphic as objects in ${\rm Num}(k)_\Q$,
because of the semi-simplicity of the category of numerical motives \cite{Jan}, together
with the existence, for $char(k)=0$, of a unique
ring homomorphism (the motivic Euler characteristic) $\chi_{\rm mot}: K_0(\cV_k)\to
K_0({\rm Num}(k)_\Q)$, such that $\chi_{\rm mot}([X])=[h_{\rm num}(X)]$, for $X$ 
a smooth projective variety, see Corollary 13.2.2.1 of \cite{Andre}. }
\end{rem}

\smallskip
\subsection{The Chow motive of the Kausz compactification}\label{chowKsec}

Manin's blowup formula \eqref{blowupmot} and the computation of the motive
of Grassmannians and of the wonderful compactifications $\overline{\PGL}_i$
already hold at the level of Chow motives. However, if we want to extend the 
argument of Proposition \ref{nummotive} to Chow motives, we run into the 
additional difficulty that one no longer necessarily has the
decomposition \eqref{fibrnum} for the motive of a locally trivial fibration. 
Under some hypotheses on the existence of a cellular structure on the fibers, one can still
obtain a decomposition for motives of bundles, and more generally locally
trivial fibrations, the fibers of which have cell decompositions with suitable properties,
see \cite{Karp}, and also \cite{Habi}, \cite{HabRad}, \cite{Ka}, \cite{PoGu}. 
We obtain an unconditional result on the Chow motive of the Kausz
compactification, by analyzing its cellular structure.

\smallskip

Recall that, for $G$ a connected reductive algebraic group and $B$ a Borel subgroup, 
a {\em spherical variety} is a normal algebraic variety on which $G$ acts with a 
dense orbit of $B$, \cite{Brion}. Spherical varieties can be regarded as a generalization
of toric varieties: when $G$ is a torus, one recovers the usual notion of toric variety.

\begin{prop}\label{Chowprop}
The Chow motive $h(K\GL_\ell)$ of the Kausz compactification is a Tate motive.
\end{prop}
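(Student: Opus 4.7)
The plan is to upgrade the argument of Proposition \ref{nummotive} from numerical to Chow motives by exploiting the spherical variety structure of $K\GL_\ell$ to produce a cellular decomposition that is compatible with the iterated blowup description. First I would observe that $K\GL_\ell$ is a smooth projective spherical variety for the action of $\GL_\ell \times \GL_\ell$ by left and right multiplication, with the big cell corresponding to the open $B\times B$-orbit. Choosing a generic one-parameter subgroup of a maximal torus $T\times T \subset \GL_\ell \times \GL_\ell$, one obtains a Bialynicki-Birula decomposition of $K\GL_\ell$ into attracting cells, each isomorphic to an affine space, with combinatorics governed by the torus-fixed locus. This yields a cellular decomposition of the strong type required by the motivic decomposition results of \cite{Karp} (see also \cite{Habi}, \cite{HabRad}, \cite{Ka}, \cite{PoGu}), from which one concludes that $h(K\GL_\ell)$ is a direct sum of Tate twists $\bL^{\otimes k}$ as a Chow motive.

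An alternative, more explicit approach is to work inductively with the iterated blowup description of Section \ref{virtKsec}. At each step $\cX_i = {\rm Bl}_{\bar\cY_{i-1}\cup \bar\cH_i}(\cX_{i-1})$, Manin's blowup formula \eqref{blowupmot} is already valid at the Chow level, so the problem reduces to showing that the motives of the blowup loci are Tate. These loci are $K\GL_{i-1}$-bundles and $\overline{\PGL}_i$-bundles over products of Grassmannians. Here one invokes the Chow-motivic decomposition theorem for cellular fibrations in \cite{Karp,Ka,PoGu}: if a locally trivial fibration $X\to S$ admits a Zariski-local trivialization whose fiber $Y$ carries a cell decomposition of sufficiently compatible type, then $h(X)=h(S)\otimes h(Y)$ already at the Chow level. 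The fibers that occur are Grassmannians (Tate by \eqref{MotGrass}), lower Kausz compactifications $K\GL_i$ (Tate by induction on $\ell$), and wonderful compactifications $\overline{\PGL}_i$ (Tate by \cite{HabRad}), each of which carries the required cell decomposition.

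The main obstacle is verifying that the fibrations occurring in the blowup construction actually satisfy the hypotheses needed for the Chow-level motivic decomposition, namely Zariski-local triviality together with compatibly cellular fibers. This is where the spherical structure is decisive: the blowup loci are orbit closures under the $\GL_\ell \times \GL_\ell$-action, so the bundles arise as associated bundles for suitable parabolic subgroups and are therefore Zariski-locally trivial. If one takes the Bialynicki-Birula route instead, the obstacle becomes the explicit verification that the generic one-parameter subgroup has isolated fixed locus on $K\GL_\ell$, which reduces to a combinatorial computation on the fixed points stratum by stratum; smoothness of $K\GL_\ell$ (Corollary 4.2 of \cite{Kausz}) then guarantees that the attracting cells are affine spaces stratifying $K\GL_\ell$, giving the desired cellular decomposition and hence the Tate property at the Chow level.
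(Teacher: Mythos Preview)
Your proposal is correct and matches the paper's approach closely. The paper's proof proceeds exactly along your second route: it first establishes the cellular structure of $K\GL_\ell$ via Brion's result that smooth complete spherical varieties admit a Bialynicki--Birula decomposition with finite fixed locus, then uses this cellular structure on the fibers $K\GL_i$ to view the blowup loci as relative cellular varieties in Karpenko's sense, concludes via \cite{Karp} that these loci have Tate Chow motives, and finishes with Manin's blowup formula by induction on $\ell$. Your first (direct) route is precisely the alternative the paper records in a remark immediately following the proof, so you have identified both arguments the authors had in mind.
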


\proof The result follows by showing that $K\GL_\ell$ has a cellular structure for all $\ell\geq 1$,
which allows us to extend the decomposition of the motive used in Proposition \ref{nummotive}
from the numerical to the Chow case. 

As shown in \S 3.1 of \cite{Brion}, it follows from the work of Bialynicki--Birula 
\cite{BiaBir} that any complete, smooth and spherical variety $X$ has a cellular 
decomposition. This is determined by the decomposition of the spherical
variety into $B$-orbits and is obtained by considering a one-parameter
subgroup $\lambda: \bG_m \hookrightarrow X$, such that the set of fixed points $X^\lambda$
is finite. The cells are given by
\begin{equation}\label{Xlambdax}
 X(\lambda,x) =\{ z\in X\,|\, \lim_{t\to 0} \lambda(t) z =x \}, \ \ \text{for} \ \ x\in X^\lambda. 
\end{equation} 

The Kausz compactification $K\GL_\ell$ is a smooth toroidal 
equivariant compactification of $\GL_\ell$, see Proposition 1.15 of \S 3 
of \cite{Huru} and also Proposition 10.1 and Proposition 12.1 of \cite{MaTha}.
In particular, it is a spherical variety (see Proposition 10.1 of \cite{MaTha}), 
hence it has a cellular structure as above.

\smallskip

A relative cellular variety, in the sense of \cite{Karp}, is a smooth
and proper variety with a decomposition into affine fibrations over
proper varieties. The blowup loci of the Kausz compactification $K\GL_\ell$
are relative cellular varieties in this sense, since they are bundles
over products of Grassmannians, with fiber a lower dimensional
compactification $K\GL_i$, with $i<\ell$. Using the cell decomposition
of the fibers $K\GL_i$, we obtain a decomposition of these blowup
loci as relative cellular varieties, with pieces of the decomposition
being fibrations over a product of Grassmannians, with fibers the
cells of the cellular structure of $K\GL_i$. 

\smallskip

There is an embedding of the category of pure Chow motives in
the category of mixed motives, see \cite{Andre}. By viewing the 
Chow motives of these blowup loci 
as elements in the Voevodsky category of mixed motives, 
Corollary 6.11 of \cite{Karp} shows that they are direct sums of motives of 
products of Grassmannians (which are Tate motives), 
with twists and shifts which depend on the dimensions of the cells
of $K\GL_i$. We conclude from this that all the blowup loci
are Tate motives. We can then repeatedly apply the blowup formula
for Chow motives to conclude (unconditionally) that the Chow motive of $K\GL_\ell$ 
is itself a Tate motive. Note that the blowup formula also 
holds in the Voevodsky category, Proposition 3.5.3 of \cite{Voev}, in the form
$$ \m({\rm Bl}_Y(X)) = \m(X)\oplus \bigoplus_{r=1}^{{\rm codim}_X(Y)-1} \m(Y)(r)[2r], $$
which corresponds to the usual formula of \cite{Man} in the case of pure motives, after
viewing them as objects in the category of mixed motives. The result can also be obtained, in a similar way, 
using Theorem 2.10 of \cite{HabRad} instead of Corollary 6.11 of \cite{Karp}.
\endproof

\begin{rem}\label{cellChowrem}{\rm
Given the existence of a cellular decomposition of $K\GL_\ell$, as above,
it is possible to give a quicker proof that the Chow motive is Tate, by
using distinguished triangles in the Voevodsky category associated to
the inclusions of unions of cells, showing that $\m(K\GL_\ell)$ is
mixed Tate, then using the inclusion of pure motives in the mixed
motives to conclude that $h(K\GL_\ell)$ is Tate. In Proposition \ref{Chowprop}
above we chose to maintain the structure of the argument more similar to the cases
of the virtual and the numerical motive, for better direct comparison.}
\end{rem}

\begin{rem}\label{remKS}{\rm
Notice that a {\em conditional} result about the Chow motive would follow
directly from Proposition \ref{nummotive} or Remark \ref{remNumK0}, if one assumes the
Kimura--O'Sullivan finiteness conjecture (or Voevodsky's nilpotence conjecture, which implies it).
For the precise statement and implications of the Kimura--O'Sullivan finiteness conjecture, and its
relation to Voevodsky's nilpotence, we refer 
the reader to the survey \cite{Andre2}.
By arguing as in Lemma 13.2.1.1 of \cite{Andre}, that would extend 
the result of Proposition \ref{nummotive} to the Chow motive. At the level of
Grothendieck classes, 
the conjecture in fact implies that the $K_0$ of Chow motives and the $K_0$ of
numerical motives coincide, hence one can argue as in Remark \ref{remNumK0}
and conclude that, in order to know that the Chow motive is mixed Tate,
it suffices to know that the Grothendieck class is mixed Tate.}
\end{rem}

\subsection{Feynman integrals in momentum space and non-mixed-Tate examples}

It was shown in \cite{BEK} that the parametric form of Feynman integrals in
perturbative quantum field theory can be formulated as a (possibly divergent)
period integral on the complement of a hypersurface defined by the vanishing
of a combinatorial polynomial associated to Feynman graphs. Namely,
one writes the (unrenormalized) Feynman amplitudes for a {\em massless}
scalar quantum field theory as integrals
\begin{equation}\label{paramInt}
 U(\Gamma)=   \frac{\Gamma(n-D\ell/2)}{(4\pi)^{\ell D/2}}
\int_{\sigma_n} \frac{P_\Gamma(t,p)^{-n +D\ell/2} \omega_n}
{\Psi_\Gamma(t)^{-n + D(\ell +1)/2}}
\end{equation}
where $n=\# E_\Gamma$ is the number of internal edges, $\ell=b_1(\Gamma)$
is the number of loops, and $D$ is the spacetime dimension. Here we consider
the ``unregularized" Feynman integral, where $D$ is just the integer valued dimension,
without performing the procedure of dimensional regularization that analytically
continues $D$ to a complex number. 
The domain of integration is a simplex
$\sigma_n=\{ t\in \R^n_+ | \sum_i t_i=1\}$. In the integrand,
$\omega_n$ is the volume form, and $P_\Gamma$ and $\Psi_\Gamma$
are polynomials defined as follows.
The graph polynomial is defined as
$$ \Psi_\Gamma(t)=\sum_T \prod_{e\notin T} t_e $$
where the summation is over spanning trees (assuming the graph $\Gamma$ is connected).
The polynomial $P_\Gamma$ is given by
\begin{equation}\label{PGammapt}
 P_\Gamma(t,p) = \sum_{C\subset \Gamma} s_C \prod_{e\in C} t_e 
\end{equation} 
with the sum over cut-sets $C$ (complements of a spanning tree plus one edge)
and with variables $s_C$ depending on the external momenta of the graph,
$s_C =(\sum_{v\in V(\Gamma_1)} P_v)^2$, where $\Gamma_1$ is one of
the connected components after the cut (by momentum conservation, it does not matter which).
The variables $P_v$ are given by combinations of the external momenta 
$p=(p_e)\in \Q^{\# E_{ext}(\Gamma)\cdot D}$, in the form 
$P_v=\sum_{e\in E_{ext}(\Gamma), t(e)=v} p_e$, where
$\sum_{e\in E_{ext}(\Gamma)} p_e =0$. 

\smallskip

In the range $-n+D\ell/2 \geq 0$,
which includes the log divergent case $n=D\ell/2$, the Feynman
amplitude is therefore the integral of an algebraic differential form
defined on the complement of the graph hypersurface
$\hat X_\Gamma =\{ t\in \A^n \,|\, \Psi_\Gamma(t)=0 \}$.
Divergences occur due to the intersections of the domain of integration $\sigma_n$
with the hypersurface. Some regularization and renormalization procedure is
required to separate the chain of integration from the divergence locus. We refer the
reader to \cite{BEK} (or to \cite{Mar} for an introductory exposition).

\smallskip

It was originally conjectured by Kontsevich that the graph hypersurfaces $\hat X_\Gamma$
would always be mixed Tate motives, which would have explained the pervasive occurrence
of multiple zeta values in Feynman integral computations observed in \cite{BrKr}. A general
result of \cite{BeBro} disproved the conjecture, while more recent results of \cite{BrDo},
\cite{BrSch}, \cite{Dor} showed explicit examples of Feynman graphs that give rise to
non-mixed-Tate periods.

\subsection{Determinant hypersurface and parametric Feynman integrals}\label{DetFeynSec}

In \cite{AluMa} the computation of parametric Feynman integrals was
reformulated by replacing the graph hypersurface complement by
the complement of the determinant hypersurface.

\smallskip

More precisely, the (affine) graph hypersurface $\hat X_\Gamma$ is defined by the vanishing
of the graph polynomial $\Psi_\Gamma$. It follows from the matrix-tree theorem that this
polynomial can be written as a determinant
$$ \Psi_\Gamma(t)=\det M_\Gamma(t)=\sum_T \prod_{e\notin T} t_e \, , $$
with $M_\Gamma(t)$ the $\ell\times \ell$ matrix
\begin{equation}\label{MGamma}
 (M_\Gamma)_{kr}(t)=\sum_{i=1}^n t_i \eta_{ik} \eta_{ir}, 
\end{equation} 
where the matrix $\eta$ is given by
$$ \eta_{ik}=\left\{ \begin{array}{ll} \pm 1 & \text{edge } \pm  e_i \in \text{ loop } \ell_k \\[2mm]
0 & \text{otherwise.} \end{array} \right. $$
This definition of the matrix $\eta$ involves the choice of a basis $\{ \ell_k \}$ of the
first homology $H_1(\Gamma;\Z)$ and the choice of an orientation of the edges of the graph,
with $\pm e$ denoting the matching/reverse orientation on the edge $e$. The resulting
determinant $\Psi_\Gamma(t)$ is independent of both choices. 

\smallskip

One considers then the map
$$ \Upsilon: \A^n \to \A^{\ell^2}, \ \ \
\Upsilon(t)_{kr}=\sum_i t_i \eta_{ik} \eta_{ir} $$
that realizes the graph hypersurface as the preimage
$$ \hat X_\Gamma = \Upsilon^{-1}(\hat \cD_\ell) $$
of the determinant hypersurface $\hat\cD_\ell =\{ \det(x_{ij})=0 \}$.

It is shown in \cite{AluMa} that the map
\begin{equation}\label{Upsilonmap}
 \Upsilon: \A^n \smallsetminus \hat X_\Gamma \ \hookrightarrow \ \A^{\ell^2}
\smallsetminus \hat\cD_\ell
\end{equation}
is an embedding whenever the graph $\Gamma$ is 3-edge-connected with a
closed 2-cell embedding of face width $\geq 3$.

\smallskip

\begin{rem}{\rm
As discussed in \S 3 of \cite{AluMa}, the 3-edge-connected condition on
graphs can be viewed as a strengthening of the usual 1PI (one-particle-irreducible)
condition assumed in physics, since the 1PI condition corresponds to 2-edge-connectivity.
In perturbative quantum field theory, one considers 1PI graphs when
computing the asymptotic expansion of the effective action. Similarly, one
can consider the 2PI effective action (which is related to non-equilibrium
phenomena in quantum field theory, see \S 10.5.1 of \cite{Rammer})
and restrict to 3-edge-connected graphs. The condition of having a
closed 2-cell embedding of face width $\geq 3$, on the other hand, is
a strengthening of the analogous property with face width $\geq 2$,
which conjecturally is satisfied for all 2-vertex-connected graphs
(strong orientable embedding conjecture, see 
Conjecture 5.5.16 of \cite{MoTho}). 
2-vertex-connectivity is again a natural strengthening of the 1PI condition.
A detailed discussion of equivalent formulations and implications of 
these combinatorial conditions, as well as specific examples of graphs 
that fail to satisfy them, are given in \S 3 of \cite{AluMa}. }
\end{rem}

\smallskip

Let $\cP_\Gamma(x,p)$ denote a homogeneous polynomial in $x\in \A^{\ell^2}$, with 
$p\in \Q^{\# E_{ext}(\Gamma)\cdot D}$,
with the property that the restriction to the image of the map $\Upsilon=\Upsilon_\Gamma$
agrees with the second Symanzik polynomial $P_\Gamma$ defined in \eqref{PGammapt},
$$\cP_\Gamma(x,p)|_{x=\Upsilon(t)\in \Upsilon(\A^n)} = P_\Gamma(t,p). $$  
When the map $\Upsilon_\Gamma$ is an embedding, one can, without loss of information,
rewrite the parametric Feynman integral as (see Lemma 2.3 of \cite{AluMa})
\begin{equation}\label{UintDet}
 U(\Gamma) = \int_{\Upsilon(\sigma_n)} \frac{\cP_\Gamma(x,p)^{-n+D\ell/2} \omega_\Gamma(x)}
{\det(x)^{-n+(\ell+1)D/2}}.
\end{equation}
Here $\omega_\Gamma(x)$ is an $n$-form on $\A^{\ell^2}$ such that 
the restriction of $\omega_\Gamma(x)$ to the subspace $\Upsilon(\A^n)$ 
satisfies $\omega_\Gamma(\Upsilon(t)) =\omega_n(t)$, 
with $\omega_n$ the volume form on $\A^n$. 
Under the condition that $\Upsilon$ is an embedding, the restriction of the integrand 
to the image $\Upsilon(\sigma_n)$ then agrees with the original Feynman integral. 

\smallskip

The question on the nature of periods is then reformulated in \cite{AluMa} by
considering a normal crossings divisor $\hat\Sigma_\Gamma$ in $\A^{\ell^2}$ with
$\Upsilon(\partial\sigma_n)\subset \hat\Sigma_\Gamma$ and considering the
motive
\begin{equation}\label{motiveDetDiv}
 \m (\A^{\ell^2}\smallsetminus \hat\cD_\ell, \hat\Sigma_\Gamma \smallsetminus
(\hat\Sigma_\Gamma \cap \hat\cD_\ell)).
\end{equation}

The motive $\m(\A^{\ell^2} \smallsetminus \hat\cD_\ell)$
of the determinant hypersurface complement 
belongs to the category of mixed Tate motives (see Theorem 4.1 of \cite{AluMa}), 
with Grothendieck class
$$ [\A^{\ell^2} \smallsetminus \hat\cD_\ell]=\bL^{\binom{\ell}{2}} \prod_{i=1}^\ell (\bL^i-1). $$
However, as shown in \cite{AluMa}, the nature of the motive \eqref{motiveDetDiv} is much
more difficult to discern, because of the nature of the intersection between the divisor
$\hat\Sigma_\Gamma$ and the determinant hypersurface. Assuming the previous 
conditions on the graph (3-edge-connectedness with a closed 2-cell embedding of face 
width at least $3$), it is shown in Proposition 5.1 of
\cite{AluMa} that one can consider a divisor $\hat\Sigma_{\ell,g}$ that only depends on
$\ell =b_1(\Gamma)$ and on the minimal genus $g$ of the surface $S_g$
realizing the closed 2-cell embedding of $\Gamma$,
\begin{equation}\label{Sigmaellg}
 \hat\Sigma_{\ell,g} = L_1 \cup \cdots \cup L_{\binom{f}{2}},
\end{equation}
where $f=\ell-2g+1$ and the irreducible components $L_1, \ldots, L_{\binom{f}{2}}$
are linear subspaces defined by the equations
$$ \left\{ \begin{array}{rll} x_{ij}& =0 & 1\leq i < j \leq f-1 \\[2mm]
x_{i1}+ \cdots + x_{i,f-1} &  =0 & 1 \leq i \leq f-1. \end{array} \right. $$

For a given choice of subspaces $V_1, \ldots, V_\ell$ of a fixed $\ell$-dimensional space, 
one defines the {\em variety of frames} as 
$$ \bF(V_1,\ldots,V_\ell) := \{ (v_1,\ldots,v_\ell)\in \A^{\ell^2}
\smallsetminus \hat\cD_\ell \,|\, v_k \in V_k \}.   $$
In other words, the variety of frames $\bF(V_1,\ldots,V_\ell)$ consists of the
set of $\ell$-tuples $(v_1, \ldots, v_\ell)$, with $v_i\in V_i$, such that 
$\dim {\rm span}\{ v_1, \ldots, v_\ell\}=\ell$.  
It is shown in \cite{AluMa} that the motives
\eqref{motiveDetDiv} are mixed Tate if the varieties of frames
$\bF(V_1,\ldots,V_\ell)$ are mixed Tate. This question is closely 
related to the geometry of intersections of unions of
Schubert cells in flag varieties and Kazhdan--Lusztig theory.

\smallskip

In this paper we will follow a different approach, which uses the same reformulation
of parametric Feynman integrals in momentum space in terms of determinant
hypersurfaces, as in \cite{AluMa}, but instead of computing the integral in the
determinant hypersurface complement, pulls it back to the Kausz compactification of $\GL_\ell$,
following the model of computations of Feynman integrals in configuration spaces 
described in \cite{CeMa}.

\medskip
\subsection{Cohomology and forms with logarithmic poles}

Let $\cX$ be a smooth projective variety and $\cZ \subset \cX$ a divisor. 
Let $\cM_{\cX,\cZ}^\star$ denote, as before, the
complex of meromorphic differential forms on $\cX$ with poles (of arbitrary 
order) along $\cZ$, and let $\Omega_\cX^\star(\log(\cZ))$ be the complex
of forms with logarithmic poles along $\cZ$. Let $\cU=\cX\smallsetminus \cZ$ 
and $j: \cU\hookrightarrow \cX$ be the inclusion. 

\smallskip

Grothendieck's Comparison Theorem, \cite{Groth}, shows that the natural
morphism (de Rham morphism) 
$$ \cM_{\cX,\cZ}^\star \to R j_* \C_{\cU} $$
is a quasi-isomorphism, hence de Rham cohomology $H^\star_{dR}(\cU)$
is computed by the hypercohomology of the meromorphic de Rham complex. 
In particular, for $\cU$ affine, the hypercohomology
is not necessary and all classes are represented by closed global differential
forms, with hypercohomology replaced by the cohomology of the complex
of global sections. 

\smallskip

The {\em Logarithmic Comparison Theorem} consists of the statement
that, for certain classes of divisors $\cZ$, the natural morphism
$$ \Omega_\cX^\star(\log(\cZ)) \to \cM_{\cX,\cZ}^\star $$
is also a quasi-isomorphism. This is known to hold for simple
normal crossings divisors by \cite{Del}, and for strongly quasihomogeneous
free divisors by \cite{CJNMM}, and for a larger class of locally quasihomogeneous
divisors in \cite{HoMo}. For our purposes, we will focus only on the
case of simple normal crossings divisors.

\smallskip

In combination with Grothendieck's Comparison Theorem, the Logarithmic
Comparison Theorem of \cite{Del} for a simple normal crossings divisor 
implies that the de Rham cohomology of the divisor
complement is computed by the hypercohomology of the logarithmic de Rham
complex,
\begin{equation}\label{loghypH}
 H^\star_{dR}({\mathcal U}) \simeq {\mathbb H}^\star({\mathcal X},
\Omega_{{\mathcal X}}^\star(\log {\mathcal Z})). 
\end{equation}

\smallskip

\begin{rem}\label{hyponly}{\rm Even under the assumption that
the complement $\cU$ is affine, the hypercohomology on the
right hand side of \eqref{loghypH} cannot always be replaced
by global sections and cohomology. For example, if $\cX$ is a
smooth projective curve of genus $g$, and $\cU$ is the 
complement of $n$ points in $\cX$, then $H^1_{dR}(\cU)$
has dimension $2g+n-1$, but the dimension of the space of
global sections of the sheaf of logarithmic differentials is only
$g+n-1$ by Riemann-Roch.}
\end{rem}

Some direct comparisons between de Rham cohomology $H^\star_{dR}(\cU)$
and the cohomology of the logarithmic de Rham complex are known. We
discuss in the coming subsections how these apply to our specific case.
Our purpose is to replace the meromorphic form that arises in the
Feynman integral computation with a cohomologous form with logarithmic
poles along the divisor of the Kausz compactification. In doing so,
we need to maintain explicit control of the motive of the variety over
which cohomology is taken, and also maintain the algebraic nature
of all the differential forms involved. 

\subsection{Pullback to the Kausz compactification, forms with
logarithmic poles, and renormalization}

For fixed $D, \ell \in \N$ (respectively the integer spacetime dimension and the loop number) 
and for assigned external momenta $p \in \Q^D$, we
now consider the algebraic differential form
\begin{equation}\label{eta}
\eta_{\Gamma,D,\ell,p}(x):=\frac{\cP_\Gamma(x,p)^{-n+D\ell/2} \omega_\Gamma(x)}
{\det(x)^{-n+(\ell+1)D/2}}.
\end{equation}
For simplicity, we write the above as $\eta_\Gamma(x)$. This is defined
on the complement of the determinant hypersurface,
$\A^{\ell^2}\smallsetminus \hat\cD_\ell =\GL_\ell$.  Thus, by pulling back to the
Kausz compactification, we can regard it as an algebraic differential form on
$$ K\GL_\ell \smallsetminus \cZ_\ell = \GL_\ell, $$
where $\cZ_\ell$ is the normal crossings divisor at the boundary of the
Kausz compactification.

\smallskip
\subsubsection{Cellular decomposition approach}\label{cellrenSec}

We consider a special case of a simple normal crossings divisor $\cZ$
in a smooth projective variety $\cX$, under the additional assumption that 
$\cX$ has a cell decomposition. We denote by $\{ X_{\alpha, i} \}$ the
finite collection of cells of dimension $i$, and in particular we simply
write $X_\alpha =X_{\alpha, \dim\cX}$ for the top dimensional cells.

\begin{prop}\label{periodcells}
Let $\cZ\subset \cX$ be a pair as above, with $\{ X_\alpha \}$ the top dimensional 
cells of the cellular decomposition. Given a meromorphic form 
$\eta \in \cM^m_{\cX,\cZ}$, there exist forms 
$\beta^{(\alpha)}$ on $X_\alpha$ with logarithmic
poles along the normal crossings divisor $\cZ$, such that
\begin{equation}\label{cohomcell}
[\beta^{(\alpha)}]=[\eta |_{X_\alpha}] \in H^*_{dR} (X_\alpha \smallsetminus \cZ).
\end{equation}
\end{prop}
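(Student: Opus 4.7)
The plan is to apply Deligne's Logarithmic Comparison Theorem together with Serre's affine vanishing on each top-dimensional cell, bypassing the obstacle of Remark \ref{hyponly} by working on affine opens rather than on the projective $\cX$.

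First I would record the local geometry: each top-dimensional cell $X_\alpha$ of the cellular decomposition is (isomorphic to) an affine space $\A^{\dim \cX}$, and in particular is an open, smooth, affine subvariety of $\cX$. Consequently $\cZ_\alpha := \cZ \cap X_\alpha$ is the restriction of a simple normal crossings divisor to an open subset, hence is itself a simple normal crossings divisor in $X_\alpha$. I may assume $\eta$ is closed; this involves no loss of generality because the conclusion concerns the de Rham class of $\eta|_{X_\alpha}$, and in the intended application to the Feynman integrand $\eta_\Gamma$, closedness is automatic as $\eta_\Gamma$ is a top-degree form.

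Next, I would chain the two comparison theorems recalled above. Grothendieck's Comparison Theorem identifies $H^\star_{dR}(X_\alpha \smallsetminus \cZ_\alpha) \simeq \bH^\star(X_\alpha, \cM^\star_{X_\alpha, \cZ_\alpha})$, and Deligne's Logarithmic Comparison Theorem for the simple normal crossings pair $(X_\alpha, \cZ_\alpha)$ further identifies this with $\bH^\star(X_\alpha, \Omega^\star_{X_\alpha}(\log \cZ_\alpha))$. The decisive step is to replace the hypercohomology by the cohomology of the complex of global sections. Since each $\Omega^p_{X_\alpha}(\log \cZ_\alpha)$ is a coherent (indeed locally free) sheaf on the affine variety $X_\alpha$, Serre's affine vanishing yields $H^q(X_\alpha, \Omega^p_{X_\alpha}(\log \cZ_\alpha)) = 0$ for all $q > 0$. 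The hypercohomology spectral sequence therefore collapses at $E_1$, giving $\bH^\star(X_\alpha, \Omega^\star_{X_\alpha}(\log \cZ_\alpha)) \simeq H^\star\bigl(\Gamma(X_\alpha, \Omega^\star_{X_\alpha}(\log \cZ_\alpha))\bigr)$. Applying this isomorphism to $[\eta|_{X_\alpha}]$ produces a closed global algebraic form $\beta^{(\alpha)}$ on $X_\alpha$ with logarithmic poles along $\cZ_\alpha$ and $[\beta^{(\alpha)}] = [\eta|_{X_\alpha}]$, as required.

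The main obstacle is exactly the one highlighted in Remark \ref{hyponly}: on a general smooth variety with normal crossings boundary, de Rham cohomology of the complement is only computed by the hypercohomology of the logarithmic de Rham complex, not by the cohomology of its global sections. Restricting to the top cells $X_\alpha$ is what makes the reduction possible, because their affineness forces the relevant higher sheaf cohomology to vanish. This structural constraint also explains why the conclusion must be phrased as a family of cell-wise representatives $\{\beta^{(\alpha)}\}$ rather than as a single globally defined logarithmic form on $\cX$; gluing these local representatives into a global form would require additional hypotheses, which is presumably the subject of the subsequent analysis in the paper.
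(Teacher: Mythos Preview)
Your proof is correct and follows essentially the same approach as the paper's. The paper invokes Lemma 2.5 of \cite{CJNMM}, which packages exactly the step you unpack: on Stein (equivalently, affine) opens the Logarithmic Comparison Theorem yields an isomorphism between $H^\star(\Gamma(\cV,\Omega_\cX^\star(\log\cZ)))$ and $H^\star_{dR}(\cV\smallsetminus \cZ)$, i.e.\ hypercohomology collapses to cohomology of global sections; you derive this directly via Serre vanishing on the affine cells $X_\alpha$, which is the same mechanism.
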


\proof Lemma 2.5 of \cite{CJNMM} shows that the Logarithmic Comparison Theorem is
equivalent to the statement that, for all Stein open sets $\cV\subset \cX$,
there are isomorphisms $H^\star(\Gamma(\cV,\Omega_\cX^\star(\log\cZ))) \simeq
H^\star_{dR}(\cV\smallsetminus \cZ)$. Namely, the hypercohomology in the
Logarithmic Comparison Theorem can be replaced by cohomology of the complex
of sections, when restricted to Stein open sets. 
\endproof

\begin{rem}\label{betaalphaCech}{\rm
The forms $\beta^{(\alpha)}$ do not match consistently on the
closures of the cells $X_\alpha$, because of nontrivial \v{C}ech cocycles, hence they are not
restrictions of a unique form with logarithmic poles $\beta$ defined on
all of $\cX$. In particular, the forms $\beta^{(\alpha)}$ obtained in this way depend on
the cellular decomposition used. 
}\end{rem}

\smallskip  

\begin{lem}\label{intXalpha}
Let $\cZ\subset \cX$ and $\{ X_\alpha \}$ be as above, and suppose given
a meromorphic form $\eta \in \cM^N_{\cX,\cZ}$, with $N=\dim \cX$, and an $N$-chain
$\sigma \subset \cX$ with $\partial\sigma\subset \Sigma$, for a divisor $\Sigma$ in $\cX$. 
After performing a pole subtraction on the logaritmic
forms on each cell $X_\alpha$ one can replace the integral $\int_\Sigma \eta$
with a renormalized version
\begin{equation}\label{renXalint}
\int_\sigma \beta^+ := \sum_\alpha  \int_{X_\alpha \cap \sigma} \beta^{(\alpha),+},
\end{equation}
where $\beta^{(\alpha),+}$ is a simple pole subtraction on $\beta^{(\alpha)}$.
The integral \eqref{renXalint} is a sum of periods of motives $\m(X_\alpha, X_\alpha\cap \Sigma)$.
The information contained in the subtracted polar part is recovered by the Poincar\'e
residues 
\begin{equation}\label{ResZal}
\int_{\sigma\cap \cZ_I} {\rm Res}_{\cZ_I}(\beta):= \sum_\alpha \int_{\sigma\cap \cZ_I \cap X_\alpha} {\rm Res}_{\cZ_I}(\beta^{(\alpha)}) 
\end{equation} 
along the intersections of components $\cZ_I= Z_{i_1}\cap \cdots \cap Z_{i_k}$, $I=\{ i_1,\ldots, i_k\}$ 
of the divisor $\cZ$. These are sums of periods of the motives $\m(\cZ_I \cap X_\alpha)$.
\end{lem}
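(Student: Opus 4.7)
The plan is to combine the cellwise logarithmic representatives furnished by Proposition \ref{periodcells} with the simple pole subtraction for normal crossings divisors from Proposition \ref{Tlogid}, and then to read off the motivic nature of each summand from standard period theory. First, on each top-dimensional cell $X_\alpha$, Proposition \ref{periodcells} provides a form $\beta^{(\alpha)} \in \Omega^\star_{X_\alpha}(\log(\cZ \cap X_\alpha))$ cohomologous to $\eta|_{X_\alpha}$ in $H^\star_{dR}(X_\alpha\smallsetminus \cZ)$. Applying the Rota--Baxter operator $T$ of \eqref{RBlogDiv} to extract the logarithmic polar part, I set $\beta^{(\alpha),+} := (1-T)\beta^{(\alpha)}$. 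By Proposition \ref{Tlogid}, $(1-T)$ lands in the complementary algebra $\cR_+$, which in the normal crossings case consists of forms without logarithmic poles along the components $Y_j$; hence $\beta^{(\alpha),+}$ is a regular algebraic $N$-form on $X_\alpha$, and each integral $\int_{X_\alpha \cap \sigma} \beta^{(\alpha),+}$ makes sense. The identity \eqref{renXalint} is then taken as the definition of $\int_\sigma \beta^+$.

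To identify each summand as a period of $\m(X_\alpha, X_\alpha \cap \Sigma)$, I would argue that $X_\alpha$ is a smooth locally closed algebraic subvariety of $\cX$, $\beta^{(\alpha),+}$ is an algebraic form on it, and the chain $X_\alpha \cap \sigma$ is an $N$-chain whose boundary consists of a part in $X_\alpha \cap \Sigma$, inherited from $\partial\sigma \subset \Sigma$, together with contributions along the cell boundary $\partial X_\alpha$. The latter telescope across adjacent top-dimensional cells and drop out of the sum \eqref{renXalint}, since lower-dimensional strata have zero volume against a top-degree form. This exhibits $\int_{X_\alpha \cap \sigma} \beta^{(\alpha),+}$ as a period of the pair $(X_\alpha, X_\alpha \cap \Sigma)$ in the sense of Kontsevich--Zagier. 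For the residue statement, the iterated Poincar\'e residue ${\rm Res}_{\cZ_I}(\beta^{(\alpha)})$ is a holomorphic form on $\cZ_I \cap X_\alpha$ by the construction of \cite{Del}, compatible with cellwise restriction because residues are local in the normal directions to the components $Y_j$; its integral against $\sigma \cap \cZ_I \cap X_\alpha$ is correspondingly a period of $\m(\cZ_I \cap X_\alpha)$, and summing over $\alpha$ yields \eqref{ResZal}.

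The main obstacle, flagged in Remark \ref{betaalphaCech}, is that the local representatives $\beta^{(\alpha)}$ do not glue into a globally defined form on $\cX$: the transition between cells involves nontrivial \v{C}ech cocycles, which reflect the discrepancy between the hypercohomology on the right-hand side of \eqref{loghypH} and the cohomology of global sections of $\Omega^\star_\cX(\log\cZ)$. Consequently, $\int_\sigma \beta^+$ cannot be realized as the period of a single algebraic form on $\cX$, and the additive right-hand side of \eqref{renXalint} must be taken as the intrinsic definition. The point that needs careful verification is that the cell-boundary contributions to $\partial(X_\alpha \cap \sigma)$ either cancel between adjacent top-dimensional cells or are supported on lower-dimensional strata of the cellular decomposition, so that the sum over $\alpha$ is well-defined and its motivic interpretation is exhausted by the pairs $(X_\alpha, X_\alpha \cap \Sigma)$ and the residue pieces $\cZ_I \cap X_\alpha$, with no extra contributions coming from the gluing data.
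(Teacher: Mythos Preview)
Your proposal is correct and follows essentially the same approach as the paper's own proof: decompose the integral over the top-dimensional cells, replace $\eta|_{X_\alpha}$ by the cohomologous logarithmic form $\beta^{(\alpha)}$ from Proposition~\ref{periodcells}, apply the pole subtraction $(1-T)$ from Proposition~\ref{Tlogid}, and read off the period interpretation, with the residue side handled by the discussion in \S\ref{ResSec} and \S\ref{multiresSec}. Your additional care about the cell-boundary pieces of $\partial(X_\alpha\cap\sigma)$ goes slightly beyond what the paper spells out---the paper simply \emph{defines} the renormalized integral by \eqref{renXalint} and asserts the period interpretation without addressing those boundary contributions---so your discussion there is a genuine refinement rather than a deviation.
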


\proof Given the cell decomposition as above, we can write the integral as
\begin{equation}\label{intoncells1}
\int_{\sigma}  \eta = \sum_\alpha \int_{X_\alpha \cap \sigma}  \eta |_{X_\alpha} 
= \sum_\alpha \int_{X_\alpha \cap \sigma} \beta^{(\alpha)} ,
\end{equation}
where each $\eta |_{X_\alpha}$ is replaced by the cohomologous $\beta^{(\alpha)}$
with logarithmic poles. After performing a pole subtraction on each $\beta^{(\alpha)}$
we obtain holomorphic forms $\beta^{(\alpha),+}$, hence the resulting integral is
a period of  $\m(X_\alpha, X_\alpha\cap \Sigma)$. For the relation between polar
subtraction and the Poincar\'e residues, see the discussion in \S \ref{ResSec} 
and \S \ref{multiresSec} above.
\endproof

In both \eqref{renXalint} and \eqref{ResZal}, we use the notation on the left-hand-side, with
a global integral and a global form $\beta$,  purely as a formal shorthand notation for 
the sum of the integrals on the cells of the $\beta^{(\alpha)}$,
since the latter are not restrictions of a global form $\beta$.

\begin{rem}\label{periodnosum}{\rm Notice that the resulting integral \eqref{renXalint}
obtained in this way can be identified with a period of $\m(\cX,\Sigma)$ only in the case
where the forms $\beta^{(\alpha),+}$ are restrictions $\beta^{(\alpha),+}=\beta^+ |_{X_\alpha}$
of a single holomorphic form $\beta^+$ on $\cX$. More generally, the resulting \eqref{renXalint}
is only a sum of periods of the motives $\m(X_\alpha, X_\alpha\cap \Sigma)$.}\end{rem}

\begin{rem}\label{onecell}{\rm If the cellular decomposition of $\cX$ has a single top
dimensional cell $X$, then a unique form with logarithmic poles 
$\beta \in \Omega_X^\star(\log \cZ)$, satisfying $[\eta|_X]=[\beta]\in H^\star_{dR}(X\smallsetminus \cZ)$, 
suffices to regularize the integral $\int_\sigma \eta$, with regularized value $\int_{\sigma\cap X} \beta^+$.
}\end{rem}

\smallskip

As we discussed in Proposition \ref{Chowprop}, the Kausz compactification
is a spherical variety (Proposition 1.15 of \S 3 
of \cite{Huru} and also Propositions 9.1, 10.1 and 12.1 of \cite{MaTha}),
hence it has a cellular decomposition (\S 3.1 of \cite{Brion}) into cells $X(\lambda,x)$
as in \eqref{Xlambdax}. Thus, we can apply the procedure described above,
to regularize the integral $\int_{\Upsilon(\sigma)} \eta_\Gamma$. 
While this regularization procedure depends on the choice of the cell decomposition,
the construction of \cite{Brion} for spherical varieties provides a cellular structure
that is intrinsically defined by the orbit structure of $K\GL_\ell$ 
and is quite naturally reflecting its geometry.
We can then perform a renormalization procedure based on the pole subtraction
procedure for forms with logarithmic poles described above.

\begin{cor}\label{intcellslog}
The cell decomposition $\{ X(\lambda,x) \}$ of $K\GL_\ell$ has a single big cell $X$.
Given $\eta_\Gamma = \eta_{\Gamma,D,\ell,p}$ as in \eqref{eta}, there is a form $\beta_\Gamma =
\beta_{\Gamma,D,\ell,p}$ on the big cell $X$, with logarithmic poles along $\cZ_\ell$,
such that $[\eta_\Gamma |_X ]=[\beta_\Gamma]\in H^\star_{dR}(X\smallsetminus \cZ)$.
Applying the Birkhoff factorization for forms with logarithmic poles to $\beta_\Gamma$, we
obtain a renormalized integral of the form
\begin{equation}\label{renormXint}
R(\Gamma) = \int_{\tilde\Upsilon(\sigma_n)\cap X} \beta^+_{\Gamma,D,\ell,p},
\end{equation}
where $\beta^+_\Gamma$ is a simple pole subtraction on $\beta_\Gamma$.
\end{cor}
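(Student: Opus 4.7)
\proof
The plan is to combine three ingredients already at our disposal: the spherical-variety cellular structure of $K\GL_\ell$ used in Proposition \ref{Chowprop}, the Logarithmic Comparison Theorem applied on the big cell, and the Birkhoff factorization of Proposition \ref{Tlogid}.

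First, I would establish the single big-cell statement. By Proposition 1.15 of \S 3 of \cite{Huru} and Propositions 10.1, 12.1 of \cite{MaTha}, $K\GL_\ell$ is a smooth projective spherical variety, and its Bialynicki--Birula decomposition $\{ X(\lambda,x) \}_{x\in K\GL_\ell^\lambda}$ (\S 3.1 of \cite{Brion}) as in \eqref{Xlambdax} therefore exists for a generic one-parameter subgroup $\lambda: \bG_m \hookrightarrow K\GL_\ell$. Since $K\GL_\ell$ contains a unique dense open $B$-orbit (as $B$ acts with a dense orbit on any spherical variety), there is a unique attracting cell of maximal dimension $\ell^2$, which I would call $X$; by the Bialynicki--Birula theorem $X$ is an affine open subset of $K\GL_\ell$.

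Next, I would produce $\beta_\Gamma$. The restriction $\eta_\Gamma|_X$ defines a class in $H^\star_{dR}(X\smallsetminus \cZ_\ell)$, and since $\cZ_\ell$ is simple normal crossings in $K\GL_\ell$ (Corollary 4.2 of \cite{Kausz}), its intersection $\cZ_\ell \cap X$ is simple normal crossings in $X$. Combining Grothendieck's Comparison Theorem \cite{Groth} with Deligne's Logarithmic Comparison Theorem \cite{Del}, and using Lemma 2.5 of \cite{CJNMM} together with the affineness of $X$ to replace hypercohomology with cohomology of global sections, gives
\begin{equation*}
H^\star\bigl(\Gamma\bigl(X, \Omega_X^\star(\log(\cZ_\ell \cap X))\bigr)\bigr) \;\simeq\; H^\star_{dR}(X\smallsetminus \cZ_\ell).
\end{equation*}
I would choose $\beta_\Gamma$ to be any algebraic representative of $[\eta_\Gamma|_X]$ under this isomorphism, which is precisely the situation of Remark \ref{onecell}. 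Finally, applying Proposition \ref{Tlogid} with $Y=\cZ_\ell\cap X \subset X$ and $\phi(\Gamma) = \beta_\Gamma$, the positive part of the Birkhoff factorization reduces to the simple pole subtraction $\beta^+_\Gamma = (1-T)\beta_\Gamma$, a holomorphic form on $X$. Integrating against the chain $\tilde\Upsilon(\sigma_n)\cap X$ yields \eqref{renormXint}.

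The main obstacle I expect is in the first paragraph: although a generic one-parameter subgroup on any spherical variety produces a Bialynicki--Birula big cell, for $K\GL_\ell$ one must certify that this cell has dimension exactly $\ell^2$ and is unique among top-dimensional cells. In practice I would either exhibit the big cell directly from the iterated-blowup description of \cite{Kausz} (using Theorem 9.1 therein together with the explicit coordinates on the affine chart $\A^{\ell^2}\subset \P^{\ell^2}$ that initiates the blowup tower), or use the embedding of $K\GL_\ell$ into the wonderful compactification of $\PGL_{\ell+1}$ recalled in \S 12 of \cite{MaTha}, where the Bruhat-type big cell is explicit. A secondary, but routine, point is to ensure that $\beta_\Gamma$ can be taken to be algebraic rather than merely analytic, which follows from the algebraic version of de Rham's theorem on the affine variety $X$.
\endproof
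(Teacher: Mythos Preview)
Your proposal is correct and follows essentially the same architecture as the paper's proof: invoke the spherical/toroidal structure of $K\GL_\ell$ to get a single big cell, apply the Logarithmic Comparison Theorem on that affine cell (this is exactly the content of Proposition~\ref{periodcells} and Remark~\ref{onecell}), and then use Proposition~\ref{Tlogid} for the pole subtraction.

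The one place where your argument and the paper's diverge slightly is the justification for the uniqueness of the big cell. You argue via the existence of a unique dense $B$-orbit and the Bialynicki--Birula decomposition; the paper instead appeals directly to the local structure theorem for spherical varieties (\S 2.3 of \cite{BrionLuna}) together with Proposition~9.1 of \cite{MaTha}, which gives the big cell of a toroidal equivariant compactification without passing through a general Bialynicki--Birula argument. Your heuristic ``dense $B$-orbit $\Rightarrow$ unique top cell'' is morally right but, as you yourself flag in your obstacles paragraph, it requires an extra step to make rigorous; the Brion--Luna citation short-circuits this. Either route works, and your fallback suggestion (reading off the big cell from the blowup description in \cite{Kausz} or from the embedding into $\overline{\PGL}_{\ell+1}$) would also suffice.
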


\proof As mentioned in Proposition \ref{Chowprop}, the spherical variety
$K\GL_\ell$ is a smooth toroidal equivariant compactification of $\GL_\ell$
(Proposition 1.15 of \S 3 of \cite{Huru} and Propositions 9.1 and 12.1 of \cite{MaTha}).
By \S 2.3 of \cite{BrionLuna} and Proposition 9.1 of \cite{MaTha}, it then follows
that there is just one big cell $X$. We can then write the integral as
\begin{equation}\label{intoncells}
\int_{\tilde\Upsilon(\sigma_n)}  \eta_\Gamma = \int_{X \cap \tilde\Upsilon(\sigma_n)}  \eta_\Gamma |_X ,
\end{equation}
where $\tilde\Upsilon(\sigma_n)$ is the pullback to $K\GL_\ell$ of the
domain of integration $\Upsilon(\sigma_n)$.

\smallskip

Let $\cH$ be the Hopf algebra of Feynman graphs.
The morphism $\phi: \cH \to \cM^*_{X, \cZ_\ell \cap X}$
assigns to a Feynman graph $\Gamma$ a meromorphic
differential form $\beta_\Gamma=\beta_{\Gamma,D,\ell,p}$
with logarithmic poles along $\cZ_\ell$ satisfying
$[\eta_\Gamma |_X ]=[\beta_\Gamma]\in H^\star_{dR}(X\smallsetminus \cZ)$.

\smallskip

We then perform the Birkhoff factorization, and we denote by
$\beta^+_\Gamma$ the regular differential form on $X \subset K\GL_\ell$
given by $\phi^+(\Gamma)=\beta^+_\Gamma$.
Since we only have logarithmic poles, by Proposition \ref{Tlogid}
the operation becomes a simple pole subtraction and we have
$\beta^+_\Gamma=(1-T)\beta_\Gamma$.
\endproof

\smallskip

If we assume, as above, that the external momenta $p$ in the polynomial $\cP_\Gamma(x,p)$
are rational, then the form $\eta_\Gamma=\eta_{\Gamma,D,\ell,p}(x)$ is an algebraic differential
form defined over $\Q$, hence we can also assume that the 
form with logarithmic poles $\beta_\Gamma$ is also defined over $\Q$.

\smallskip

In addition to the integral \eqref{renormXint}, one also has the collection of the iterated
Poincar\'e residues along the  intersections of components of the divisor $\cZ_\ell$. Namely, 
for any $\cZ_{I,\ell}=\cap_{j\in I} Z_{j,\ell}$, with $Z_{j,\ell}$ the components of $\cZ_\ell$,
we have the additional integrals
\begin{equation}\label{ResZI}
\cR(\Gamma)_I = \int_{\tilde\Upsilon(\sigma_n)\cap \cZ_{I,\ell}\cap X}   {\rm Res}_{\cZ_I}(\beta_\Gamma).
\end{equation}

\medskip
\subsubsection{Griffiths-Schmid approach}\label{GSrenSec}

A global replacement of $\eta_\Gamma$ by a single form $\beta_{\Gamma,D,\ell,p}$ 
on $K\GL_\ell$ with logarithmic poles along $\cZ_\ell$ can be obtained if we use the
$\cC^\infty$-logarithmic de Rham complex instead of the algebraic or analytic one.

\begin{prop}\label{CinftyRen}
Consider the class $[\eta_\Gamma]$ in the analytic de Rham cohomology
$H^*_{dR}(\GL_\ell;\C)$. 
There is a $\cC^\infty$-form $\beta^{\infty}_\Gamma$ on $K\GL_\ell$ with logarithmic 
poles along $\cZ_\ell$ such that 
\begin{equation}\label{CinftybetaHstar}
 [\beta^\infty_\Gamma]=[\eta_\Gamma] \in H^*_{dR}(K\GL_\ell \smallsetminus \cZ_\ell;\C)=
H^*_{dR}(\GL_\ell;\C).
\end{equation}
Applying the Birkhoff factorization yields a renormalized integral
\begin{equation}\label{renormCintfy}
R^\infty(\Gamma) = \int_{\tilde\Upsilon(\sigma_n)} \beta^{\infty,+}_{\Gamma,D,\ell,p},
\end{equation}
where $\beta^{\infty,+}_\Gamma$ is a simple pole subtraction on $\beta^\infty_\Gamma$,
and iterated residues
\begin{equation}\label{ResZICinfty}
\cR^\infty(\Gamma)_I = \int_{\tilde\Upsilon(\sigma_n)\cap \cZ_{I,\ell}} {\rm Res}_{\cZ_I} (\beta^\infty_\Gamma).
\end{equation}
\end{prop}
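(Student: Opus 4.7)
The plan is to invoke the Griffiths--Schmid theorem (as formulated, e.g., in Deligne's Hodge II) for the pair $(K\GL_\ell,\cZ_\ell)$, which asserts that the $\cC^\infty$-logarithmic de Rham complex $\cA^\star_{K\GL_\ell}(\log \cZ_\ell)$ is quasi-isomorphic to $Rj_\ast\C_{\GL_\ell}$, where $j:\GL_\ell\hookrightarrow K\GL_\ell$ is the inclusion. The hypothesis that $\cZ_\ell$ is a simple normal crossings divisor is exactly what makes this applicable, and it holds here by Corollary~4.2 of \cite{Kausz}. Since the sheaves $\cA^q_{K\GL_\ell}(\log \cZ_\ell)$ are fine (they admit $\cC^\infty$ partitions of unity), their hypercohomology is computed by the cohomology of global sections, so
\[
H^\star_{dR}(\GL_\ell;\C)\;\cong\;H^\star\bigl(\Gamma(K\GL_\ell,\cA^\star_{K\GL_\ell}(\log \cZ_\ell))\bigr).
\]
This gives the existence of a global $\cC^\infty$ form $\beta^\infty_\Gamma$ on $K\GL_\ell$ with logarithmic poles along $\cZ_\ell$ satisfying \eqref{CinftybetaHstar}.

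Next, I would observe that the polar-part operator $T$ defined in \eqref{RBlogDiv} extends verbatim from the algebraic/analytic setting to $\cC^\infty$ forms with logarithmic poles along a simple normal crossings divisor: locally any such $\beta$ can be written in a canonical way as $\sum_j\frac{df_j}{f_j}\wedge\xi_j+\eta$ with $\xi_j,\eta$ now $\cC^\infty$, and the residues $\mathrm{Res}_{Y_j}(\beta)$ are well-defined $\cC^\infty$ forms on $Y_j$. Thus Proposition~\ref{Tlogid} applies with $\cR=\cA^{\rm even}_{K\GL_\ell}(\log \cZ_\ell)$, and the Birkhoff factorization of the character $\phi:\cH_{\rm even}\to\cR$ sending $\Gamma\mapsto\beta^\infty_\Gamma$ yields the simple pole subtraction $\beta^{\infty,+}_\Gamma=(1-T)\beta^\infty_\Gamma$, together with the corresponding $\phi_-$ whose explicit form is given by Proposition~\ref{Tlogid}.

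I would then address convergence and well-definedness of the integrals \eqref{renormCintfy} and \eqref{ResZICinfty}. The form $\beta^{\infty,+}_\Gamma$ is a globally defined $\cC^\infty$ form on the compact manifold $K\GL_\ell$ (all logarithmic poles along $\cZ_\ell$ have been subtracted), so $\int_{\tilde\Upsilon(\sigma_n)}\beta^{\infty,+}_\Gamma$ is automatically finite. For the residues, the iterated Poincar\'e residue $\mathrm{Res}_{\cZ_I}(\beta^\infty_\Gamma)$ is a $\cC^\infty$ form on the smooth intersection $\cZ_{I,\ell}=\cap_{j\in I}Z_{j,\ell}$ (which is smooth by the normal crossings hypothesis), and $\tilde\Upsilon(\sigma_n)\cap\cZ_{I,\ell}$ is a compact chain in $\cZ_{I,\ell}$, so the residue integrals are likewise convergent.

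The main obstacle I anticipate is not the cohomological statement itself, which is classical, but controlling the passage from the algebraic class $[\eta_\Gamma]$ to the $\cC^\infty$ logarithmic representative $\beta^\infty_\Gamma$ in a way that is canonical enough for the Birkhoff factorization to have an intrinsic meaning. In the $\cC^\infty$ setting the choice of $\beta^\infty_\Gamma$ is only well-defined up to an exact form, whereas the operator $T$ depends on the decomposition of $\beta^\infty_\Gamma$ into its polar and holomorphic parts. One therefore needs to check that changing $\beta^\infty_\Gamma$ within its cohomology class affects $\beta^{\infty,+}_\Gamma$ by an exact form, so that the period \eqref{renormCintfy} depends only on $[\eta_\Gamma]$ and on the chain $\tilde\Upsilon(\sigma_n)$ modulo its boundary; this is the technical point where the argument genuinely differs from the algebraic version in Corollary~\ref{intcellslog} and where the trade-off of this approach (losing the algebraicity and the motivic interpretation of the period) becomes manifest.
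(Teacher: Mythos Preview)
Your approach is correct and is essentially the same as the paper's: both rest on the Griffiths--Schmid theorem (Proposition~5.14 of \cite{GriSch}) applied to the pair $(K\GL_\ell,\cZ_\ell)$, together with the fact that $\cZ_\ell$ is simple normal crossings. The paper's own proof is a single sentence citing Griffiths--Schmid; you have simply unpacked the argument further, explaining why the $\cC^\infty$ sheaves are fine so that hypercohomology reduces to global sections, why $T$ and the Birkhoff factorization of Proposition~\ref{Tlogid} carry over to the $\cC^\infty$ setting, and why the resulting integrals converge. Your final paragraph about loss of algebraicity and the non-canonical choice of representative is exactly the caveat the paper records separately in Remark~\ref{noalgrem}.
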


\proof For $\cX$ a complex smooth projective
variety and $\cZ$ a simple normal crossings divisor, let 
$\Omega_{\cC^\infty(\cX)}(\log \cZ)$  be 
the $\cC^\infty$-logarithmic de Rham complex. The Griffiths-Schmid theorem
(Proposition 5.14 of \cite{GriSch}) shows that there is 
an isomorphism $H^*_{dR}(\cU) = H^*(\Omega_{\cC^\infty(\cX)}(\log \cZ))$.
\endproof

\smallskip

\begin{rem}\label{noalgrem}{\rm 
With the Griffiths-Schmid theorem one looses the algebraicity
of differential forms. Namely, the forms $\beta^\infty_\Gamma$ 
and $\beta^{\infty,+}_\Gamma$ are only smooth and not algebraic
or analytic differential forms. Even if the resulting form 
$\beta^{\infty,+}_\Gamma$, after pole subtraction, can then
be replaced by an algebraic de~Rham form in the same cohomology
class in $H^*_{dR}(K\GL_\ell;\C)$, it will remain, in general, only 
a form with $\C$-coefficients and not one defined over $\Q$. 
Thus, following this approach one
obtains a consistent renormalization procedure, but one can lose
control on the description of the resulting integrals as periods
of motives defined over a number field. }\end{rem}

\smallskip
\subsubsection{The Hodge filtration approach}\label{HodgerenSec}

There is another case in which a form can be replaced globally by
a cohomologous one with logarithmic poles on the complement of 
a normal crossings divisor, while only using algebraic
or analytic forms. Indeed, there is a particular piece of the
de Rham cohomology that is always realized by global sections of the
(algebraic) logarithmic de Rham complex. This is the piece
$F^n H^n_{dR}(\cU)$ of the Hodge filtration of Deligne's
mixed Hodge structure, with $n=\dim \cX$. This Hodge
filtration on $\cU$ is given by 
$$ F^pH^k_{dR}(\cU)={\rm Im}(\H^k(\cX,\Omega_\cX^{\geq p}(\log\cZ)) \to
\H^k(\cX, \Omega^\star_\cX (\log \cZ)) ). $$

\begin{prop}\label{FnHnLog}
Let $\cX$ be a smooth projective variety with $N=\dim \cX$, and let
$\cZ$ be a simple normal crossings divisor with affine complement
$\cU=\cX\smallsetminus \cZ$. Then, for $n\leq N$, the Hodge filtration satisfies
\begin{equation}\label{Hodge} 
H^0(\cX, \Omega^n_\cX(\log \cZ))= F^n H^n_{dR}(\cU) / F^{n+1} H^n_{dR}(\cU).
\end{equation}
In the case where $n=N$ the right-hand-side is reduced to $F^N H^N_{dR}(\cU)$.
\end{prop}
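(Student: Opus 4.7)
My plan is to derive the proposition from Deligne's theory of the mixed Hodge structure on the cohomology of a smooth quasi-projective variety, together with the $E_1$-degeneration of the Hodge-to-de~Rham spectral sequence associated to the logarithmic de~Rham complex. The Hodge filtration on $H^k_{dR}(\cU)$ is by construction the filtration $b\^\mathrm{\^}$induced by the brutal truncation $\Omega_\cX^{\geq p}(\log\cZ)$, so the associated graded pieces are controlled by the spectral sequence
\begin{equation*}
E_1^{p,q} \;=\; H^q(\cX,\Omega_\cX^p(\log\cZ)) \;\Longrightarrow\; \H^{p+q}(\cX,\Omega_\cX^\star(\log\cZ)) \;=\; H^{p+q}_{dR}(\cU),
\end{equation*}
where the final identification uses Grothendieck's Comparison Theorem together with the Logarithmic Comparison Theorem for simple normal crossings divisors (Deligne).

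The key input I would invoke is Deligne's theorem (Th\'eor\`eme~3.2.5 of the second Hodge paper) that this spectral sequence degenerates at $E_1$. Granting degeneration, one obtains the identification
\begin{equation*}
\mathrm{Gr}^p_F H^{p+q}_{dR}(\cU) \;=\; E_\infty^{p,q} \;=\; E_1^{p,q} \;=\; H^q(\cX,\Omega_\cX^p(\log\cZ)).
\end{equation*}
Specializing to $(p,q)=(n,0)$ and noting that $H^0$ of a coherent sheaf on $\cX$ is just its space of global sections yields
\begin{equation*}
F^n H^n_{dR}(\cU)\,/\,F^{n+1} H^n_{dR}(\cU) \;=\; H^0(\cX,\Omega_\cX^n(\log\cZ)),
\end{equation*}
which is the first statement. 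The top-degree case is then immediate: when $n=N=\dim\cX$ one has $\Omega_\cX^{N+1}(\log\cZ)=0$, so $\Omega_\cX^{\geq N+1}(\log\cZ)$ is the zero complex and hence $F^{N+1}H^N_{dR}(\cU)=0$; the quotient therefore reduces to $F^N H^N_{dR}(\cU)$ itself.

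The only substantive step is the appeal to $E_1$-degeneration, since everything else is a direct reading of the spectral sequence. Degeneration is the hard theorem, but for our setting it is classical and unconditional; the affineness hypothesis on $\cU$ plays no role in the argument for this particular graded piece, though it is convenient elsewhere in the paper for replacing hypercohomology by cohomology of global sections on the nose. An alternative, essentially equivalent route would be to quote directly the identification of $\mathrm{Gr}^n_F H^n_{dR}(\cU)$ with $H^0(\cX,\Omega_\cX^n(\log\cZ))$ as part of Deligne's explicit description of the Hodge filtration on the cohomology of the logarithmic complex, but I prefer the spectral sequence route because it makes the role of the degeneration transparent and simultaneously explains why the top-degree simplification takes place.
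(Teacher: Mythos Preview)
Your proposal is correct and follows essentially the same approach as the paper's own proof: both set up the Fr\"olicher spectral sequence for the logarithmic de~Rham complex, invoke Deligne's $E_1$-degeneration, and then read off the $(p,q)=(n,0)$ term, with the top-degree case handled by the vanishing of $\Omega_\cX^{N+1}(\log\cZ)$. (You have a stray LaTeX artifact ``$b\^\mathrm{\^}$induced'' that should be cleaned up, but the mathematics is sound.)
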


\proof   The Hodge filtration $F^p H^k_{dR}(\cU)$ is induced by the
naive filtration on $\Omega^\star_\cX (\log \cZ)$. Recall that (see
Theorem 8.21 and Proposition 8.25 of \cite{Voisin})
the spectral sequence of a filtration $F$ on a complex $K^\star$ that comes 
from a double complex $K^{p,q}$, with
$$ F^p K^n =\oplus_{r\geq p, r+s=n} K^{r,s} $$
has terms
$$ E^{p,q}_0 = {\rm Gr}^F_p K^{p+q} =F^p K^{p+q}/ F^{p+1} K^{p+q} = K^{p,q} $$
$$ E^{p,q}_1 = H^{p+q}(  {\rm Gr}^F_p K^\star ) = H^q( K^{p,\star} ) $$
$$ E^{p,q}_\infty = {\rm Gr}^F_p H^{p+q}( K^\star ) . $$
The spectral sequence above, applied to the 
Hodge filtration $F^p H^k_{dR}(\cU)$, is referred to as
the Fr\"olicher spectral sequence. It has 
$$ E^{p,q}_1 = H^q( \cX, \Omega_\cX^p (\log \cZ))   $$
$$ E^{p,q}_\infty = F^p H^{p+q}_{dR}(\cU) / F^{p+1} H^{p+q}_{dR}(\cU). $$
In particular, $E^{n,0}_1=H^0(\cX, \Omega_\cX^n (\log \cZ))$ and
$E^{n,0}_\infty= F^n H^n_{dR}(\cU) / F^{n+1} H^n_{dR}(\cU)$. When $n=N=\dim \cX$, 
the term $F^{N+1} H^N_{dR}(\cU)$ vanishes for dimensional reasons.

Deligne proved in \cite{Del} (see also the formulation of the
result given in Theorem 8.35 of \cite{Voisin}) that, in the case 
where $\cZ$ is a normal crossings divisor, the Fr\"olicher spectral sequence 
of the Hodge filtration degenerates at the $E_1$ term. Thus,
in particular, we obtain \eqref{Hodge}.
\endproof

\smallskip

\begin{cor}\label{inFn}
Given a meromorphic form $\eta$ with $[\eta]\in F^n H^n_{dR}(\GL_\ell)/F^{n+1} H^n_{dR}(\GL_\ell)$,  
with $n\leq \ell^2=\dim K\GL_\ell$, there is a form $\beta$ on $K\GL_\ell$ with logarithmic
poles along the normal crossings divisor $\cZ_\ell$, such that
\begin{equation}\label{betaHstar}
 [\beta]=[\eta] \in H^n_{dR}(K\GL_\ell \smallsetminus \cZ_\ell)=
H^n_{dR}(\GL_\ell).
\end{equation}
Then after pole subtraction one obtains 
\begin{equation}\label{UrenintDet}
 \int_{\tilde\Upsilon(\sigma_n)} \beta^+ ,
\end{equation}
which is a period of $\m(K\GL_\ell,\Sigma_{\ell,g})$.
\end{cor}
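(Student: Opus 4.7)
The plan is to assemble the statement directly from the two preceding results, Proposition \ref{FnHnLog} and Proposition \ref{Tlogid}, together with the geometric setup recalled from \cite{Kausz} and \cite{AluMa}. First I would invoke Proposition \ref{FnHnLog} with $\cX=K\GL_\ell$ and $\cZ=\cZ_\ell$: this is applicable because, as recalled from Corollary 4.2 of \cite{Kausz}, $K\GL_\ell$ is smooth and projective and $\cZ_\ell=K\GL_\ell\smallsetminus \GL_\ell$ is a simple normal crossings divisor, and moreover the complement $\GL_\ell$ is affine. The hypothesis $[\eta]\in F^n H^n_{dR}(\GL_\ell)/F^{n+1} H^n_{dR}(\GL_\ell)$ together with the isomorphism \eqref{Hodge} then yields a global section $\beta\in H^0(K\GL_\ell,\Omega^n_{K\GL_\ell}(\log\cZ_\ell))$ with $[\beta]=[\eta]$ in $H^n_{dR}(\GL_\ell)$, which gives the identity \eqref{betaHstar}.

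Next I would carry out the pole subtraction along $\cZ_\ell$ by applying Proposition \ref{Tlogid}. Since $\cZ_\ell$ is a simple normal crossings divisor, the operator $(1-T)$ is an algebra homomorphism from $\Omega^{\rm even}_{K\GL_\ell}(\log\cZ_\ell)$ to the algebra of holomorphic forms on $K\GL_\ell$, so $\beta^+:=(1-T)\beta$ is a global holomorphic algebraic form on the smooth projective variety $K\GL_\ell$.

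Finally, to interpret $\int_{\tilde\Upsilon(\sigma_n)}\beta^+$ as a period of $\m(K\GL_\ell,\Sigma_{\ell,g})$, I would observe that $\beta^+$, being a regular algebraic form on $K\GL_\ell$, defines a class in $H^n_{dR}(K\GL_\ell)$ which maps to $H^n_{dR}(K\GL_\ell,\Sigma_{\ell,g})$. By the defining property of $\Sigma_{\ell,g}$ (namely that it contains $\Upsilon(\partial\sigma_n)$, as recalled in \eqref{Sigmaellg}), the pulled-back chain $\tilde\Upsilon(\sigma_n)$ has boundary contained in the pullback of $\Sigma_{\ell,g}$ and so defines a class in $H_n(K\GL_\ell,\Sigma_{\ell,g})$; the pairing of these two classes is exactly the integral \eqref{UrenintDet}, which is therefore a period of $\m(K\GL_\ell,\Sigma_{\ell,g})$.

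The main technical point to verify, rather than a serious obstacle, is compatibility between the degree of $\eta_\Gamma$ and the even-degree framework in which the Rota--Baxter operator $T$ and Proposition \ref{Tlogid} were formulated (see Definition \ref{Heven} and Question \ref{grquestion}); once one fixes the parity convention, the remaining content is purely a matter of combining Proposition \ref{FnHnLog} with the pole subtraction and identifying the resulting pairing with the relative motive $\m(K\GL_\ell,\Sigma_{\ell,g})$. Note that the corollary does not claim this motive to be mixed Tate; that separate question, which depends on the intersection of the big cell with $\Sigma_{\ell,g}$, is handled in the body of Section \ref{KGLsec}.
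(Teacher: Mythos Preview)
Your proposal is correct and follows exactly the route the paper intends: the corollary is stated without proof because it is the direct specialization of Proposition~\ref{FnHnLog} to $(\cX,\cZ)=(K\GL_\ell,\cZ_\ell)$, followed by the pole subtraction of Proposition~\ref{Tlogid} and the observation that $\partial\tilde\Upsilon(\sigma_n)\subset\Sigma_{\ell,g}$. One small correction to your closing remark: in this globally defined case the mixed Tate property of $\m(K\GL_\ell,\Sigma_{\ell,g})$ is established unconditionally in Proposition~\ref{MTMperiod} and does \emph{not} depend on the intersection of the big cell with $\Sigma_{\ell,g}$; that dependence only enters in the big-cell situation of Proposition~\ref{MTMlocal}.
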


In this case also, in addition to the integral \eqref{UrenintDet}, we also
have the iterated residues (which in this case exist globally),
\begin{equation}\label{ResZIglob}
 \int_{\tilde\Upsilon(\sigma_n)\cap \cZ_{I,\ell}} {\rm Res}_{\cZ_I} (\beta).
\end{equation}

\smallskip

In general, it is difficult to estimate where the form $\eta_\Gamma$ lies in the Hodge filtration.
One can give an estimate, based on the relation between the filtration by order of pole and
the Hodge filtration, but it need not be accurate because exact forms can cancel higher order
poles. The same issue was discussed, in the original formulation in the graph hypersurface 
complement, in \S 9.2 and Proposition 9.8 of \cite{BlochKreimer}.

\smallskip

Let $\cX$ be a smooth projective variety and $\cZ \subset \cX$ a simple normal crossings divisor. 
As before, let $\cM_{\cZ,\cX}^\star$ denote the complex of meromorphic differential forms on $\cX$ 
with poles (of arbitrary order) along $\cZ$. This complex has a filtration $P^\star \cM_{\cZ,\cX}^\star$ 
by order of poles (polar filtration), where $P^k \cM^m_{\cZ,\cX}$ consists of the $m$-forms with pole 
of order at most $m-k+1$, if $m-k \geq 0$ and zero otherwise. Deligne showed in \S II.3, 
Proposition 3.13 of \cite{Del2} and Proposition 3.1.11 of \cite{Del}, that the filtration induced on 
the subcomplex $\Omega_\cX^\star(\log \cZ)$ by the polar filtration on $\cM_{\cZ,\cX}^\star$ is the
naive filtration (that is, the Hodge filtration), and that the natural morphism
$$ (\Omega_\cX^\star(\log \cZ), F^\star) \to (\cM_{\cZ,\cX}^\star, P^\star) $$
is a filtered quasi-isomorphism.  In particular (Theorem 2 of \cite{DeDi}) the image of $\H^\star(\cX, P^k \cM^\star_{\cX,\cZ})$
inside $H^\star_{dR}(\cU)$ contains $F^k H^\star_{dR}(\cU)$. This means that
we can use the order of pole to obtain at least an estimate of the position of $[\eta_\Gamma]$ in the Hodge filtration. 
We need to compute the order of pole of the pullback of the form $\eta_\Gamma$ along the blowups in the
construction of the compactification $K\GL_\ell$.

\smallskip

\begin{prop}\label{orderpoleE}
For a graph $\Gamma$ with $n=\# E_\Gamma$ and $\ell= b_1(\Gamma)$, such that
$n\geq \ell-2$, and with spacetime dimension $D\in \N$,
the position of $[\eta_\Gamma]$ in the Hodge filtration $F^k
H^n_{dR}(\GL_\ell)$
is estimated by $k \geq n-(\ell-1) (-n +(\ell+1)D/2) +(\ell-1)^2$.
\end{prop}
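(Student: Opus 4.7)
The plan is to use Deligne's compatibility between the pole-order filtration on $\cM^\star_{\cX,\cZ}$ and the Hodge filtration on $H^\star_{dR}(\cX\smallsetminus \cZ)$, valid for $\cZ$ a simple normal crossings divisor in a smooth projective $\cX$: the natural inclusion $(\Omega^\star_\cX(\log\cZ),F^\star)\hookrightarrow (\cM^\star_{\cX,\cZ}, P^\star)$ is a filtered quasi-isomorphism (\cite{Del2}, Proposition II.3.13, and \cite{Del}, Proposition 3.1.11, as recalled before the statement). In particular, a closed meromorphic $n$-form on $\cX$ whose pole along $\cZ$ has order at most $m$ represents a class in $F^{n-m+1}H^n_{dR}(\cX\smallsetminus \cZ)$. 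Applied to $\cX=K\GL_\ell$ and $\cZ=\cZ_\ell$, it suffices to bound the pole order $m$ of the pullback of $\eta_\Gamma$ along the components of $\cZ_\ell$, and the claimed estimate amounts to showing $m\leq (\ell-1)\alpha-(\ell-1)^2+1$, where $\alpha=-n+(\ell+1)D/2$ is the exponent of $\det(x)$ in the denominator of $\eta_\Gamma$.

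The second step is to compute this pole order component by component through the iterated Kausz blowup $\cX_i={\rm Bl}_{\bar{\cY}_{i-1}\cup\bar{\cH}_i}(\cX_{i-1})$ of $\cX_0=\P^{\ell^2}$. The components of $\cZ_\ell$ are the proper transforms of the closure of $\hat{\cD}_\ell$ and of the hyperplane at infinity in $\cX_0$, together with the exceptional divisors $E_i,F_i$ of each blowup. Two ingredients determine the pole along each new exceptional divisor $E_i$: the multiplicity of $\det(x)$ along the center $\bar{\cY}_{i-1}$, which equals $\ell-i+1$ by the local normal form for matrices of rank $\leq i-1$, contributing a pole of order $(\ell-i+1)\alpha$ to $\pi^*\det(x)^{-\alpha}$; and the order of vanishing of $\pi^*\omega_\Gamma$ along $E_i$, which by the standard behaviour of differential forms under the blowup of a codimension-$(\ell-i+1)^2$ smooth center is bounded below in terms of $n$ and this codimension. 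An analogous analysis handles the $\bar{\cH}_i$ and the proper transforms of the divisors coming from $\cX_0$ (after homogenizing from $\A^{\ell^2}$ into $\P^{\ell^2}$).

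The last step is to identify the component realizing the maximum pole order. Comparing the contributions $(\ell-i+1)\alpha$ along $E_i$ against the Jacobian vanishing whose order depends on $(\ell-i+1)^2$, as well as the pole of order $\alpha$ along the proper transform of $\hat{\cD}_\ell$ (where $\det$ still vanishes to order $1$), the worst case turns out to be the exceptional divisor $E_2$ from the blowup of $\bar{\cY}_1$, yielding pole $(\ell-1)\alpha$ and Jacobian vanishing at least $(\ell-1)^2-1$, hence $m\leq (\ell-1)\alpha-(\ell-1)^2+1$ and $k\geq n-(\ell-1)\alpha+(\ell-1)^2$. The hypothesis $n\geq \ell-2$ enters precisely to make the Jacobian vanishing estimate for the degree-$n$ form $\omega_\Gamma$ valid at this divisor. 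I expect the main obstacle to be this lower bound for the vanishing of $\pi^*\omega_\Gamma$: since $\omega_\Gamma$ is only fixed by the constraint $\omega_\Gamma(\Upsilon(t))=\omega_n(t)$ and hence only determined up to a choice of extension from $\Upsilon(\A^n)$ to $\A^{\ell^2}$, one needs an estimate that is uniform in this choice and in the graph $\Gamma$, and that interacts with the codimensions of the successive blowup centers precisely as required to yield the stated bound.
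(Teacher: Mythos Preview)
Your proposal is correct and follows essentially the same approach as the paper. Both arguments use the Deligne filtered quasi-isomorphism to reduce the Hodge filtration estimate to a pole-order computation along the exceptional divisors of the iterated Kausz blowup, and both identify the exceptional divisor over the rank-one locus as the one giving the maximal pole order $(\ell-1)\alpha-(\ell-1)^2+1$, via the same local Jacobian calculation for the pullback of an $n$-form under a blowup of a codimension-$(\ell-1)^2$ center.

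The only difference is one of presentation: the paper goes directly to the rank-one blowup, carries out the local model computation there, and then asserts in one sentence that ``further blowups do not alter this pole order,'' whereas you set up the comparison across all exceptional components $E_i$, $F_i$, and proper transforms before singling out the rank-one divisor as the maximum. Your more systematic bookkeeping is not needed for the statement as proved, but it does make explicit what the paper leaves implicit. Your closing remark about the uniformity of the vanishing estimate for $\pi^*\omega_\Gamma$ in the choice of extension is a fair concern; the paper handles this by computing in the local coordinate model $\pi^*(dz_1\wedge\cdots\wedge dz_n)=w_c^{c-1}\,dw_1\wedge\cdots\wedge dw_n$ for $n\geq c$, without addressing the dependence on the choice of $\omega_\Gamma$ beyond this model.
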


\proof At the first step in the construction of the compactification $K\GL_\ell$ we blow up the locus
of matrices of rank one. We need to compare the order of vanishing of $\det(x)^{-n + (\ell+1)D/2}$
along this locus, with the order of zero acquired by the form $\omega_\Gamma$
along the exceptional divisor of this blowup. The determinant vanishes at
order $\ell-1$ on that stratum. The form $\omega_\Gamma$, on the other hand, 
acquires a zero of order $c-1$ where $c$ is the codimension of the blowup locus. 
This can be seen in a local model: when blowing up
a locus $L=\{ z_1=\cdots =z_c=0 \}$ in $\C^N$, the local coordinates $w_i$
in the blowup can be taken as $w_i w_c=z_i$ for $i<c$ and $w_i=z_i$ for $i\geq c$, with
$E=\{ w_c=0 \}$ the exceptional divisor. Then for $n\geq c$, and a form $dz_1 \wedge
\cdots \wedge dz_n$, the pullback satisfies
$$ \pi^*(dz_1 \wedge \cdots \wedge dz_n) = d(w_c w_1)\wedge \cdots \wedge
d(w_c w_{c-1})
\wedge d(w_c) \wedge \cdots \wedge d(w_n) = w_c^{c-1} dw_1 \wedge \cdots
\wedge dw_n . $$
The codimension of the locus of rank one matrices is $c=(\ell-1)^2$.
Thus, when performing the first blowup in the construction of $K\GL_\ell$,
the pullback of the form $\eta_\Gamma$ acquires a pole of order 
$(\ell-1) (-n +(\ell+1)D/2) -(\ell-1)^2+1$
along the exceptional divisor. Further blowups do not alter this pole
order, hence we can estimate that the pullback of the $n$-form 
$\eta_\Gamma$ to the Kausz compactification is in the term $P^k$
of the polar filtration, with $n-k+1=(\ell-1) (-n +(\ell+1)D/2)
-(\ell-1)^2+1$. Taking into account the possibility of reductions of the order 
of pole, due to cancellations
coming from exact forms, we obtain an estimate for the position in the
polar and in the
Hodge filtration, with $k \geq n-(\ell-1) (-n +(\ell+1)D/2) +(\ell-1)^2$.
\endproof

\smallskip
\subsection{Nature of the period}

We then discuss the nature of the period obtained by the evaluation of
\eqref{UrenintDet}. We need a preliminary result. We define a mixed Tate
configuration $Y$ in an ambient variety $X$ as follows. 

\begin{defn}\label{MTconfigDef}
Let $X$ be a smooth projective variety and $Y\subset X$ a divisor
with irreducible components $\{ Y_i \}_{i=1}^N$. Let $\cC_Y=\{ Y_I=Y_{i_1}\cap \cdots \cap Y_{i_k}\,|\,
I=(i_1,\ldots,i_k),\, k\leq N \}$.  Then $Y$ is
a mixed Tate configuration if all unions $Y_{I_1}\cup \cdots\cup Y_{I_r}$
of elements of the set $\cC_Y$ have motives $\m(Y_{I_1}\cup \cdots\cup Y_{I_r})$
contained in the Voevodsky derived category of mixed Tate motives.
\end{defn}

\begin{rem}\label{MTconfrem}{\rm
Note that in Definition \ref{MTconfigDef} we do not require that $Y$ is 
necessarily a normal crossings divisor.  However, in the case of the boundary divisor 
$\cZ_\ell$ of $K\GL_\ell$, we will use in Proposition \ref{ramiprop} the fact that it is 
also normal crossings, in addition to satisfying the condition of Definition~\ref{MTconfigDef}
(see Lemma~\ref{MTconfigZ}).  
}\end{rem}

Let $\Sigma_{\ell,g}$ be the proper transform of the divisor given by
the projective version of $\hat\Sigma_{\ell,g}$
described in \eqref{Sigmaellg}, defined by the same equations.

\begin{lem}\label{DivisorTate}
The divisor $\Sigma_{\ell, g}$ is a mixed Tate configuration.
\end{lem}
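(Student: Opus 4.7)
My plan is to show that every finite intersection of components of $\Sigma_{\ell,g}$ has mixed Tate motive, and then deduce the condition of Definition~\ref{MTconfigDef} by an inclusion–exclusion argument in the Voevodsky derived category.

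First I would work downstairs, in the projective space $\P^{\ell^2-1}$ before taking proper transforms: each irreducible component $L_i$ is a projective linear subspace, defined either by a coordinate equation $x_{ij}=0$ or a row-sum equation $x_{i1}+\cdots+x_{i,f-1}=0$. Consequently any multiple intersection $L_{I_1}\cap\cdots\cap L_{I_k}$ is again a linear subspace (possibly empty), hence either empty or a projective space, and in either case has mixed Tate motive. The same holds for all unions of such intersections by repeated application of the Mayer--Vietoris distinguished triangle
\begin{equation*}
\m(A\cap B)\to \m(A)\oplus \m(B)\to \m(A\cup B)\to \m(A\cap B)[1]
\end{equation*}
in the Voevodsky derived category $\mathbf{DM}$, since a union of two mixed Tate motives with mixed Tate intersection is mixed Tate.

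Next I would pass upstairs to $K\GL_\ell$ and track the proper transforms through the iterated blowup construction $K\GL_\ell=\cX_{\ell-1}$, where $\cX_i={\rm Bl}_{\bar\cY_{i-1}\cup\bar\cH_i}(\cX_{i-1})$. For a smooth subvariety $V\subset \cX_{i-1}$ meeting the blowup center $Z$ sufficiently cleanly, the proper transform is itself a blowup $\tilde V={\rm Bl}_{V\cap Z}V$, and the motivic blowup formula from Proposition~\ref{Chowprop} gives
\begin{equation*}
\m(\tilde V)=\m(V)\oplus \bigoplus_{r=1}^{{\rm codim}_V(V\cap Z)-1}\m(V\cap Z)(r)[2r].
\end{equation*}
Applied inductively, this reduces mixed Tate-ness of the proper transform of $L_I$ (and of its intersections with the other blowup strata in $\cX_i$) to mixed Tate-ness of the intersections $L_I\cap \bar\cY_{i-1}$ and $L_I\cap\bar\cH_i$, together with iterated versions thereof. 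Since $\bar\cY_i$ is a $K\GL_i$-bundle over a product of Grassmannians and $\bar\cH_i$ is a $\overline{\PGL_i}$-bundle over a product of Grassmannians, I can argue by an inner induction on $\ell$: the inductive hypothesis gives that the fibers have Tate motive, and the intersection of a coordinate-linear subspace with a rank locus has a cell decomposition compatible with the Schubert stratification of the base, so that the resulting loci are (relative) cellular varieties in the sense of \cite{Karp}, hence mixed Tate.

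Finally, once I know every intersection of the proper-transformed components has mixed Tate motive, I reapply the Mayer--Vietoris triangle above in $\mathbf{DM}$ to conclude that every union of such intersections, in particular those appearing in Definition~\ref{MTconfigDef}, is mixed Tate. The main obstacle is the inductive step controlling intersections of the linear subspaces $L_I$ with the rank-locus blowup centers $\bar\cY_{i-1}\cup\bar\cH_i$ (and with the proper transforms thereof in $\cX_{i-1}$): this is where the combinatorial structure of the equations defining $\Sigma_{\ell,g}$ interacts nontrivially with the Schubert-type stratification of the matrix rank loci, and one needs to verify that this interaction preserves cellularity.
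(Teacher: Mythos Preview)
Your downstairs argument---intersections of the $L_i$ are linear subspaces, hence projective spaces, hence mixed Tate, and unions follow by Mayer--Vietoris---is correct and is essentially the paper's proof. The paper phrases it more tersely: it observes that $\Sigma_{\ell,g}$ and any union of its components are hyperplane arrangements, then cites known results (\cite{BGSV}, \cite{Dup}, \cite{KisLeh}) that motives of hyperplane arrangements are mixed Tate, together with Aluffi's explicit formula \cite{Alu} for the Grothendieck class. So your first and third steps, taken together, recover the paper's argument by hand rather than by citation.

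Where you diverge is your second step, tracking proper transforms through the iterated blowups of $K\GL_\ell$. The paper does not do this: its proof works entirely at the level of the hyperplane arrangement in projective space and does not address the passage to the Kausz compactification. Your instinct that the proper transform deserves attention is reasonable, given the definition of $\Sigma_{\ell,g}$ just before the lemma, but the paper simply treats it as a hyperplane arrangement without further comment. More importantly, the gap you yourself flag---controlling the intersections $L_I\cap\bar\cY_{i-1}$ and $L_I\cap\bar\cH_i$ and verifying that these remain cellular or mixed Tate---is real and unresolved in your proposal. Showing that an arbitrary linear slice of a matrix rank locus is mixed Tate is not obvious (indeed, linear sections of determinantal varieties can be quite wild), and your appeal to ``compatibility with the Schubert stratification'' is not an argument. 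The paper's route avoids this entirely by staying downstairs; if you insist on working upstairs, you owe a genuine proof of that step.
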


\proof By \eqref{Sigmaellg}, $\Sigma_{\ell, g}$ and any arbitrary
union of components are hyperplane arrangements. 
It is known from \cite{BGSV} that motives of hyperplane arrangements
are mixed Tate, see also \S 1.7.1--1.7.2 and \S 3.1.1 of \cite{Dup}, where  
the computation of the motive in the Voevodsky 
category can be obtained in terms
of Orlik--Solomon models.
Using a characterization of the mixed Tate condition in terms
of eigenvalues of Frobenius, the mixed Tate nature of hyperplane 
arrangements was also proved in Proposition 3.1.1 of \cite{KisLeh}.
The mixed Tate property can be seen very explicitly at the level 
of the virtual motive. In fact, the Grothendieck
class of an arrangement $A$ in $\P^n$ is explicitly given (Theorem
1.1. of  \cite{Alu}) by
$$ [A]=[\P^n] - \frac{\chi_{\hat A}(\bL)}{\bL-1}, $$
where $\chi_{\hat A}(t)$ is the characteristic polynomial of the
associated central arrangement $\hat A$ in $\A^{n+1}$. It then
follows by inclusion-exclusion in the Grothendieck ring that all
unions and intersections of components of $A$ are mixed Tate.
\endproof

\smallskip

\begin{lem}\label{MTconfigZ}
The boundary divisor $\cZ_\ell$ of the Kausz compactification $K\GL_\ell$
is a mixed Tate configuration. 
\end{lem}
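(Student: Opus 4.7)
The plan is to mirror the argument of Proposition \ref{Chowprop}, upgraded to the Voevodsky derived category and extended from the Kausz compactification itself to arbitrary intersections and unions of components of the boundary divisor. Recall from \S 4 of \cite{Kausz} (see also \S 12 of \cite{MaTha}) that $\cZ_\ell$ is a simple normal crossings divisor with components $Y_i,Z_i$ for $0\leq i\leq \ell$, and that each multiple intersection $\bigcap_{i\in I} Y_i \cap \bigcap_{j\in J} Z_j$ decomposes as a disjoint union of pieces, each of which is a (Zariski-locally trivial) bundle over a product of flag varieties, whose fiber is a product of lower dimensional compactifications of the form $K\GL_i$ and $\overline{\PGL}_i$ with $i<\ell$.

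I would proceed by induction on $\ell$, with the trivial base case $\ell=1$. For the inductive step, I first verify condition (a) of Definition \ref{MTconfigDef}, namely that each intersection $Y_I\cap Z_J$ has mixed Tate motive in the Voevodsky category. Products of flag varieties have cellular decompositions, so their Voevodsky motives are mixed Tate. By the inductive hypothesis, $\m(K\GL_i)$ is mixed Tate for $i<\ell$, and by \cite{HabRad} the same is true of the wonderful compactifications $\m(\overline{\PGL}_i)$; each of these variety has a cell decomposition (spherical structure for $K\GL_i$ as in Proposition \ref{Chowprop}, and \cite{HabRad} for $\overline{\PGL}_i$), so the products that appear as fibers are likewise relative cellular. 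Applying Corollary 6.11 of \cite{Karp} to these bundle structures (exactly as in the proof of Proposition \ref{Chowprop}), we conclude that each intersection $Y_I\cap Z_J$ has Voevodsky motive isomorphic to a direct sum of Tate twists and shifts of motives of products of Grassmannians, and is therefore mixed Tate.

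To obtain condition (b), that arbitrary unions $Y_{I_1}\cup \cdots \cup Y_{I_r}$ of such intersections are also mixed Tate, I would invoke the Mayer--Vietoris distinguished triangle in Voevodsky's category
\[
\m(A\cap B)\to \m(A)\oplus \m(B)\to \m(A\cup B)\to \m(A\cap B)[1],
\]
valid for closed subvarieties of a smooth variety with normal crossings union. Since $\cZ_\ell$ is simple normal crossings, $A\cap B$ is again a union of intersections of components, of strictly smaller depth; inducting on the number $r$ of pieces in the union, and using that the full subcategory of mixed Tate motives is triangulated and closed under direct sums, one concludes $\m(Y_{I_1}\cup\cdots\cup Y_{I_r})$ is mixed Tate.

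The main obstacle is the motivic decomposition of the bundles appearing in step (a): one needs not merely that the base and fiber have mixed Tate motives, but that the total space does as well in the Voevodsky category. This is not automatic for arbitrary locally trivial fibrations (the Chow-level obstruction is exactly the reason Proposition \ref{nummotive} was stated only numerically before strengthening to Proposition \ref{Chowprop}), but it is guaranteed by the relative cellular structure of the fibers together with Karpenko's theorem, which is what makes the whole chain of reductions work uniformly.
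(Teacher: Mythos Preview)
Your proposal is correct and follows essentially the same approach as the paper: the intersections of components of $\cZ_\ell$ are bundles over products of flag varieties with fibers given by lower-dimensional compactifications $K\GL_i$ and $\overline{\PGL}_i$, and the relative cellular argument via Karpenko's theorem (exactly as in Proposition~\ref{Chowprop}) shows these are mixed Tate. The paper's proof is a one-sentence pointer back to Proposition~\ref{Chowprop}; you have spelled out the same reasoning in more detail, and in particular you make explicit the Mayer--Vietoris step passing from intersections to unions, which the paper leaves implicit (at the virtual level it was done by inclusion--exclusion in Proposition~\ref{KauszTateGr}).

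One minor remark: Definition~\ref{MTconfigDef} does not have separately labeled conditions (a) and (b); it states a single condition on unions of intersections. Your decomposition into ``intersections first, then unions'' is the natural way to verify it, but you should not refer to ``condition (a)'' and ``condition (b)'' of the definition as though they were stated there.
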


\proof The motives of unions of intersections of components of $\cZ_\ell$ can be described
in terms of motives of bundles over products of flag varieties with fibers that are lower
dimensional compactifications $K\GL_i$ and $\overline{\PGL}_i$, and we proved in 
Proposition~\ref{Chowprop} (see also Propositions~\ref{KauszTateGr} and \ref{nummotive})
that all these motives are Tate, hence the condition of Definition~\ref{MTconfigDef} is
satisfied.
\endproof

\smallskip

\begin{prop}\label{MTMperiod}
When the form $\beta_{\Gamma, D,\ell,p}$ on the big cell extends to a logarithmic form in 
$\Omega_{K\GL_\ell}^\star(\log \cZ_\ell)$, the integral 
$R(\Gamma) = \int_{\tilde\Upsilon(\sigma_n)} \beta^+_{\Gamma, D,\ell,p}$ 
is a period of a mixed Tate motive.
\end{prop}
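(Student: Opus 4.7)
The plan is to unpack the three ingredients the hypothesis makes available and then assemble them via a standard localization argument in the Voevodsky category of mixed motives. First, I would use the hypothesis that $\beta_{\Gamma, D, \ell, p}$ extends globally to $\Omega_{K\GL_\ell}^\star(\log \cZ_\ell)$: since $\cZ_\ell$ is a simple normal crossings divisor, Proposition \ref{Tlogid} applies, and the Birkhoff factorization reduces to the simple pole subtraction $\beta^+_\Gamma = (1 - T)\beta_\Gamma$. The output of $(1-T)$, applied to a logarithmic form along a normal crossings divisor, is a global holomorphic form on $K\GL_\ell$ (all the logarithmic summands of the form $\frac{df_j}{f_j} \wedge \mathrm{Res}_{Y_j}(\beta_\Gamma)$ are killed, leaving the holomorphic part). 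So $\beta^+_\Gamma \in \Omega^n_{K\GL_\ell}$ globally.

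Second, I would identify the integral as a period of a relative motive. The chain $\tilde\Upsilon(\sigma_n)$ is the pullback to $K\GL_\ell$ of $\Upsilon(\sigma_n)$ under the blowups that build the compactification, so its boundary is the proper transform of $\Upsilon(\partial \sigma_n)$. By the construction of \cite{AluMa} recalled above, $\Upsilon(\partial \sigma_n) \subset \hat\Sigma_{\ell,g}$, and hence after pullback $\partial \tilde\Upsilon(\sigma_n) \subset \Sigma_{\ell,g}$. Since $\beta^+_\Gamma$ is holomorphic everywhere on $K\GL_\ell$, the integral $R(\Gamma) = \int_{\tilde\Upsilon(\sigma_n)} \beta^+_\Gamma$ converges and is, by the very definition of a period (pairing an algebraic de Rham class with a relative homology class), a period of the motive $\m(K\GL_\ell, \Sigma_{\ell,g})$.

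Third, I would invoke the mixed Tate results already established. By Proposition \ref{Chowprop}, $\m(K\GL_\ell)$ is a Tate motive, and by Lemma \ref{DivisorTate} the divisor $\Sigma_{\ell,g}$ is a mixed Tate configuration, so $\m(\Sigma_{\ell,g})$ together with its arbitrary unions and intersections of components are mixed Tate. The localization distinguished triangle in the Voevodsky category
\begin{equation*}
\m(\Sigma_{\ell,g}) \longrightarrow \m(K\GL_\ell) \longrightarrow \m(K\GL_\ell, \Sigma_{\ell,g}) \longrightarrow \m(\Sigma_{\ell,g})[1]
\end{equation*}
together with closure of the triangulated subcategory of mixed Tate motives under extensions then forces $\m(K\GL_\ell, \Sigma_{\ell,g})$ to be mixed Tate, so $R(\Gamma)$ is a period of a mixed Tate motive.

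The main obstacle I anticipate is not the motivic assembly in the third step, which is essentially formal once the inputs are in place, but rather rigorously checking step two: one must verify that the chain $\tilde\Upsilon(\sigma_n)$ is well-defined in $K\GL_\ell$ (which requires that the iterated blowup loci meet $\Upsilon(\sigma_n)$ transversally enough, or at least that the proper transform makes sense as an integration chain) and that its boundary really does sit inside the proper transform $\Sigma_{\ell,g}$ rather than spilling over into extra components of $\cZ_\ell$. If additional components of $\cZ_\ell$ intersect $\partial \tilde\Upsilon(\sigma_n)$, one would need to enlarge the divisor and re-verify the mixed Tate property, but since $\cZ_\ell$ itself is a mixed Tate configuration by Lemma \ref{MTconfigZ}, even in that case the argument goes through upon replacing $\Sigma_{\ell,g}$ by $\Sigma_{\ell,g} \cup \cZ_\ell$.
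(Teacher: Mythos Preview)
Your proposal is correct and follows essentially the same approach as the paper's proof: identify $\beta^+_\Gamma$ as a global holomorphic form on $K\GL_\ell$, recognize the integral as a period of $\m(K\GL_\ell,\Sigma_{\ell,g})$, and conclude via the distinguished triangle using Proposition~\ref{Chowprop} and Lemma~\ref{DivisorTate}. Your treatment is in fact more explicit than the paper's on two points --- you spell out why $(1-T)\beta_\Gamma$ is holomorphic and you flag the question of where $\partial\tilde\Upsilon(\sigma_n)$ actually lands --- whereas the paper simply asserts the form is defined on the compactification and invokes the triangle; your contingency plan involving $\Sigma_{\ell,g}\cup\cZ_\ell$ and Lemma~\ref{MTconfigZ} is a reasonable safeguard the paper does not mention.
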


\proof   In the globally defined case, 
this is an integral of an algebraic differential form defined
on the compactification $K\GL_\ell$, hence
a genuine period, in the sense of algebraic geometry, of $K\GL_\ell$.
By Proposition \ref{Chowprop}, we know that the Chow motive
$h(K\GL_\ell)$ is Tate. We also know from Lemma \ref{DivisorTate} 
that the motive $\m(\Sigma_{\ell,g})$ is mixed Tate. 
Under the embedding of pure motives into mixed motives we
obtain objects $\m(K\GL_\ell)$ and $\m(\Sigma_{\ell,g})$ in the derived
category of mixed Tate motives, 
$\cD\cM\cT\cM(\Q)$, that is, the smallest triangulated subcategory of the
Voevodsky triangulated category of mixed motives $\cD\cM(\Q)$ containing
all the Tate objects $\Q(n)$. It then follows that the relative
motive $\m(K\GL_\ell, \Sigma_{\ell,g})$ is also mixed Tate, as it sits in a distinguished
triangle in the Voevodsky triangulated category, where the other two terms are
mixed Tate. 
\endproof

\smallskip

\begin{rem}\label{heart}{\rm 
In the proof of Proposition \ref{MTMperiod} here above, we viewed the motive 
$\m(K\GL_\ell, \Sigma_{\ell,g})$ as an element in the derived category 
$\cD\cM\cT\cM(\Q)$ of mixed Tate motives.  All the varieties we are considering 
are defined over a number field, in fact over $\Q$. In the number field case, 
an abelian category of mixed Tate motives can be constructed as the heart 
of a $t$-structure in $\cD\cM\cT\cM(\Q)$: this is possible because the Beilinson-Soul\'e vanishing 
conjecture holds over a number field, see \cite{Levine}.  We denote by $\cM\cT\cM(\Q)$ 
this abelian category of mixed Tate motives. To obtain objects in $\cM\cT\cM(\Q)$ one
applies the cohomology functor with respect to the $t$-structure. For example, for a
projective space $\P^n$, one has the motive $\m(\P^n)=\Q(0) \oplus \Q(-1)[2]
\oplus \cdots \oplus \Q(-n)[2n]$ in $\cD\cM\cT\cM(\Q)$; its cohomology with
respect to the $t$-structure is $h_{2i}(\m(\P^n))=\Q(-i)$, which is an object in
$\cM\cT\cM(\Q)$, while the shifts are not. In the following, with an abuse
of notation, we will write the motive simply as $\m(K\GL_\ell, \Sigma_{\ell,g})$, and more
generally $\m(K\GL_\ell \smallsetminus A, B)$ for the cases considered in
Proposition~\ref{ramiprop}, although when we refer to the motive in $\cM\cT\cM(\Q)$
what we are really considering are the pieces of the cohomology with
respect to the $t$-structure, and in particular, for the conclusion about the period,
the piece that corresponds to the degree of the differential form.
}\end{rem}

\smallskip

\begin{prop}\label{MTMlocal}
Let $\beta_{\Gamma, D,\ell,p}$ be the form with logarithmic poles on the top
cell $X$ of the cellular decomposition of $K\GL_\ell$, as in Corollary \ref{intcellslog}.
If the motive $\m(\Sigma_{\ell,g}\cap X)$ is mixed Tate, then
the integral $R(\Gamma) = \int_{\tilde\Upsilon(\sigma_n)} \beta^+_{\Gamma, D,\ell,p}$ 
is a period of a mixed Tate motive.
\end{prop}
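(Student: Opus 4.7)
The plan is to mimic the argument of Proposition~\ref{MTMperiod} but work locally on the big cell $X$, rather than globally on $K\GL_\ell$, since by hypothesis the form with logarithmic poles $\beta_{\Gamma,D,\ell,p}$ is only defined on $X$ and not globally on the compactification. First, I would observe, using Corollary~\ref{intcellslog} and Remark~\ref{onecell}, that $\beta^+_{\Gamma,D,\ell,p}=(1-T)\beta_{\Gamma,D,\ell,p}$ is a holomorphic algebraic differential form on $X$ (defined over $\Q$, since the forms $\eta_\Gamma$ and $\beta_{\Gamma,D,\ell,p}$ can be chosen to be defined over $\Q$ when the external momenta are rational), and that
\[
R(\Gamma)=\int_{\tilde\Upsilon(\sigma_n)}\beta^+_{\Gamma,D,\ell,p}=\int_{\tilde\Upsilon(\sigma_n)\cap X}\beta^+_{\Gamma,D,\ell,p},
\]
with boundary of the chain of integration $\tilde\Upsilon(\sigma_n)\cap X$ (inside $X$) contained in $\Sigma_{\ell,g}\cap X$, since $\Sigma_{\ell,g}$ contains the boundary of $\tilde\Upsilon(\sigma_n)$.

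Second, I would identify this integral as a period, in the sense of algebraic geometry, of the relative motive $\m(X,\Sigma_{\ell,g}\cap X)$, where both $X$ and $\Sigma_{\ell,g}\cap X$ are varieties defined over $\Q$.

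Third, I would verify that this relative motive is mixed Tate. The big cell $X$ of the spherical variety $K\GL_\ell$ is an affine open subset given by an affine space (see \S 2.3 of \cite{BrionLuna} and Proposition~9.1 of \cite{MaTha}, which identify the big cell explicitly), hence $\m(X)$ is a Tate motive. By the hypothesis of the proposition, $\m(\Sigma_{\ell,g}\cap X)$ lies in the Voevodsky derived category of mixed Tate motives $\cD\cM\cT\cM(\Q)$. The distinguished triangle
\[
\m(\Sigma_{\ell,g}\cap X)\longrightarrow \m(X)\longrightarrow \m(X,\Sigma_{\ell,g}\cap X)\stackrel{+1}{\longrightarrow}
\]
in $\cD\cM(\Q)$ then shows that $\m(X,\Sigma_{\ell,g}\cap X)$ is itself mixed Tate, since the category $\cD\cM\cT\cM(\Q)$ is a triangulated subcategory. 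Passing to the cohomology with respect to the $t$-structure, as in Remark~\ref{heart}, one obtains an object in $\cM\cT\cM(\Q)$ of which $R(\Gamma)$ is a period.

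The main obstacle is ensuring that the local nature of the form $\beta_{\Gamma,D,\ell,p}$ (defined only on the big cell $X$ rather than on all of $K\GL_\ell$) does not create problems when interpreting the integral as a period: in particular the hypothesis $\m(\Sigma_{\ell,g}\cap X)\in\cD\cM\cT\cM(\Q)$ has to be assumed, as it does not follow automatically from Lemma~\ref{DivisorTate}, since restriction to $X$ of a mixed Tate variety need not preserve the mixed Tate property without further information on the geometry of the intersection. The rest of the argument is then formal, transposing the proof of Proposition~\ref{MTMperiod} from the global pair $(K\GL_\ell,\Sigma_{\ell,g})$ to the local pair $(X,\Sigma_{\ell,g}\cap X)$.
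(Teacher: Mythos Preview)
Your proposal is correct and follows essentially the same route as the paper's proof: identify $R(\Gamma)$ as a period of $\m(X,\Sigma_{\ell,g}\cap X)$, note that the big cell $X$ is an affine space so that $\m(X)=\bL^{\ell^2}$ is Tate, and then use the distinguished triangle together with the hypothesis on $\m(\Sigma_{\ell,g}\cap X)$ to conclude that the relative motive is mixed Tate. Your write-up is more detailed (spelling out the holomorphicity of $\beta^+$, the boundary containment, and the passage via Remark~\ref{heart}), but the underlying argument is identical.
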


\proof Using distinguished triangles in the Voevodsky category, we see that, 
if the motive $\m(\Sigma_{\ell, g}\cap X)$ is mixed Tate,
then the motive $\m(X, \Sigma_{\ell, g}\cap X)$ also is, since the big cell 
has $\m(X)=\bL^{\ell^2}$. The result then follows, since the integral is
by construction a period of the motive $\m(X, \Sigma_{\ell, g}\cap X)$.
\endproof

\smallskip

\begin{rem}\label{newdiv}{\rm
The central difficulty in the approach of \cite{AluMa}, which was to analyze the
nature of the motive of $\m(\Sigma_{\ell, g}\cap \cD_\ell)$,
is here replaced by the problem of identifying the nature of the motive $\m(\Sigma_{\ell, g}\cap X)$,
where $X$ is the big cell of $K\GL_\ell$. }
\end{rem}

It may seem at first that we have simply substituted the problem of understanding for which
range of $(\ell, g)$ the intersection of the divisor $\Sigma_{\ell, g}$ with $\GL_\ell$ remains 
mixed Tate, with the very similar problem of when the intersection of $\Sigma_{\ell, g}$ with
the big cell $X$ of $K\GL_\ell$ remains mixed Tate. However, this reformulation makes it
possible to use the explicit description of the cells $X(\lambda,x)$ of spherical varieties in 
terms of limits as in \eqref{Xlambdax}, to analyze this question. We do not discuss this
further in the present paper, and we simply state it as a question.

\begin{ques}\label{BigcellSigmaques}
Let $X$ be the big cell of the cellular decomposition \eqref{Xlambdax} of $K\GL_\ell$
and let $\Sigma_{\ell,g}$ be the divisor described above. For which pairs $(\ell,g)$ is the motive
$\m(\Sigma_{\ell,g}\cap X)$ mixed Tate? 
\end{ques}

\smallskip

One defines the category $\cM\cT\cM(\Z)$ of mixed Tate motives over $\Z$ as
mixed Tate motives in $\cM\cT\cM(\Q)$ that are unramified over $\Z$.
An object of $\cM\cT\cM(\Q)$ is unramified over $\Z$ if and only if, for any
prime $p$, there exists a prime $\ell\neq p$ such that the $\ell$-adic realization 
is unramified at $p$, see Proposition 1.8 of \cite{DG}. In the following statement 
our notation for the motives is meant as in Remark~\ref{heart} above. 

\smallskip

\begin{prop}\label{ramiprop}
The motives $\m(K\GL_\ell)$ are unramified over $\Z$. More generally,
if $A$ and $B$ are unions of two disjoint sets of components of the
boundary divisor $\cZ_\ell$ of the compactification $K\GL_\ell$, the
motives $\m(K\GL_\ell \smallsetminus A, B)$ are unramified over $\Z$.
\end{prop}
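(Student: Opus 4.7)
The plan is to reduce the claim to the smooth and proper base change theorem in \'etale cohomology, by exploiting the fact that $K\GL_\ell$ and all the strata of its boundary admit smooth proper models over ${\rm Spec}(\Z)$. By Corollary~4.2 of \cite{Kausz}, the Kausz compactification $K\GL_\ell$ is smooth and projective over ${\rm Spec}(\Z)$, and the boundary divisor $\cZ_\ell$ is a relative simple normal crossings divisor. As recalled in the proofs of Propositions~\ref{KauszTateGr}--\ref{Chowprop}, each component of $\cZ_\ell$ and each iterated intersection of components is a disjoint union of bundles over products of Grassmannians whose fibers are lower-dimensional Kausz compactifications $K\GL_i$ or Vainsencher compactifications $\overline{\PGL}_i$; by induction on $\ell$, all of these strata are smooth and projective over ${\rm Spec}(\Z)$.

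For the first assertion, fix a prime $p$ and choose any auxiliary prime $q\neq p$. Since $K\GL_\ell$ admits a smooth proper model over ${\rm Spec}(\Z)$, smooth and proper base change implies that the $q$-adic \'etale cohomology $H^\ast((K\GL_\ell)_{\bar\Q}, \Q_q)$ is unramified at $p$ as a representation of ${\rm Gal}(\bar\Q/\Q)$. The $q$-adic realizations of the cohomological pieces (in the sense of Remark~\ref{heart}) of the Tate motive $\m(K\GL_\ell)$ coincide with these \'etale cohomology groups, so by the criterion of Proposition~1.8 of \cite{DG} the motive $\m(K\GL_\ell)$ lies in $\cM\cT\cM(\Z)$.

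For the relative case, the motive $\m(K\GL_\ell\smallsetminus A, B)$ fits into a distinguished triangle $\m(B) \to \m(K\GL_\ell\smallsetminus A) \to \m(K\GL_\ell\smallsetminus A, B) \to \m(B)[1]$ in the Voevodsky triangulated category. The motive $\m(K\GL_\ell\smallsetminus A)$ is built from $\m(K\GL_\ell)$ and from the motives of the components of $A$ and their intersections, via the Gysin (or motivic Mayer--Vietoris) triangles associated to the stratification of the normal crossings divisor $A\subset\cZ_\ell$; analogously, $\m(B)$ is built via inclusion--exclusion from the motives of the components of $B$ and their intersections. Since all the schemes appearing are smooth and projective over ${\rm Spec}(\Z)$ by the inductive bundle description, each of their $q$-adic realizations is unramified at $p$. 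Unramification is closed under extensions of Galois representations, so the $q$-adic realization of $\m(K\GL_\ell\smallsetminus A, B)$ is unramified at $p$, and the claim follows again from Proposition~1.8 of \cite{DG}.

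The central technical point, which I expect to be the main obstacle, is to verify that Kausz's construction, the identification of $\cZ_\ell$ as a relative simple normal crossings divisor, and the bundle descriptions of its strata all descend to ${\rm Spec}(\Z)$, rather than being available only over $\Q$; granting this integral structure (which is built into \cite{Kausz}), smooth and proper base change applies simultaneously to every stratum and propagates through the distinguished triangles above.
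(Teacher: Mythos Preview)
Your proposal is correct and follows essentially the same approach as the paper: both arguments reduce to smooth proper base change applied to $K\GL_\ell$ and the strata of $\cZ_\ell$ (using Corollary~4.2 of \cite{Kausz} for the integral smoothness), then propagate unramification through the pieces of a decomposition of $\m(K\GL_\ell\smallsetminus A, B)$, and conclude via Proposition~1.8 of \cite{DG}. The only cosmetic difference is packaging: the paper invokes the Goncharov--Manin framework (the ``reduction mod $p$ does not alter the combinatorics'' criterion of \cite{GonMan}, \cite{Gon2}) and the simplicial resolution $\cS_\bullet(\cX\smallsetminus A, B)$ of Proposition~3.10 of \cite{Gon2}, whereas you phrase the same decomposition directly in terms of Gysin/Mayer--Vietoris distinguished triangles and closure of unramified representations under extensions; these are equivalent formulations of the same argument.
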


\proof This question can be approached in a way analogous to our
previous discussion of the Chow motive, namely using the description
of $K\GL_\ell$ as an iterated blowup and the properties of the divisor
of the compactification. 
The argument is similar to the one used in Theorem~4.1 and Proposition~4.3 of 
\cite{GonMan} to prove the analogous statement for the moduli spaces 
$\overline{\cM}_{0,n}$ of rational curved with marked points. 
There, it is shown that $\overline{\cM}_{0,n}$ is unramified over $\Z$ 
by showing that the combinatorics of the  normal crossings divisor of the 
compactification is not altered by reductions mod $p$, see Definition~4.2 
of \cite{GonMan}. For a pair $(\cX,\cZ)$, with $\cZ\subset \cX$ a normal crossing divisor, 
the condition that the reduction mod $p$ does not
alter the combinatorics means that $\cX$ and all the strata of
$\cZ$ are smooth over $\Z_p$ and the reduction mod $p$ gives 
a bijection from the strata of $\cZ$ to those of the special fiber (see Definition 4.2
of \cite{GonMan} and Definition 3.9 of \cite{Gon2}). 
In our case we have $\cX=K\GL_\ell$, with $\cZ=\cZ_\ell$ the 
boundary divisor of the Kausz compactification. As we showed in \S\S \ref{virtKsec},
\ref{numKsec}, and \ref{chowKsec}, the motive $\m(K\GL_\ell)$ is a Tate motive. More
generally, the motives $\m(K\GL_\ell \smallsetminus A, B)$ are mixed Tate over $\Q$: 
this can be seen as in Proposition 3.6 of \cite{Gon2}, using Lemma~\ref{MTconfigZ},
which shows that $\cZ_\ell$ is both a normal
crossings divisor and a mixed Tate configuration in the sense of Definition \ref{MTconfigDef}.
This implies, by the argument of Proposition 3.6 of
\cite{Gon2}, that all the motives $\m(K\GL_\ell \smallsetminus A, B)$ are mixed Tate over $\Q$.
To check the condition that the reduction map preserves the combinatorics of $(K\GL_\ell ,\cZ_\ell)$,
first note that both $K\GL_\ell$ and the strata of the normal crossing divisor $\cZ_\ell$ are 
smooth over $\Z$, by Corollary 4.2 \cite{Kausz}. Moreover, the 
description of the Kausz compactification and of the strata of its boundary 
divisor given in Theorems 9.1 and 9.3 of \cite{Kausz} also holds 
over fields of characteristic $p$ and is compatible with reduction, so that
the set of strata is matched under the reduction map. 
The argument of Proposition~4.3 of \cite{GonMan} showing that the $\ell$-adic 
realization is then unramified, for all $\ell$ with $\ell\neq p$, is based on the 
argument of Proposition 3.10 of \cite{Gon2}. Following this reasoning, the cohomologies 
$H^*(\cX \smallsetminus A, B)$ can be computed using a simplicial resolution 
$\cS_\bullet(\cX \smallsetminus A, B)$, whose simplexes correspond to unions of
intersections of components of the divisor. The argument of Proposition 3.10 of \cite{Gon2} then
shows that the reduction map applied to the simplicial schemes $\cS_\bullet(\cX \smallsetminus A, B)$   
induces an isomorphism in \'etale cohomology, $H^*_{et}(\bar\cX\smallsetminus \bar A, \bar B, \Q_\ell)
\simeq H^*_{et}(\bar\cX^0\smallsetminus \bar A^0, \bar B^0, \Q_\ell)$, where 
$\bar \cX=\cX \otimes_{\Z_p}\Q_p$ and $\cX^0$ is the special fiber of the reduction. This
shows that the \'etale realization is unramified for $\ell\neq p$. By Proposition 1.8 of \cite{DG}
this means that the motives $\m(K\GL_\ell \smallsetminus A, B)$ are mixed Tate over $\Z$.
\endproof

\begin{rem} {\rm
Given that the unramified condition holds, one can conclude from Brown's theorem \cite{Brown}
and the previous Proposition~\ref{MTMperiod} (and Proposition~\ref{MTMlocal}, when 
$\m(\Sigma_{\ell, g}\cap X)$ is mixed Tate) that the integral
\eqref{UrenintDet} is a $\Q[\frac{1}{2\pi i}]$-linear combination
of multiple zeta values.}
\end{rem}

\smallskip
\subsection{Comparison with Feynman integrals}\label{FeynmanSec}

The result obtained in this way clearly differs from the usual
computation of Feynman integrals, where the 
methods used are based on regularization and pole subtraction
of the integral (dimensional regularization, cutoff, zeta regularization, etc.)
There are several reasons behind this difference, which we now discuss briefly.

\smallskip

In the usual physical renormalization non-mixed-Tate periods
are known to occur, \cite{BrDo}, \cite{BrSch}. In the setting we discussed
here, the only possible source of non-mixed-Tate cases is the motive of the 
intersection $\Sigma_{\ell,g} \cap X$, where $X$ is the big cell of the Kausz
compactification $K\GL_\ell$. In particular, this locus is the
same for all graphs with fixed loop number $\ell$ and fixed genus $g$. 
However, in the usual physical renormalization, not all graphs with the 
same $\ell$ and $g$ have periods of the same nature, as one can see
from the examples analyzed in \cite{Dor}, \cite{Schn}.

\smallskip

There is loss of information in mapping the computation of the
Feynman integral from the complement of the graph
hypersurface (as in \cite{BEK}, \cite{BrDo}, \cite{BrSch})
to the complement of the determinant hypersurface (as in \cite{AluMa}),
when the combinatorial conditions on the graph recalled in \S \ref{DetFeynSec}
are not satisfied. Explicit examples of graphs that violate those
conditions are given in \S 3 of \cite{AluMa}. In such cases the map \eqref{Upsilonmap} 
need not be an embedding, hence part of the information contained
in the Feynman integral calculation \eqref{paramInt} will be lost
in passing to \eqref{UintDet}.

\smallskip

However, this type of loss of information does not affect some of
the cases where non-mixed Tate motives are known to appear in
the momentum space Feynman amplitude.

\begin{ex}\label{Dorynex}{\rm
Let $\Gamma$ be the graph with $14$ edges that gives a counterexample
to the Kontsevich polynomial countability conjecture, in Section 1 of \cite{Dor}.
The map $\Upsilon: \A^n \to \A^{\ell^2}$ of \eqref{Upsilonmap} 
has $n=\# E(\Gamma)=14$ and $\ell=b_1(\Gamma)=7$. Let $\Upsilon_i$
denote the composition of the map $\Upsilon$ with the projection onto the $i$-th row
of the matrix $M_\Gamma$ of \eqref{MGamma}. 
In order to check if the embedding condition for $\Upsilon$ is satisfied, we know from 
Lemma 3.1 of \cite{AluMa} that it suffices to check that $\Upsilon_i$ is injective
for $i$ ranging over a set of loops such that every edge of $\Gamma$ is part of a loop
in that set. This can then be checked by computer verification for the matrix 
$M_\Gamma$ of this particular graph. }
\end{ex}

The example above is a log divergent graph in dimension four. It is known
to give a non-mixed Tate contribution with the usual method of computation
of the Feynman integral, \cite{Dor}, \cite{BrDo}. 
The same verification method we used for this case can be applied to the other
currently known explicit counterexamples in \cite{Dor}, \cite{BrDo}, 
\cite{Schn}, \cite{BrSch}.

\smallskip

Even for integrals (including the example above) where the map \eqref{Upsilonmap}
is an embedding, the regularization
and renormalization procedure described here, using the Kausz
compactification and subtraction of residues for forms with
logarithmic poles, is not equivalent to the usual
renormalization procedures of the regularized integrals. For instance,
our regularized form (hence our regularized integral) can be trivial
in cases where the usual regularization and renormalization
would give a non-trivial result. This may occur if the form $\beta$ with
logarithmic poles happens to have a nontrivial residue, but a trivial 
holomorphic part $\beta^+$.

\smallskip

In such cases, part of the information loss coming from pole subtraction on
the differential form is compensated by keeping track of the
residues. However, in our setting these also deliver
only mixed Tate periods, so that even when this information
is included, one still loses the richer structure of the periods
arising from other methods of regularization and renormalization,
adopted in the physics literature.

\bigskip

\subsection*{Acknowledgment} The authors are very grateful to
the anonymous referee for many very detailed and helpful comments
and suggestions that greatly improved the paper.  
The first author was partially supported by
NSF grants DMS-1007207, DMS-1201512, and PHY-1205440. 

\bigskip


\begin{thebibliography}{99}

\bibitem{Ale} A.G.~Aleksandrov, {\em Multidimensional residue theory and
the logarithmic de Rham complex}, J. of Singularities, Vol. 5 (2012) 1--18.

\bibitem{Ale2} A.G.~Aleksandrov,  {\em Nonisolated hypersurface singularities}, in ``Theory of Singularities and its Applications"
(V.I. Arnold, Ed.) Advances in Soviet Mathematics, Vol. 1, American Mathematical Society, 1990, pp.~211--246.

\bibitem{Alu} P.~Aluffi, {\em Grothendieck classes and Chern classes of
hyperplane arrangements}, Int. Math. Res. Not. IMRN (2013) no. 8, 1873?1900.

\bibitem{AluMa} P.~Aluffi, M.~Marcolli, {\em Parametric Feynman integrals
and determinant hypersurfaces}, Adv. Theor. Math. Phys. 14 (2010), no. 3, 911--963.

\bibitem{Andre} Y.~Andr\'e, {\em Une introduction aux motifs}, Soci\'et\'e Math\'ematique
de France, 2004.

\bibitem{Andre2} Y.~Andr\'e, {\em Motifs de dimension finie}, S\'eminaire Bourbaki, 2003-04, Exp. N. 929 viii,
Ast\'erisque N.299 (2005)  115--145. 

\bibitem{Atkin} F.V.~Atkinson, {\em Some aspects of Baxter's functional equation}, 
J. Math. Anal. Appl. 7 (1963) 1--30.

\bibitem{Barlet} D.~Barlet, {\em Le faisceau $\omega^\bullet_X$
sur un espace analytique $X$ de dimension pure}, in ``Fonctions de plusieurs 
variables complexes, III (S\'em. Fran\c{c}ois Norguet, 1975--1977)", pp. 187--204, 
Lecture Notes in Math.  Vol.670, Springer,1978.

\bibitem{BGSV} A.A.~Beilinson, A.B.~Goncharov, V.V.~Schechtman, A.N.~Varchenko, 
{\em Aomoto dilogarithms,
mixed Hodge structures and motivic cohomology of pairs of triangles on the plane}, in 
``The Grothendieck Festschrift", Vol. I, pp.135--172, Progr. Math., Vol.86, Birkh\"auser, 1990.

\bibitem{BeBro} P.~Belkale, P.~Brosnan, {\em Matroids, motives, and a conjecture of
Kontsevich}, Duke Math. J. 116 (2003) N.~1, 147--188.

\bibitem{BiaBir} A.~Bialynicki-Birula, {\em Some theorems on actions of algebraic groups}, 
Annals of Math. 98 (1973) 480--497.

\bibitem{BEK} S.~Bloch, H.~Esnault, D.~Kreimer, {\em On motives associated to graph
polynomials}, Comm. Math. Phys. 267 (2006), no. 1, 181--225.

\bibitem{BlochKreimer} S.~Bloch, D.~Kreimer, {\em Mixed Hodge structures and 
renormalization in physics}, Commun. Number Theory Phys. Vol.2 (2008) N.4, 637--718.

\bibitem{Brion} M.~Brion, {\em Spherical varieties}, Proceedings of the International 
Congress of Mathematicians, Vol. 1, 2 (Z\"urich, 1994), pp.753--760, Birkh\"auser, 
1995.

\bibitem{BrionLuna} M.~Brion, D.~Luna, {\em Sur la structure locale des vari\'et\'es 
sph\'eriques}, Bull. Soc. Math. France 115 (1987) 211--226. 

\bibitem{BrKr} D.~Broadhurst, D.~Kreimer, {\em Association of multiple zeta values
with positive knots via Feynman diagrams up to 9 loops}, Phys. Lett. B 393 (1997) N.3--4,
403--412.

\bibitem{Brown} F.~Brown, {\em Mixed Tate motives over $\Z$},  Ann. of Math. (2)
175 (2012), no. 2, 949--976.

\bibitem{BrDo} F.~Brown, D.~Doryn, {\em Framings for graph hypersurfaces},
arXiv:1301.3056.

\bibitem{BrSch} F.~Brown, O.~Schnetz, {\em A K3 in $\phi^4$},
Duke Math. J. 161 (2012), no. 10, 1817--1862.

\bibitem{CJNMM} F.J.~Castro-Jim\'enez, L.~Narv\'aez-Macarro, D.~Mond,
{\em Cohomology of the complement of a free divisor}, Trans. Amer. Math. Soc., Vol.348
(1996) N.8, 3037--3049.

\bibitem{CeMa} O.~Ceyhan, M.~Marcolli, {\em Algebraic renormalization
and Feynman integrals in configuration spaces}, Adv. Theor. Math. Phys. 18 (2014) 
N.2, 469--511.

\bibitem{CoKr} A.~Connes, D.~Kreimer, {\em Renormalization in quantum field
theory and the Riemann-Hilbert problem. I. The Hopf
algebra structure of graphs and the main theorem}, Comm. Math. Phys. 210 (2000),
no. 1, 249--273.

\bibitem{CoMa} A.~Connes, M.~Marcolli, {\em Noncommutative Geometry,
quantum fields and motives}, Colloquium Publications, Vol.55, American
Mathematical Society, 2008.

\bibitem{DeCoPro} C.~De Concini, C.~Procesi, {\em Complete symmetric varieties}, in
``Invariant theory" (Montecatini, 1982), pp. 1--44, Lecture Notes in Math. 996, Springer, 1983.

\bibitem{Del} P.~Deligne, {\em Th\'eorie de Hodge. II}, Inst. Hautes \'Etudes Sci.
Publ. Math. No. 40 (1971), 5--57.

\bibitem{Del2} P.~Deligne, {\em Equations diff\'erentielles \`a points singuliers r\'eguliers}, 
Lecture Notes in Mathematics, Vol. 163. Springer-Verlag, 1970.

\bibitem{DeDi} P.~Deligne, A.~Dimca,
{\em Filtrations de Hodge et par l'ordre du p\^{o}le pour les hypersurfaces singuli\`eres},
Ann. Sci. \'Ecole Norm. Sup. (4) 23 (1990), no. 4, 645--656.

\bibitem{DG}  P.~Deligne, A.B.~Goncharov, {\em Groupes fondamentaux motiviques de Tate mixte},
Ann. Scient. \'Ec. Norm. Sup., Vol.38 (2005) 1--56.

\bibitem{Dor} D.~Doryn, {\em On one example and one counterexample in
counting rational points on graph hypersurfaces}, Lett. Math. Phys. 97 (2011),
no. 3, 303--315.

\bibitem{Dup} C.~Dupont, {\em Periods of hyperplane arrangements and motivic coproduct}, 
Ph.D. Thesis, Universit\'e Pierre et Marie Curie - Paris VI, 2014. %\newline
{\tt https://tel.archives-ouvertes.fr/tel-01083524/document}

\bibitem{EFGK} K.~Ebrahimi-Fard, L.~Guo, D.~Kreimer, {\em Integrable renormalization. II.
The general case}, Ann. Henri Poincar\'e 6 (2005), no. 2, 369--395.

\bibitem{EFGK1} K.~Ebrahimi-Fard, L.~Guo, D.~Kreimer, {\em
Spitzer's Identity and the Algebraic Birkhoff Decomposition in pQFT}, J. Phys. A: Math. Gen., 37, (2004) 11037-11052.

\bibitem{Gon} A.B.~Goncharov, {\em Multiple polylogarithms and mixed Tate motives},
arXiv:math/0103059.

\bibitem{Gon2} A.B.~Goncharov, {\em Periods and mixed Tate motives}, arXiv:math/0202154v2. 

\bibitem{GonMan} A.B.~Goncharov, Yu.I.~Manin, {\em Multiple $\zeta$-motives 
and moduli spaces $\overline{\cM}_{0,n}$}, Compositio Math., Vol.140 (2004) 1--14.

\bibitem{GriSch} P.~Griffiths, W.~Schmid, {\em Recent developments in Hodge theory: a discussion of techniques and results}, in ``Discrete subgroups of Lie groups and applicatons to moduli" 
(Internat. Colloq., Bombay, 1973), pp. 31--127. Oxford Univ. Press, Bombay, 1975.

\bibitem{Groth} A.~Grothendieck, {\em On the de Rham cohomology of algebraic varieties}, 
Pub. Math. I.H.E.S. 29 (1966) 95--105.

\bibitem{Guo} L.~Guo, {\em An Introduction to Rota-Baxter Algebra}, International Press
and High Education Press, 2012.

\bibitem{Guo1} L.~Guo, {\em Algebraic Birkhoff decomposition and its applications}. In: {\em Automorphic Forms and the Langlands Program}. International Press, 2008, pp. 283-323.

\bibitem{Habi} S.~Habibi, {\em The motive of a bundle}, PhD Thesis, Universit\`a degli Studi di
Milano, 2012.

\bibitem{HabRad} S.~Habibi,  M.E.A.~Rad, {\em On the motive of $G$-bundles}, arXiv:1112.4110v3.

\bibitem{HoMo} M.P.~Holland, D.~Mond {\em Logarithmic differential forms
and the cohomology of the complement of a divisor}, Math. Scand. 83 (1998),
No.2, 235--254.

\bibitem{Huru} M.~Huruguen, {\em Compactification d'espaces homog\`enes sph\'eriques sur un
corps quelconque}, PhD Thesis, Universit\'e de Grenoble, 2011. %\newline
{\tt https://tel.archives-ouvertes.fr/tel-00716402/document}

\bibitem{Jan} U.~Jannsen, {\em Motives, numerical equivalence, and semi-simplicity}, 
Invent. Math. 107 (1992) no. 3, 447--452. 

\bibitem{Ka} B.~Kahn, {\em Motivic Cohomology of smooth geometrically cellular varieties},
in ``Algebraic K-theory, Seattle, 1997", pp. 149--174,
Proceedings of Symposia in Pure Mathematics,  vol. 67, American Mathematical Society, 1999.

\bibitem{Karp} N.~Karpenko, {\em Cohomology of relative cellular spaces and isotropic flag varieties},
St. Petersburg Math. J., Vol.12 (2001) N. 1, 1--50.

\bibitem{Kausz} I.~Kausz, {\em A modular compactification of the general linear group},
Documenta Math. 5 (2000) 553--594.

\bibitem{KisLeh} M.~Kisin, G.~Lehrer, {\em Eigenvalues of Frobenius and Hodge
Numbers}, Pure and Applied Math. Quarterly, Vol.2 (2006) N.2, 199--219.

\bibitem{Ko} B.~K\"ock,
{\em Chow motif and higher Chow theory of $G/P$},
Manuscripta Math. 70 (1991), no. 4, 363--372.

\bibitem{Levine} M.~Levine, {\em Tate motives and the vanishing conjectures for algebraic $K$-theory}, 
in ``Algebraic $K$-theory and algebraic topology" (Lake Louise, AB, 1991), 167--188, 
NATO Adv. Sci. Inst. Ser. C Math. Phys. Sci., 407, Kluwer Acad. Publ., 1993.

\bibitem{Man} Yu.I.~Manin, {\em Correspondences, motifs and monoidal transformations},
Mat. Sb. (N.S.) 77 (119) 1968 475--507.

\bibitem{Mar} M.~Marcolli, {\em Feynman motives}, World Scientific, 2010.

\bibitem{MaTha} J.~Martens, M.~Thaddeus, {\em Compactifications of reductive
groups as moduli stacks of bundles}, Compositio Math. 152 (2016) 62--98.

\bibitem{MoTho} B.~Mohar, C.~Thomassen, {\em Graphs on surfaces}, Johns Hopkins 
University Press, 2001.

\bibitem{Mo} D.~Mond, {\em Notes on logarithmic vector fields, logarithmic
differential forms and free divisors}, preprint, 2012.

\bibitem{PoGu} C.~Pompeyo-Guti\'errez, {\em Chow motive of a locally trivial fibration},
arXiv:1205.4287.

\bibitem{Rammer} J.~Rammer, {\em Quantum field theory of non-equilibrium states}, 
Cambridge University Press, 2007.

\bibitem{Sa} K.~Saito, {\em Theory of logarithmic differential forms and
logarithmic vector fields},  J. Fac. Sci. Univ. Tokyo Sect. IA Math. 27 (1980), no. 2, 265--291.

\bibitem{Schn} O.~Schnetz, {\em Quantum field theory over $\F_q$},
Electron. J. Comb.18 (2011) N.1, P102.

\bibitem{Vain}  I.~Vainsencher, {\em Complete collineations and blowing up
determinantal ideals}, Math. Ann. 267 (1984), no. 3, 417--432.

\bibitem{Voev} V.~Voevodsky, {\em Triangulated categories of motives over a field}, in
``Cycles, transfer and motivic homology theories", Annals of Mathematical Studies, 
Vol. 143, Princeton Univ. Press, 2000,
pp. 188--238.

\bibitem{Voisin} C.~Voisin, {\em Hodge theory and complex algebraic geometry. I}, 
Cambridge Studies in Advanced Mathematics, Vol.76, Cambridge University Press, 2007.

\end{thebibliography}
\end{document}